\documentclass[11pt]{article}
\usepackage[a4paper, top=3cm, bottom=3cm, left=3cm, right=3cm]
{geometry}

\usepackage{array,xspace,multirow,hhline,tikz,colortbl,tabularx,booktabs,fixltx2e,amsmath,amsfonts,amsthm}
\usepackage{enumitem}
\usepackage{threeparttable} 
\usepackage{tablefootnote}  

\usepackage{appendix}
\usepackage{algorithmic}
\usepackage{algorithm}
\usepackage{color}

\usepackage{eqparbox}

\usepackage{tikz}
\usetikzlibrary{arrows, patterns}
\usetikzlibrary{arrows, patterns, positioning, decorations, decorations.markings}
\usepackage{pgfplots}

\usepackage[]{hyperref}
\usepackage{xcolor}

\newcommand{\E}[2]{E_{#1,#2}}
\newcommand{\ax}[1]{v_{#1}(A_{#1})}
\newcommand{\ai}[2]{v_{#1}(A_{#1}^{#2})}
\newcommand{\ao}[2]{v_{#1}(A_{#2}^{#1})}
\newcommand{\dif}[2]{\ao{#1}{#2}-\ai{#1}{#2}}
\newcommand{\ndif}[2]{\ai{#1}{#2}-\ao{#1}{#2}}
\newcommand{\Ch}{Ch}
\newcommand{\RR}{\mathsf{RoundRobin}}
\newcommand{\EC}{\mathsf{EnvyCycle}}
\newcommand{\MAXU}{\mathsf{MaxUtility}}
\newcommand{\SONE}{\mathsf{Sub1}}
\newcommand{\STWO}{\mathsf{Sub2}}

\newtheorem{theorem}{Theorem}

\newtheorem{observation}[theorem]{Observation}
\newtheorem{example}[theorem]{Example}

\newtheorem{lemma}[theorem]{Lemma}

\newtheorem{proposition}[theorem]{Proposition}


\newcommand{\cP}{\mathcal{P}}

\begin{document}

\title{On the Subsidy of Envy-Free Orientations in Graphs\footnote{The author names are ordered alphabetically}}
\author{Bo Li\thanks{Department of Computing, The Hong Kong Polytechnic University, China. comp-bo.li@polyu.edu.hk} \and
Ankang Sun\thanks{Department of Computing, The Hong Kong Polytechnic University, China. ankang.sun@polyu.edu.hk} \and  Mashbat Suzuki\thanks{School of Computer Science and Engineering, UNSW, Sydney. mashbat.suzuki@unsw.edu.au} \and
Shiji Xing\thanks{Department of Computing, The Hong Kong Polytechnic University, China. shi-ji.xing@connect.polyu.hk}}

\date{}

\maketitle

\sloppy

\begin{abstract}
    We study a fair division problem in (multi)graphs where $n$ agents (vertices) are pairwise connected by items (edges), and each agent is only interested in its incident items.  
    We consider how to allocate items to incident agents in an envy-free manner, i.e., envy-free orientations, while minimizing the overall payment, i.e., subsidy. 
    We first prove that computing an envy-free orientation with the minimum subsidy is NP-hard, even when the graph is simple and the agents have bi-valued additive valuations. 
    We then bound the worst-case subsidy. 
    We prove that for any multigraph (i.e., allowing parallel edges) and monotone valuations where the marginal value of each good is at most \$1 for each agent, \$1 each (a total subsidy of $n-1$, where $n$ is the number of agents) is sufficient.
    This is one of the few cases where linear subsidy $\Theta(n)$ is known to be necessary and sufficient to guarantee envy-freeness when agents have monotone valuations.
    When the valuations are additive (while the graph may contain parallel edges) and when the graph is simple (while the valuations may be monotone), we improve the bound to $n/2$ and $n-2$, respectively. Moreover, these two bounds are tight.
\end{abstract}

\section{Introduction}
We study envy-free allocations of indivisible goods, called items in this paper.
An allocation is envy-free (EF) if no agent strictly prefers another agent's bundle \cite{foley1967resource}. 
Unlike divisible items where EF allocations always exist \cite{brams1995envy}, an EF allocation is not guaranteed to exist when items cannot be divided.
Considering a simple example of allocating one item between two agents, both valuing the item at \$1, the agent who gets the item would be envied by the other agent.
To maintain fairness between the two agents, compensation of \$1 can be provided to the agent who receives nothing.
More generally, a compelling and pragmatic research problem is to determine the minimal compensation required to ensure EF among agents.

This problem has been widely studied in the literature within the paradigm of fair division with the minimum subsidy. Halpern and Shah \cite{DBLP:conf/sagt/HalpernS19} first proved that when agents have additive valuations and the marginal value of each good is at most 1, a total subsidy of $m(n-1)$ is sufficient to find an allocation so that no agent envies another agent. 
This result was improved to $n-1$ by Brustle et al. \cite{BDNSV20} and this bound is tight as there are instances which require at least $n-1$ subsidy in order to achieve EF.
Brustle et al. \cite{BDNSV20} also proved that for general monotone valuations an EF allocation always exists with a subsidy of $2(n-1)^2$, which was improved to $n(n-1)/2$ in \cite{DBLP:conf/aaai/KawaseMSTY24}.
It is worth noting that the tight bound for monotone valuations is still unknown \cite{DBLP:journals/jair/LiuLSW24}, where the lower bound is $n-1$.
One exception is \cite{DBLP:conf/ijcai/BarmanKNS22}, in which Barman et al. proved that a subsidy of $\Theta(n)$ is necessary and sufficient for monotone valuations with binary marginal values.

We follow the above line of research and restrict our attention to envy-free orientations in graphs -- a model that has attracted significant attention since \cite{DBLP:conf/sigecom/0001FKS23}.
Here, agents are the vertices in a graph $G=(V,E)$ and goods are the edges so that each agent is only interested in the goods incident to her, which are called {\em relevant} items of the agent.
An allocation is said to be an \emph{orientation} if every item is allocated to one of the two incident agents.
Christodoulou et al. \cite{DBLP:conf/sigecom/0001FKS23} introduced this model and proved that deciding the existence of an envy-free up to any item (EFX) orientation is NP-hard. 
An allocation is EFX if no agent envies another agent's bundle after (hypothetically) removing any item in it.
Zeng and Mehta \cite{zeng2024structureefxorientationsgraphs} characterized the structure of the graphs so that it is guaranteed that an EFX orientation exists. 
Deligkas et al. \cite{deligkas2024ef1efxorientations} strengthened the result of \cite{DBLP:conf/sigecom/0001FKS23} by proving that even with 8 agents, the problem remains NP-hard when the graph has parallel edges.
However, how to use subsidy to compensate agents so that the agents are EF has not been studied before our work.

Therefore, in this work, we address the problem of determining the amount of subsidy required to ensure envy-freeness among agents in the graph orientation problem.  
Our results span from multigraphs to simple graphs and from monotone valuations to additive valuations. 
For each setting, we consider the computation problem of finding an EF orientation with the minimum subsidy and the problem of characterizing the corresponding worst-case bound.
Our results also partially answer the open problem left in \cite{BDNSV20,DBLP:conf/ijcai/BarmanKNS22,DBLP:conf/aaai/KawaseMSTY24}, where for monotone valuations, the upper- and lower-bounds of subsidy are $O(n^2)$ and $\Omega(n)$ with the tight bound unknown \cite{DBLP:journals/jair/LiuLSW24}. 
For (multi)graph orientation problem with monotone valuations, we prove that the tight bound of subsidy is $\Theta(n)$.

\subsection{Main results}
We first consider the computation problem of computing an orientation that achieves EF among agents using the minimum subsidy. 
We then investigate the worst-case bound of subsidies that can ensure EF in various settings.
Regarding the computation problem, we prove that it is NP-hard even for bi-valued additive valuations (see Theorem \ref{thm:complexity-2value}), and is tractable when the valuations are binary additive (see Theorem \ref{thm:binary-additive}). 


\begin{table*}[ht!]
    \centering
    \begin{tabular}{|c|c|c|}
        \hline
        Setting & Upper Bound & Lower Bound  \\
        \hline
        Multi \& Monotone & $n-1$ (Theorem \ref{thm:multi:monotone:alg:n-1}) & $n-2$ (Example \ref{expl:monotone:n-2}) \\
        Multi \& Additive & $n/2$ (Theorem \ref{thm:multi:additive:alg:n/2}) & $n/2$ (Example \ref{example:additive:lower:n/2})  \\
        Simple \& Monotone & $n-2$ (Theorem \ref{thm:simple:monotone:alg:n-2}) & $n-2$ (Example \ref{expl:monotone:n-2})  \\
        \hline
    \end{tabular}
    \caption{Main Results on Bounding Subsidies}
    \label{tab:results}
\end{table*}

We then consider the problem of how much subsidy is sufficient to guarantee envy-freeness. The results are summarized in Table \ref{tab:results}. We begin with the most general case where agents have monotone valuation functions and the graph can contain parallel edges (i.e., multigraph). 
We prove that a subsidy of $n-1$ is sufficient to guarantee EF for all instances.
For the lower bound, there exists a simple graph instance where every EF orientation requires a total subsidy of at least $n-2$. 
Thus our result is tight up to an additive constant 1, leaving the exact bound as an open problem.
It is worth mentioning that this is one of the few cases where linear subsidy $\Theta(n)$ is necessary and sufficient to guarantee EF when agents have monotone valuations \cite{BDNSV20,DBLP:conf/ijcai/BarmanKNS22,DBLP:conf/aaai/KawaseMSTY24}.


For two special cases, we prove that the results can be improved.

The first improvement, which is the mostly technically involved result in our paper, is for additive valuations, while the graph can still contain parallel edges. 
Interestingly, for this case, we prove that a subsidy of $n/2$ is necessary and sufficient to ensure an EF orientation.
In comparison, without orientation restrictions, the bound of subsidy is $n-1$. 
To prove this result, we introduce {\em reserve graphs}, where each agent claims exactly one edge in which she has the highest value (called {\em reserve edge} of the agent). 
It is a simple case if no two reserve edges share the same neighborhoods, and we prove that no subsidy is required.
When {both} two agents claim reserve edges from items between them, we prove that a collective subsidy of no more than 1 is sufficient to ensure EF for all agents that these two competing agents can reach in the reserve graph.

The second improvement is for simple graphs, while the valuations are still monotone. 
Since the lower bound instance for the general case considers simple graphs, a subsidy of $n-2$ is necessary in the worst case.
We show that $n-2$ is also sufficient to ensure EF orientations for all simple graphs.
To prove this result, we show that when $n\ge 3$, there is always an envy-freeable allocation such that there are two agents who do not envy other agents before allocating any subsidy.

\subsection{Related works}
Since EF allocations may not exist, the majority of the literature is focused on its relaxations when subsidy is not allowed. Two widely adopted relaxations are envy-free up to one item (EF1) \cite{budish2011combinatorial} and envy-free up to any item (EFX) \cite{DBLP:journals/teco/CaragiannisKMPS19}, which requires no agent strictly prefers another agent's bundle excluding one or any item.
EF1 allocations always exist \cite{DBLP:conf/sigecom/LiptonMMS04} but the existence of EFX allocations is still unknown except for several special cases, e.g., \cite{DBLP:journals/siamdm/PlautR20,DBLP:journals/jacm/ChaudhuryGM24}. 
We refer to the recent surveys \cite{DBLP:journals/ai/AmanatidisABFLMVW23,DBLP:journals/jair/LiuLSW24} for a comprehensive overview of fair allocation.
Below we restrict our focus on the graph orientation and fair division with subsidy.

\paragraph{Graph Orientation}
Christodoulou et al. \cite{DBLP:conf/sigecom/0001FKS23} first introduced the graph model where vertices are agents and edges are items such that agents are only interested in their incident items.
When agents have additive valuations, they proved that an EFX allocations when the items can be arbitrarily allocated always exists but an EFX orientation where items must be allocated to their incident agents may not exist.
Zeng and Mehta \cite{zeng2024structureefxorientationsgraphs} identified a family of graphs which admit EFX orientations exist. 
Kaviani et al. \cite{DBLP:journals/corr/abs-2407-05139} and Deligkas et al. \cite{deligkas2024ef1efxorientations} 
repectively generalized the graph orientation problem.
In the model of Kaviani et al. \cite{DBLP:journals/corr/abs-2407-05139}, agents have $(p,q)$-bounded valuations, where each item has a non-zero marginal value for at most $p$ agents, and any two agents share at most $q$ common interested items. 
They proved that for $(\infty,1)$-bounded valuations, EF2X orientations always exist.
In the model of Deligkas et al. \cite{deligkas2024ef1efxorientations}, every agent has a predetermined set of items and only receives items from this set. They proved that for monotone valuations, an EF1 allocation always exists. 
Zhou et al. \cite{zhoucomplete} extended the goods model in \cite{DBLP:conf/sigecom/0001FKS23} to mixed setting when the marginal value of each item can be positive or negative.



\paragraph{Fair Division with Subsidy}
Caragiannis and Ioannidis \cite{DBLP:conf/wine/CaragiannisI21} first studied the computational complexity of using the minimum amount of subsidies to ensure envy-freeness.
Halpern and Shah \cite{DBLP:conf/sagt/HalpernS19} bounded the worst-case amount of subsidies for additive valuations, and Brustle et al. \cite{BDNSV20} later improved the result in \cite{DBLP:conf/sagt/HalpernS19}.
There is a line of recent works focusing on the variants and generalizations of the standard setting.
For example, Goko et al. \cite{DBLP:journals/geb/GokoIKMSTYY24} considered the strategic behaviors when subsidy is used, and Aziz et al. \cite{DBLP:journals/corr/abs-2408-08711} considered weighted envy-freeness when the agents have different entitlements.
Brustle et al. \cite{BDNSV20}, Barman et al. \cite{DBLP:conf/ijcai/BarmanKNS22} and Kawase et al. \cite{DBLP:conf/aaai/KawaseMSTY24} extended the valuations from additive to general monotone, which is also the focus of the current paper.
Choo et al. \cite{DBLP:journals/orl/ChooLSTZ24} and Dai et al. \cite{DBLP:journals/corr/abs-2408-12523} respectively studied the envy-free house allocation with subsidies and its weighted variant. 
Wu et al. \cite{DBLP:conf/wine/WuZZ23, DBLP:journals/corr/abs-2404-07707} extended the subsidy model to the fair allocation of indivisible chores.

\section{Preliminaries}
We first introduce the necessary notations and solution concepts in standard fair division problems.
Denote by $[k] = \{ 1,\ldots, k \}$ for all $k \in \mathbb{N}^+$. There is a set $M = \{ e_1,\ldots, e_m \}$ of $m$ items to be allocated to a set $N=\{1,\ldots, n \}$ of $n$ agents.
An allocation $\mathbf{A} = (A_1,\ldots, A_n)$ is an $n$-partition of $M$ where $A_i$ refers to the bundle allocated to agent $i$.
Each agent $i$ has a monotone valuation function $v_i(\cdot): 2^{M}\rightarrow \mathbb{R}_{+}$ such that for any $S\subseteq T$, $v_i(S) \leq v_i(T)$. 
Following the convention of the literature on fair division with subsidies, we scale the valuations such that the maximum marginal value of any item is 1. 
That is $\max_{e\in M, S\subseteq M}v_i(S\cup\{e\}) - v_i(S) = 1$.
The valuation $v_i(\cdot)$ is \emph{additive} if for any $S\subseteq M$, $v_i(S)=\sum_{e\in S}v_i(\{e\})$. 
We say item $e$ is \emph{irrelevant} for agent $i$ if 
$v_i(S\cup\{e\}) = v_i(S)$ for all $S\subseteq M$.
In other words, the marginal value of assigning an agent her irrelevant item is always 0.
Item $e$ is \emph{relevant} for agent $i$ if it is not irrelevant.

With no subsidies, an allocation is \emph{envy-free} (EF) if no agent would prefer the bundles of others to her own bundle, i.e., for any $i,j\in N$, $v_i(A_i) \geq v_i(A_j)$. 
Let $\mathbf{p}=(p_1,\ldots,p_n)$ be the \emph{subsidy} or payment vector where each agent $i$ receives payment $p_i\geq 0$. An allocation with payment $\{ \mathbf{A}, \mathbf{p} \}$ is envy-free if for any $i,j\in N$,
$$
v_i(A_i) + p_i \geq v_i(A_j) + p_j.
$$
An allocation $\mathbf{A}$ is said to be \emph{envy-freeable} (EFable) if there is a payment vector $\mathbf{p}=(p_1,\ldots,p_n)$ such that $\{\mathbf{A}, \mathbf{p}\}$ is envy-free. 
It is known that not every allocation is envy-freeable. 
Halpern and Shah \cite{DBLP:conf/sagt/HalpernS19} characterized all envy-freeable allocations via \emph{envy graph}.
The envy graph $D_{\mathbf{A}}$ for allocation $\mathbf{A}$ is a complete weighted directed graph with vertices $N$. For any pair of agents $i,j\in N$, the weight of arc $(i,j)$ in $D_{\mathbf{A}}$ is $w(i,j) = v_i(A_j) - v_i(A_i)$, i.e., the envy of agent $i$ on agent $j$'s bundle.

\begin{theorem}[Halpern and Shah \cite{DBLP:conf/sagt/HalpernS19}]\label{thm:condition-envyfreeable}
    Given any allocation $\mathbf{A}$, the following statements are equivalent:
    \begin{enumerate}[label=(\alph*)]
        \item $\mathbf{A}$ is envy-freeable;
        \item $\mathbf{A}$ maximizes the total value of agents among all reassignments of its bundles to agents, i.e., for every permutation $\pi: N\rightarrow N$, $\sum_{i\in N}v_i(A_i) \geq \sum_{i\in N}v_i(A_{\pi(i)})$;
        \item $D_{\mathbf{A}}$ contains no directed cycle with positive weight.
    \end{enumerate}
\end{theorem}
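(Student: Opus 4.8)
The plan is to prove the three equivalences by establishing the cycle of implications (a) $\Rightarrow$ (b) $\Rightarrow$ (c) $\Rightarrow$ (a). Throughout, I would work with a reformulation of the envy-free-with-payments condition: $\{\mathbf{A}, \mathbf{p}\}$ is envy-free exactly when $p_i - p_j \geq v_i(A_j) - v_i(A_i) = w(i,j)$ for every ordered pair $i,j$, so that a valid payment vector is precisely a potential on the vertices of $D_{\mathbf{A}}$ that dominates every arc weight.

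For (a) $\Rightarrow$ (b), I would start from an envy-free payment vector $\mathbf{p}$ and an arbitrary permutation $\pi$. Summing the envy-free inequality $v_i(A_i) + p_i \geq v_i(A_{\pi(i)}) + p_{\pi(i)}$ over all $i$, and using that $\sum_i p_{\pi(i)} = \sum_i p_i$ since $\pi$ merely permutes the payments, the payment terms cancel and leave $\sum_i v_i(A_i) \geq \sum_i v_i(A_{\pi(i)})$, which is exactly the welfare-maximization statement of (b).

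For (b) $\Rightarrow$ (c), I would argue by contrapositive. The key observation is that for any permutation $\pi$ the welfare gap can be rewritten as $\sum_i v_i(A_{\pi(i)}) - \sum_i v_i(A_i) = \sum_i w(i,\pi(i))$, and decomposing $\pi$ into disjoint cycles identifies this sum with the total weight of the corresponding directed cycles in $D_{\mathbf{A}}$. Hence a single positive-weight directed cycle $i_1 \to \cdots \to i_k \to i_1$ yields, via the permutation that rotates $\{i_1,\ldots,i_k\}$ and fixes all other agents, a reassignment of strictly larger total value, contradicting (b).

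The heart of the argument, and the step I expect to require the most care, is (c) $\Rightarrow$ (a), where I must exhibit explicit nonnegative payments. The plan is to set $p_i$ equal to the maximum weight over all directed paths in $D_{\mathbf{A}}$ starting at vertex $i$, counting the empty path of weight $0$. The absence of positive-weight cycles is exactly what guarantees these maxima are finite and well defined, and nonnegativity is immediate from the empty path. To verify the envy-free inequalities, I would note that prepending the arc $(i,j)$ to a maximum-weight path out of $j$ is itself a path out of $i$, so $p_i \geq w(i,j) + p_j$, i.e. $p_i - p_j \geq w(i,j)$, which is the reformulated envy-free condition. The main obstacle lies entirely in this potential construction---establishing finiteness and checking the inequality for every arc---after which the cycle of implications closes and the three statements coincide.
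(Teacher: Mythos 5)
Your proposal is correct. Note that the paper itself does not prove this theorem---it is imported by citation from Halpern and Shah---but your argument is precisely the standard one from that work: the implication chain (a)~$\Rightarrow$~(b)~$\Rightarrow$~(c)~$\Rightarrow$~(a), with payments cancelling under a permutation sum, cycle decomposition of permutations, and the potential $p_i = \ell^{\max}_{D_{\mathbf{A}}}(i)$ given by maximum-weight paths; indeed the paper invokes exactly this payment rule later (before Lemma~\ref{lem:weight-length-mono}). The only point to tighten is in (c)~$\Rightarrow$~(a): if ``path'' means simple path, prepending the arc $(i,j)$ to a maximum-weight path out of $j$ may fail to be simple when that path visits $i$; either define $p_i$ as a supremum over walks (finite precisely because positive cycles are absent, and attained by simple paths), or excise the resulting non-positive cycle through $i$ to recover a simple path of no smaller weight---both fixes are routine.
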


\paragraph{Graph Orientation}

In this work, we assume that agents are vertices in a (multi)graph and items are edges so that the relevant items to an agent are exactly the edges incident to her. 
Formally, given a graph $G=(V,E)$, we use $E_{i,j}$ to denote the set of edges between $i,j\in V$, where $E_{i,j}=E_{j,i}$ and $E_{i,j}=\emptyset$ if $i,j$ are non-adjacent.
Let $E_i=\bigcup_{j\neq i}E_{i,j}$ be the set of edges incident to $i \in V$, i.e, $i$'s relevant items.
When the graph is simple, $E_{i,j}$ contains at most one edge and we use $e=(i,j)$ to denote this edge if $E_{i,j}\neq \emptyset$. 
Then $E_i=\{(i,j)\in E\}$ for a simple graph.
In summary, we have $V=N$ and $M=E$ such that $v_i(X) = v_i(X\cap E_i)$ for all $X\subseteq E$.
We focus on a restricted type of allocations.
An \emph{orientation} with respect to $G$ is an allocation $\mathbf{A} = (A_1,\ldots, A_n)$ such that $A_i \subseteq E_i$.
If not explicitly stated otherwise, all allocations in this paper are assumed to be orientations.

Given orientation $\mathbf{A}=(A_1,\ldots,A_n)$ of $G$, for any $i,j\in N$, let $A^j_i = A_i\cap E_{i,j}$ for simplicity.
The orientation is \emph{locally envy-freeable} with respect to agents $i,j$
if the partial allocation of $E_{i,j}$ is envy-freeable, i.e., $v_i(A^j_i)+v_j(A^i_j) \geq v_i(A^i_j)+v_j(A^j_i)$.
The orientation $\mathbf{A}$ is locally envy-freeable if for every pair of agents $i,j$, the partial allocation of $E_{i,j}$ is envy-freeable.

We without loss of generality assume throughout this paper that the given (multi)graph is connected. If the graph consists of multiple components, we can add new edges until there exists exactly one component. Denote the resulting graph as $G'$.
Let the agents incident to each new-added edge have value 0 for the edge. Then an EF orientation with payment $\{\mathbf{A}, \mathbf{p}\}$ in $G$ can be converted to $\{\mathbf{A}',\mathbf{p}\}$ an EF orientation with payment in $G'$; this can be achieved by arbitrarily distributing the new-added edges to incident agents.
The reverse direction is also true. As only the orientation varies, we can claim that there exists an EF orientation of $G$ with total subsidy at most a number $P$ if and only if there exists an EF orientation of $G'$ with total subsidy at most $P$.





\subsection{Algorithmic Components}

In this section, we recall some simple algorithms in resource allocation, which will be used as subroutines in our algorithms. 
Although these algorithms work for an arbitrary number of agents, it is sufficient for us to focus on the case with two agents. 

\paragraph{Round-Robin ($\RR(a,b,S)$)} The algorithm takes a pair of ordered agents $a$ and $b$ and a set of items $S$. Agents $a$ and $b$ take turns selecting their favorite remaining item from $S$ until all items are allocated. 
When the two agents have additive valuations, $\RR$ returns an EF1 allocation. 

\paragraph{Envy-Cycle Elimination ($\EC(a,b,S)$)}
The algorithm takes two agents $a$ and $b$ and a set of items $S$. 
In each round, $\EC$ selects the agent that is not envied by the other one to receive one remaining item in $S$ until all items are allocated. 
If there is mutual lack of envy between both agents, arbitrarily select one of them. 
If both agents envy the other one, $\EC$ first swaps their current allocations before selection. 
$\EC$ returns an EF1 allocation for monotone valuations.

\paragraph{Maximum Total Utilities ($\MAXU(a,b,S)$)}
For each item in $S$, it is allocated to the agent who has a higher value on it, and ties are broken arbitrarily.
The algorithm always maximizes the total value of the two agents but does not necessarily have fairness guarantees.

\section{Complexity of Computing Minimum Subsidy}\label{sec:complexity}

In this section, we present our computational results on the problem of computing an EF orientation with the minimum total subsidy when the valuations are additive. We demonstrate that, perhaps surprisingly, the problem is NP-hard if the edge values are in $\{1,2\}$, but is polynomial-time solvable if edge values are  in $\{0,1\}$.

\subsection{NP-hardness for Edge Values in $\{1,2\}$}
We start with the hardness result. 
For the ease of presentation, we consider the valuations with $v_i(e)\in \{1,2\}$ for all $i\in N$ and $e\in E_i$.
We can simply divide all the values by 2 in order to satisfy the scaled assumption of maximum marginal value being 1. 

\begin{theorem}\label{thm:complexity-2value}
    Computing an orientation with the minimum subsidy is the NP-hard, even when the graph is simple and the valuations are additive and $v_i(e)\in \{1,2\}$ for all $i\in N$ and $e \in E_i$.
\end{theorem}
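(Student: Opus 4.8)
The plan is to reduce from a known NP-hard problem, most naturally a variant of a partition, matching, or graph-coloring/covering problem, exploiting the $\{1,2\}$ value structure. The key observation is that for additive valuations on a simple graph, by Theorem~\ref{thm:condition-envyfreeable} an orientation $\mathbf{A}$ is envy-freeable if and only if its envy graph $D_{\mathbf{A}}$ has no positive-weight cycle, and the minimum subsidy for an envy-freeable allocation equals the maximum total weight of any path in $D_{\mathbf{A}}$. So the optimization problem is: among all orientations, minimize the longest-path weight in the envy graph. Since values lie in $\{1,2\}$, the local envy $w(i,j) = v_i(A_i^j) - v_i(A_i^i)$ on each edge is tightly controlled, and I would engineer a graph where the global subsidy decomposes into local contributions that track a combinatorial structure I want to force.

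First I would fix a source problem—I expect something like a restricted \textsc{Vertex Cover}, \textsc{Max-Cut}, or an \textsc{Exact-3-Cover}/partition-style problem—and design a gadget for each variable/element and each clause/constraint, connecting them by edges carrying values in $\{1,2\}$. The intent is that orienting an edge ``the wrong way'' creates a unit (or near-unit) of unavoidable envy that propagates along a path, so that the total subsidy is below a target threshold $P$ exactly when the source instance is a YES-instance. Because the subsidy equals a longest weighted path in the envy graph, I would want the gadgets to enforce that any globally consistent low-subsidy orientation corresponds to a consistent assignment in the source problem, and that inconsistencies (e.g.\ an uncovered clause) force an additional envy edge that lengthens some path beyond $P$.

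\textbf{The main obstacle} will be controlling the longest-path objective \emph{globally}: local envy is easy to reason about edge by edge via local envy-freeability, but the subsidy is determined by the maximum-weight path across the whole graph, and paths can chain unit-envies from unrelated gadgets, producing unintended long paths that blow up the subsidy even for YES-instances. I would therefore spend most of the effort ensuring the gadgets are ``path-isolated''—for instance by making cross-gadget edges symmetric/zero-envy in the honest orientation, or by inserting high-value edges whose forced orientation pins down certain agents as envy ``sinks'' that cut potential long paths. Establishing the forward direction (YES $\Rightarrow$ subsidy $\le P$) by exhibiting an explicit orientation and payment vector should be routine once the gadgets are right; the delicate direction is NO $\Rightarrow$ subsidy $> P$, where I must argue that \emph{every} orientation realizing an inconsistent assignment necessarily induces a positive-weight path (or cycle) exceeding the budget. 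I would prove this by a careful case analysis on how each constraint gadget can be oriented, showing each violated constraint contributes an irreducible increment to some agent's required payment.
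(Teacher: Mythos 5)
There is a genuine gap here: your proposal is a plan for a reduction, not a reduction, and the single idea that would make such a plan work is exactly the one you are missing. The paper does not minimize any positive threshold $P$ at all; it observes that it suffices to prove NP-completeness of deciding whether an EF orientation with \emph{zero} subsidy exists. With the threshold at zero, the objective collapses from a global quantity to a purely local one: in an orientation, $v_i(A_j)=v_i(A_j\cap E_{i,j})$, so agent $i$ can only envy an adjacent agent $j$, and ``subsidy zero'' is equivalent to plain envy-freeness, checkable edge-set by edge-set. This completely dissolves what you correctly identify as your ``main obstacle'' --- unit envies chaining along long paths in $D_{\mathbf{A}}$ across unrelated gadgets --- which your proposal leaves unresolved (the ``path-isolation'' and ``envy sink'' devices are named but never constructed or verified). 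Concretely, the paper reduces from 2P2N-3SAT: each variable $x_i$ gives two vertices $i,\bar i$ joined by an edge of value $2$ (the orientation of this edge encodes the truth value); each clause $C_j$ gives a vertex $V_j$ joined by value-$1$ edges to its three literal vertices; and a degree-one dummy $D_j$ attached to $V_j$ by a value-$1$ edge forces $(D_j,V_j)$ to go to $D_j$, so $V_j$ can avoid envy only by getting a satisfied literal's edge. Satisfiability then corresponds exactly to existence of an EF orientation, and the ``NO $\Rightarrow$ positive subsidy'' direction is a short local case analysis rather than a global path argument.

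A secondary but real error in your setup: you assert that the minimum subsidy of an envy-freeable allocation ``equals the maximum total weight of any path in $D_{\mathbf{A}}$.'' By the characterization of Halpern and Shah (used throughout the paper), the minimum total subsidy is $\sum_{i\in N}\ell^{\max}_{D_{\mathbf{A}}}(i)$, the \emph{sum} over agents of the maximum-weight path starting at each agent, not the single longest path. Any threshold-based reduction built on your stated objective would therefore be reasoning about the wrong quantity. Note also that the zero-threshold formulation yields a strictly stronger conclusion than a generic threshold reduction: since distinguishing zero from positive subsidy is NP-hard, no polynomial-time algorithm can achieve any bounded multiplicative approximation of the minimum subsidy, a point the paper makes immediately after the theorem.
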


\begin{proof} 
    It suffices to prove the NP-completeness of checking whether an EF orientation exists or not. We reduce from the NP-complete problem 2P2N-3SAT \cite{DBLP:journals/dam/BermanKS07,DBLP:conf/rta/Yoshinaka05}. A 2P2N-3SAT instance contains a Boolean formula in conjunctive normal form consisting of $n$ Boolean variables $\{x_i\}_{i\in[n]}$ and $m$ clauses $\{C_j\}_{j\in[m]}$.
    Each variable appears exactly twice as a positive literal and exactly twice as a negative literal in the formula. 
    Each clause contains three distinct literals.

  
    Given a 2P2N-3SAT instance $(\{x_j\}_{j\in [n]}, \{C_j\}_{j\in [m]})$, we construct a graph orientation instance without parallel edges as follows:
    \begin{itemize}
        \item For each variable $x_i$, create two vertices $i,\bar{i}$ and one edge $(i,\bar{i})$ with a value of 2 to both $i$ and $\bar{i}$.
        \item For each clause $C_j$, create a vertex $V_j$. If $C_j$ contains a positive literal $x_i$, then create an edge $(V_j, i)$ with value 1 for both incident vertices. If $C_j$ contains a negative literal $\bar{x}_i$, then create an edge $(V_j, \bar{i})$ with value 1 for both incident vertices.
        \item Create a dummy vertex $D_j$ for every clause $C_j$, and an edge $(D_j, V_j)$ with a value of 1 for both incident vertices. 
    \end{itemize}
    In summary, we create a graph orientation instance with $2m+2n$ vertices and $5n+m$ edges. Only the edges of form $(i,\bar{i})$'s are valued at 2 by their incident vertices, and all other edges have a value of 1 by their incident vertices. Observe that the degree of each dummy vertex is 1, and thus, in an EF orientation, every vertex $D_j$ must receive edge $(D_j,V_j)$. 
We now show that a 2P2N-3SAT instance is satisfiable if and only if there exists an EF orientation. 

Suppose the 2P2N-3SAT instance is satisfiable. For each variable $x_i$ that is set to \texttt{True} in the satisfiable assignment, we allocate $(i,\bar{i})$ to $i$. Then, allocate the other two edges incident to $i$ to the corresponding clause vertices.
For $\bar{i}$, there are also two edges linking two clause vertices to it, and then allocate these two edges to $\bar{i}$. Similarly, if $x_j$ is set to \texttt{False}, allocate  $(j,\bar{j})$ to $\bar{j}$. Additionally, allocate the other two edges incident to $\bar{j}$ to their corresponding clauses. For $j$, allocate the edges connecting to the clause vertices to $j$. Finally, for every edge $(D_j,V_j)$,  allocated it to $D_j$. We have allocated all edges to their incident vertices, it remains to show that this allocation is EF. 
Each vertex $D_j$ is EF since she receives her unique-valued item. Each $V_j$ receives a value of at least 1 as there exists at least one literal contained in $C_j$ that evaluates to \texttt{True} in the satisfiable assignment. Thus, $V_j$ is also EF.
    For variable $x_i$ that is set to \texttt{True}, the vertex $i$  receives an item with a value of 2, and $\bar{i}$ receives 
    two incident edges linking to clause vertices and has a value of 2. A similar argument shows that when $x_j$  is set to \texttt{False}, both  $j$ and $\bar{j}$ receive value 2. Therefore, the resulting orientation is indeed EF.

    For the reverse direction, suppose that there exists an EF orientation. Construct an assignment of $\{x_i\}$ as follows: if $(i,\bar{i})$ is allocated to $i$, then set $x_i$ to \texttt{True}, and otherwise, set $x_i$ to \texttt{False}.
    The constructed assignment is valid, which means that exactly one of $x_i,\bar{x}_i$ is set to \texttt{True}. We now show that this assignment is satisfiable. Assume, for the sake of contradiction, that there exists a clause $C_j$ that evaluates to \texttt{False}. 
    In this case, the vertices corresponding to the three literals in $C_j$ would not receive the item with value 2. To satisfy EF, each of these vertices must receive the edge linking her to $V_j$.
    Consequently, $V_j$ needs to receive $(D_j,V_j)$ to avoid envying the variable vertices adjacent to her. However, this allocation would make $D_j$ envy $V_j$, which is a contradiction.
    Therefore, the constructed assignment is a satisfiable assignment.
\end{proof}
We remark that the above theorem indicates that it is NP-hard to differentiate between zero subsidy and a positive subsidy. Therefore, there is no polynomial time algorithm with a bounded approximation guarantee for the problem of computing the orientation with the minimum subsidy. 
Given this, we study the subsidy bounds that always guarantees an EF orientation in Sections \ref{sec:Monotone}, \ref{sec::n/2}, and \ref{sec:simple:monotone}.

\subsection{Efficient Algorithm for Edge Values in $\{0,1\}$}
We next prove that when the valuation functions are additive and binary (i.e., the value of every edge is in $\{0,1\}$), computing the minimum subsidy orientation can be solved in polynomial time.
{Starting from $G$, if an edge is valued at 0 by both incident vertices, we remove that edge. Note that the removal of those edges does not affect the required subsidy. 
With a slight abuse of notation, for the remainder of this section, let $G$ denote the graph obtained after removing the above-mentioned edges.} 
If an edge is valued at 1 by both incident vertices, it is called \emph{critical}.
If it is valued at 1 by exactly one of its incident vertices, it is called \emph{non-critical}.

 We now introduce four properties that a component $C=(V_C,E_C)$ of $G$ may satisfy: 
\begin{enumerate}[label=(P\arabic*),ref=P\arabic*]
    \item $C$ contains a non-critical edge.
    \item  $C$ contains a simple cycle\footnote{A simple cycle with length $k$ in a multigraph $G$ is a simple path $P = v_1,e_1,v_2,e_2,\ldots,v_k,e_k,v_1$ such that vertices $v_1,\ldots,v_k$ are distinct, and $e_i \in E_{i,i+1}$ for $1\le i<k$ and $e_k\in E_{k,1}$.} of length  at least 3 where every edge is critical.
    \item $C$ contains a pair of {adjacent} vertices $i$ and $j$ such that the number of critical edges in $E_{i,j}$ is even.
    \item $C$ contains vertices $i, j, k$ such that there are at least two critical edges in $E_{i,j}$, and at least two critical edges in $E_{i,k}$.
\end{enumerate}


\begin{algorithm}[t]
	\caption{Computing an EF Orientation}
	\label{alg:subroutine}
 \renewcommand{\algorithmicensure}{\textbf{Output:}}
	\begin{algorithmic}[1]
		\REQUIRE {A component $C=(V_C,E_C)$ satisfying any of (P1)--(P4).}
		\ENSURE An EF orientation of $C$.
        \STATE Initialize the label of each vertex $v\in V_C$ to $\text{Label}_v=0$.
        \STATE If property (P1) is satisfied, allocate each {non-critical} edge to the agent who values it at 1, and set that agent's label to 1. Remove all allocated edges and proceed to Line \ref{alg:subroutine:step-while}.
        \STATE If property (P2) is satisfied, arbitrarily choose a direction for the cycle and allocate each agent one edge along the chosen direction. Additionally, set the label of every agent in the cycle to 1.
        Remove all allocated edges and proceed to to Line \ref{alg:subroutine:step-while}.
        \STATE If property (P3) is satisfied, allocate half of the critical edges to each of the agents $i$ and $j$. Set the labels of $i$ and $j$ to the number of edges they receive. Remove all allocated edges and proceed to Line \ref{alg:subroutine:step-while}.
        \STATE If property (P4) is satisfied, allocate $\lfloor \frac{\ell_{ij}}{2} \rfloor$ and $\lfloor \frac{\ell_{ik}}{2} \rfloor$ critical edges to agent $i$, where $\ell_{ij}$ is the number of {critical} edges between $i,j$ in $C$. Set the label of agent $i$ to the total number of edges they receive. Remove all allocated edges and proceed to Line \ref{alg:subroutine:step-while}.
        \WHILE{$V_C\neq \emptyset$}\label{alg:subroutine:step-while}
        \STATE\label{Line7} Pick an arbitrary agent $i\in V_C$ with $\text{Label}_i \geq 1$. For each neighbor $j$ of $i$, suppose $\ell_{ij}$ the number of {remaining} edges\footnotemark{}
        between $i$ and $j$ in $C$. Let agent $i$ receive $\max\{ \lceil \frac{\ell_{ij}}{2} \rceil - \text{Label}_i, 0 \}$ of the edges connecting $i$ and $j$. Remove vertex $i$ from $C$ (i.e., $V_C \leftarrow V_C\setminus\{i\}$ ), and remove all allocated edges.
        \STATE\label{Line8} If there are edges remaining  that only connects to one vertex, allocate those edges to the incident vertices. Additionally, set the label of each these vertices  to the  number of edges allocated to them. 
        \ENDWHILE
	\end{algorithmic}
\end{algorithm}

\footnotetext{Note that when the while loop is executed, there is no non-critical edge.}

We first prove that a component admits an EF orientation if and only if it satisfies at least one of properties (P1)--(P4).

\begin{lemma}\label{lem:subroutine-ak-ec-1}
    For a component $C$ of $G$, there exists an \textnormal{EF} orientation of $C$ if and only if $C$ satisfies at least one of properties (P1)--(P4).
\end{lemma}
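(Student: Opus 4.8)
The plan is to work in the simplified model where, after deleting every edge valued $0$ by both endpoints, binary additivity gives $v_i(S)=|S\cap E_i|$ for each $i$. Consequently, for an orientation $\mathbf{A}$ the envy condition collapses to a purely combinatorial one: $\mathbf{A}$ is EF if and only if for every pair of adjacent agents $i,j$ we have $|A_i|\ge |A_j\cap E_{i,j}|$, because the only edges of $A_j$ that $i$ values are the common ones held by $j$. I prove the two directions separately, handling ``only if'' by contrapositive and ``if'' by analyzing Algorithm~\ref{alg:subroutine}.

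For the contrapositive of ``only if'', I first read off the forced structure of $C$ when none of (P1)--(P4) holds. Failure of (P1) makes every edge critical, so $v_i(S)=|S\cap E_i|$ holds exactly; failure of (P2) forces the underlying simple graph to be a tree $T$ (a connected graph with no simple cycle of length $\ge 3$); failure of (P3) forces every bundle $E_{i,j}$, equivalently every edge of $T$, to have \emph{odd} multiplicity $m_{ij}$; and failure of (P4) forces the ``thick'' bundles (those with $m_{ij}\ge 3$) to form a matching in $T$, so each vertex is incident to at most one thick bundle. The key consequence of oddness is that in any orientation each bundle $E_{i,j}$ has a strict \emph{majority} endpoint, a tie being impossible; I therefore orient each edge of $T$ from its minority endpoint to its majority endpoint. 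Since $T$ is acyclic this is a DAG, so it has a source $v$, namely a vertex that is the minority on every incident bundle.

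I then derive a contradiction at $v$. If $v$ is a leaf with neighbor $u$, then $|A_v|$ is exactly what $v$ takes from $E_{v,u}$, and EF at $v$ forces $|A_v|\ge |A_u\cap E_{v,u}|$, i.e.\ $v$ is the majority there --- contradicting that it is a source. If $v$ has degree $\ge 2$, then by the matching property at most one incident bundle is thick (of multiplicity $m$) and all others are thin (multiplicity $1$); being the minority everywhere gives $|A_v|\le \lfloor m/2\rfloor=(m-1)/2$ (or $|A_v|\le 0$ when all incident bundles are thin), while EF toward the thick neighbor forces $|A_v|\ge \lceil m/2\rceil=(m+1)/2$ (or $|A_v|\ge 1$ toward any thin neighbor). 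In each case the required lower bound strictly exceeds the available upper bound, so no EF orientation exists. This proves that an EF orientation implies at least one of (P1)--(P4).

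For the ``if'' direction I verify that Algorithm~\ref{alg:subroutine} outputs an EF orientation whenever $C$ satisfies one of the properties. The guiding invariant is that, at the moment a vertex $x$ is finalized, its label equals $|A_x|$ and the split on each incident bundle $E_{x,y}$ is balanced, so that the minority endpoint's \emph{total} bundle reaches at least the majority count $\lceil \ell_{xy}/2\rceil$; the role of each of the four initializations is precisely to create a nonempty seed of vertices whose labels are already $\ge 1$ (the owner of a non-critical edge, the vertices of an oriented critical cycle, or the endpoints of an even/double bundle), which is what lets the greedy top-up $\max\{\lceil \ell_{xy}/2\rceil-\mathrm{Label}_x,\,0\}$ always leave enough on both sides. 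I would check by induction on the peeling order that (a) no finalized vertex envies a neighbor and (b) no neighbor is forced into envying a finalized vertex, treating with care the leftover edges reassigned in Line~\ref{Line8}. I expect this invariant maintenance through the while loop to be the main obstacle: in particular, confirming that when $\mathrm{Label}_x$ already exceeds $\lceil \ell_{xy}/2\rceil$ and $x$ takes nothing toward $y$, the edges that fall back to $y$ do not make $x$ envious, and that the seed guaranteed by (P1)--(P4) yields at least one admissible starting vertex so the loop terminates having oriented every edge.
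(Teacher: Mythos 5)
Your ``only if'' direction is complete and correct, and it is essentially the paper's own argument run in reverse: the paper forms the majority digraph (arc toward the holder of $\lceil \ell_{ij}/2\rceil$ edges), proves every vertex has out-degree at least one in an EF orientation --- using exactly the two structural facts you isolate, odd multiplicities from the failure of (P3) and the matching structure of thick bundles from the failure of (P4) --- and then extracts a directed cycle of length at least $3$, contradicting the failure of (P2); you instead use the failure of (P2) to get a tree, orient minority-to-majority, take a source, and contradict EF locally at that source. Same ingredients, equivalent proof, and your local case analysis at the source is airtight.

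The genuine gap is your ``if'' direction: ``I would check by induction \dots I expect this invariant maintenance to be the main obstacle'' is a plan, not a proof, and the obstacle you flag is not a routine verification --- it is exactly where the argument breaks. When a peeled vertex $x$ has $\mathrm{Label}_x \geq \lceil \ell_{xy}/2\rceil$ it takes nothing from $E_{x,y}$ and all $\ell_{xy}$ edges fall to $y$; to rule out $x$ envying $y$ you then need $v_x(A_x)\geq \ell_{xy}$, not merely $v_x(A_x)\geq \lceil \ell_{xy}/2\rceil$, and your stated invariant (which is also the inequality the paper's proof asserts at this step) does not give this. The failure is concrete: take the path $x$--$i$--$j$--$k$ with $|E_{x,i}|=2$, $|E_{i,j}|=|E_{j,k}|=3$, all edges critical. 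Property (P3) holds via the pair $(x,i)$, so the seed of Algorithm~\ref{alg:subroutine} gives $x$ and $i$ one edge each. Peeling $i$, it takes $\lceil 3/2\rceil-1=1$ edge of $E_{i,j}$, and the remaining $2$ fall to $j$, so $\mathrm{Label}_j=2$; peeling $j$, it takes $\max\{\lceil 3/2\rceil-2,0\}=0$ edges of $E_{j,k}$, all $3$ fall to $k$, and $j$ finishes with value $2<3=v_j(A_k)$ --- the output is not EF, even though an EF orientation exists (give $i$ two edges of $E_{i,j}$ and give $j$ one edge from each of $E_{i,j}$ and $E_{j,k}$). So your induction cannot be ``checked'' as stated: one must change the top-up rule (e.g.\ let the peeled vertex take $\max\{\lceil(\ell_{xy}-\mathrm{Label}_x)/2\rceil,0\}$ edges, which keeps its total at least as large as what falls to each neighbor while still leaving the neighbor at least $\lceil \ell_{xy}/2\rceil$ whenever $\mathrm{Label}_x\geq 1$), or argue that a particular choice of seed and peeling order avoids the bad configuration --- and neither is in your write-up. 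Until that is supplied, the ``if'' half of the lemma remains unproven.
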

\begin{proof}
    For the ``if'' direction, we prove that Algorithm~\ref{alg:subroutine}  terminates and returns an EF orientation. 
    Since the input $C$ satisfies at least one of the properties (P1)--(P4), first time when  the while loop in Line \ref{alg:subroutine:step-while} is executed, there exists at least one agent with a positive Label.
    In the while-loop, during each iteration,  we remove one vertex and moreover set the Label of at least one other remaining vertex to a positive integer. Therefore, the while-loop terminates with $V_C=\emptyset$ and all edges are allocated.

    Next, we prove the resulting orientation is EF. Observe that when the while loop is executed, we can assume that every edge is critical; otherwise, this would imply that property (P1) is satisfied, and thus every 0-1 edge could have  been removed.  Note that it suffices to check the EF condition only among neighboring vertices $i$ and $j$, since if there are no edges between two different vertices, there cannot be any envy between them. 

    Let  $i$ and $j$ be arbitrary adjacent vertices. One of these two vertices must have been selected earlier in the while loop (without loss of generality, let this vertex be $i$). When $i$ is selected by the while loop, her value is equal to $\text{Label}_i$. Thus, when the vertex $i$ is removed in Line~\ref{Line7}, her value is at least $\lceil \frac{\ell_{ij}}{2} \rceil$, ensuring that $i$ does not envy $j$. Since the remaining edges $\ell_{ij}-\max\{ \lceil \frac{\ell_{ij}}{2} \rceil - \text{Label}_i, 0 \} \geq \lceil \frac{\ell_{ij}}{2}\rceil$ are all allocated to agent $j$ by Line~\ref{Line8}, we also have that $j$ does not envy $i$. Thus, any two neighboring vertices are EF toward each other, implying that the outputted orientation is EF, as needed.

    We now consider the ``only if'' direction and prove that if $C$ satisfies none of (P1)-(P4), then $C$ does not have an EF orientation.
    For the sake of contradiction, assume that $C=(V_C,E_C)$ does have an EF orientation. As $C$ does not satisfy property (P1), it follows that $C$ does not contain non-critical edge; thus, every edge is critical. Denote $\ell_{ij}$ as the number of edges between any two adjacent vertices $i,j \in V_C$. 
 
    Consider a directed graph $\bar{C}$, induced by the EF orientation, with vertex set $V_C$ and edge set defined as 
    \[
    E_{\bar{C}}:= \{ (i,j) \ | \ (i,j)\in E_C,\ i \text{ receives } \lceil\frac{\ell_{ij}}{2}\rceil \text{ edges} \}.
    \]
    We see that in any EF orientation, each vertex $i$ in $\bar{C}$ must have an out-degree at least one. This holds because in $C$: (i) the number of edges between any two vertices $i,j \in V_C$ is odd, and (ii) for each $i\in V_C$ with at least two neighbors, $i$ has at most one neighbor with whom it shares more than one edge. These facts imply that if $i$ has out degree zero in $\bar{C}$, it receives a value of at most $\max_{j\in V_C}\lfloor\frac{\ell_{ij}}{2}\rfloor$, implying that $i$ envies one of its neighbors. Thus, indeed the out-degree of each vertex in $\bar{C}$ must be at least one.

    A directed graph where each vertex has out-degree at least one must contain a cycle. Hence, we see that $\bar{C}$ must contain a cycle. Furthermore, since if $(i,j)\in E_{\bar{C}}$ then $(j,i)\not\in E_{\bar{C}}$, we see that the cycle must be of length at least 3. However, a cycle of length at least 3 in $\bar{C}$ implies the existence of a cycle of the same length in $C$, which means that $C$ satisfies property (P3), contradicting the assumption that $C$ does not satisfy any of the properties (P1)--(P4). Hence, by contradiction, there is no EF orientation of $C$.
\end{proof}

We now present the main result of this section.

\begin{theorem}\label{thm:binary-additive}
  {When agents have binary additive valuations}, the minimum subsidy required to ensure EF orientation is equal to the number of connected components that do not satisfy any of the properties (P1)–(P4). Furthermore, the EF orientation that achieves this  minimum subsidy can be computed in polynomial time.
\end{theorem}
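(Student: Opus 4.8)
The plan is to work on the graph $G$ after deleting every edge that both endpoints value at $0$, decompose it into connected components, and treat each component almost independently. By Lemma~\ref{lem:subroutine-ak-ec-1}, a component admits an EF orientation with no subsidy exactly when it satisfies one of (P1)--(P4); call such components \emph{good} and the rest \emph{bad}, and write $B$ for the number of bad components. I would prove the two matching bounds separately: every EF orientation-with-payments costs at least $B$, and there is one costing exactly $B$ that is computable in polynomial time. Since disjoint components share no relevant edges, the within-component behaviour is governed by Lemma~\ref{lem:subroutine-ak-ec-1}, and the only genuinely global phenomenon is the interaction of the payments across components, which is where I expect the difficulty to sit.

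For the lower bound, fix any EF orientation $\mathbf{A}$ with payment vector $\mathbf{p}\ge 0$. Restricted to a bad component $C$, the orientation is an orientation of $C$, which by Lemma~\ref{lem:subroutine-ak-ec-1} cannot be envy-free; hence some pair $i,j\in C$ satisfies $v_i(A_j)-v_i(A_i)\ge 1$ (the gap is a difference of edge counts, hence a positive integer). Envy-freeness of $\{\mathbf{A},\mathbf{p}\}$ then forces $p_i\ge p_j+1\ge 1$. Thus each bad component contains an agent whose payment is at least $1$, and these agents are distinct because the bad components are vertex-disjoint, giving $\sum_{i} p_i\ge B$. This step needs no cross-component reasoning.

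For the upper bound I would construct an explicit orientation. Good components are oriented by Algorithm~\ref{alg:subroutine}, at no cost. For a bad component, note it contains no simple cycle of length at least $3$ (else (P2) holds) and every adjacent pair is joined by an odd number of (all critical) edges (else (P1) or (P3) holds); hence its underlying simple graph is a tree. I would root this tree arbitrarily and, between a child and its parent, hand the child the majority share $\lceil \ell/2\rceil$ of the $\ell$ parallel edges. A short computation shows every non-root agent $u$ obtains value at least $\lceil \ell_{u,\mathrm{parent}}/2\rceil\ge 1$ and envies nobody, while the root envies each child by at most $1$; since the component is bad, its orientation is not EF, so the root envies by exactly $1$. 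Assigning payment $1$ to the root of each bad component and $0$ elsewhere makes each component internally envy-free with total payment $B$, and every step (decomposition, Algorithm~\ref{alg:subroutine}, rooting, share-splitting) runs in polynomial time.

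The main obstacle is verifying that this payment vector is envy-free \emph{globally}, not merely within each component. For agents $i,j$ in different components $v_i(A_j)=0$, so the requirement collapses to $v_i(A_i)+p_i\ge p_j$, and since the largest payment is $1$ it suffices that $v_i(A_i)+p_i\ge 1$ for every agent $i$. Bad-component agents meet this (non-roots have bundle value at least $1$, roots are paid $1$), so the crux is to guarantee that every good-component agent extracts positive value from its bundle. This is the delicate point: an agent indifferent to all of its incident items cannot be given positive value and would envy the payment of any bad-component root, so the argument must rule out or separately absorb such agents — handling this coupling between the subsidy of one component and the allocations inside the others is, I expect, the heart of the proof. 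Once every agent is guaranteed value-plus-payment at least $1$, global envy-freeness follows and, together with the lower bound, establishes that the minimum subsidy equals $B$.
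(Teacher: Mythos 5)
Your lower bound and your per-component constructions are correct and track the paper's proof closely: the paper also charges one dollar to each bad component, orients good components with Algorithm~\ref{alg:subroutine} at zero cost, and fixes a bad component by giving an arbitrary vertex a subsidy of $1$ (setting its label to $1$) and running the while loop of Algorithm~\ref{alg:subroutine}. Your alternative for bad components --- observing that $\neg$(P1)--$\neg$(P3) force the underlying simple graph to be a tree with odd multiplicities, rooting it, giving each child $\lceil \ell/2\rceil$ of its parallel bundle, and paying the root $1$ --- is a valid substitute for that step, and your integrality argument ($p_i\ge p_j+1\ge 1$ inside each bad component, with bad components vertex-disjoint) makes precise the paper's terse claim that a positive subsidy is necessary per bad component.

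The genuine gap is exactly where you stopped: you never establish that every agent in a good component ends with value at least $1$, and without this the cross-component requirement $v_i(A_i)+p_i\ge p_j$ fails against a bad-component root, so the upper bound does not close. The resolution is immediate from the paper's standing normalization, which you overlooked: the Preliminaries scale each agent's valuation so that her maximum marginal value \emph{equals} $1$, i.e.\ $\max_{e\in M,\,S\subseteq M} v_i(S\cup\{e\})-v_i(S)=1$, and since $v_i(X)=v_i(X\cap E_i)$ this means every agent values some \emph{incident} edge at $1$; that edge is not a $0$--$0$ edge, so it survives the pruning. Hence the ``indifferent agent'' you fear cannot exist, and in any EF orientation of a good component (which exists by Lemma~\ref{lem:subroutine-ak-ec-1}) every agent $i$ satisfies $v_i(A_i)\ge 1$: if her $1$-valued edge $e=(i,j)$ is oriented to her this is immediate, and otherwise envy-freeness gives $v_i(A_i)\ge v_i(A_j)\ge v_i(\{e\})=1$. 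This is precisely the assertion in the paper's proof that each vertex of a good component receives value at least $1$, after which global envy-freeness follows exactly as you describe. Note that your worry is well-founded in the sense that the normalization is not optional: if an agent valuing all incident edges at $0$ were allowed, the theorem itself would be false (a component consisting of a single $1$--$0$ edge together with a component consisting of a single critical edge has one bad component but needs total subsidy $2$). With the normalization invoked, your argument is complete.
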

\begin{proof}

Suppose $G$ consists of connected components $C_1,\ldots,C_r$, $H_1,\ldots,H_t$, where each $C_i$ satisfies one of the properties (P1)--(P4), while each $H_j$ satisfies none of them. We show that subsidy of $t$, corresponding to one dollar per each component not satisfying (P1)--(P4), is both necessary and sufficient for $G$ to have an EF orientation. 

By Lemma~\ref{lem:subroutine-ak-ec-1}, each component $C$ has an EF orientation, and each vertex $i\in V_C$ receives a value of at least 1. Thus for each $i\in V_C$, if some agent $j\in V\setminus V_C$ receives a subsidy of at most 1, then $i$ is EF towards $j$.
Then to establish the subsidy bound in the theorem statement, it suffices to show that for each component $H$ not satisfying any of (P1)--(P4), subsidy of 1 is both necessary and sufficient for $H$ to have an EF orientation; this guarantees that no envy between vertices from different components.

According to Lemma~\ref{lem:subroutine-ak-ec-1}, for the component $H=(V_H,E_H)$, there exists no EF orientation, indicating that a positive subsidy is necessary. 
We then demonstrate that a subsidy of 1 suffices. Select an arbitrary vertex $i\in V_H$, give $i$ a subsidy of 1, and set $\text{Label}_i=1$. Then, proceed  directly to the while loop in  Algorithm \ref{alg:subroutine}.  The arguments provided in Lemma~\ref{lem:subroutine-ak-ec-1} imply that the while loop terminates and that the resulting orientation is EF. As only one agent received subsidy 1, we conclude that $H$ has an EF orientation with subsidy 1.

Finally, in order to compute the EF orientation of $G$, we essentially run Algorithm~\ref{alg:subroutine} on each of the components of $G$, with a minor adjustment for those components that do not satisfy any of the properties (P1)--(P4). Since the running time of Algorithm~\ref{alg:subroutine} is $O(mn)$, and the number of components of $G$ is at most $\frac{n}{2}$, we see that the EF orientation of $G$ with the minimum subsidy can be computed in time $O(mn^2)$.
\end{proof}

\section{Subsidy Bounds for Monotone Valuations and Multigraphs}\label{sec:Monotone}

In this section, we focus on the 
general case where the graph contains parallel edges and the valuations are monotone. {Suppose that valuations are given as oracle.}
Our main result is a polynomial-time algorithm that outputs an EF orientation using at most $n-1$ subsidy. 
As we will see in Section \ref{sec:simple:monotone} (Example \ref{expl:monotone:n-2}), there are instances where every EF orientation requires a subsidy of at least $n-2$ (even when there are no parallel edges), and thus our result in this section is tight up to an additive constant of 1.





Before presenting our algorithm, we first observe that  achieving local envy-freeability between every pair of adjacent agents 
does not imply envy-freeability for the orientation as a whole. We consider the following example.

\begin{example}
    Consider a complete graph with 3 agents $\{1,2,3\}$. We first construct $\E{1}{2}$, items between agents $1$ and $2$, and the corresponding valuations. 
    The set of items $E_{1,2}$ is partitioned into two bundles $B_1,B_2$ with agents' valuations: 
    $v_1(B_1) = 1, v_1(B_2) = \frac{2}{3}, v_2(B_1) = \frac{2}{3}, v_2(B_2) = 0$. 
    A local envy-freeable partial allocation is to assign
    $B_1$ to agent 2 and $B_2$ to agent 1.
    For the set $\E{2}{3}$ of items, we also split it into two bundles $B_3,B_4$, and the valuations of agents are $v_2(B_3) = 1, v_2(B_4) = \frac{2}{3}, v_3(B_3) = \frac{2}{3}, v_3(B_4) = 0$. 
    Then allocate $B_4$ to agent 2  and $B_3$ to agent 3  resulting in a locally envy-freeable partial allocation. 
    Similarly, the set $\E{1}{3}$ of items is partitioned into $B_5$ and $B_6$, with agents' valuations being $v_3(B_5) = 1, v_3(B_6) = \frac{2}{3}, v_1(B_5) = \frac{2}{3}, v_1(B_6) = 0$. Similarly, the partial allocation where agent $1$ receives $B_5$ and agent $3$ receives $B_6$ is locally envy-freeable.
    Now we get an orientation of $\mathbf{A}$ where $A_1=B_2\cup B_5, A_2=B_1\cup B_4, A_3=B_3\cup B_6$.
    As for agents' valuations of their bundles, we define $v_1(A_1)=v_2(A_2)=v_3(A_3)=\frac{2}{3}$, which does not violates the monotonicity of $v_i$'s.
    However, since in the envy graph $D_{\mathbf{A}}$, the weight of cycle $1\rightarrow 2\rightarrow 3\rightarrow 1$ is 1, the orientation $\mathbf{A}$ is not EFable by Theorem \ref{thm:condition-envyfreeable}.    
\end{example}

We circumvent such technical difficulties and show that, for multigraphs, a total subsidy of $n-1$ is sufficient. 
We now establish a structural property that will be useful later.


\begin{algorithm}[ht!]
        \caption{Multigraph Orientation for Monotone Valuations}
	\label{alg:subsidy-monotonoe}
	\begin{algorithmic}[1]
		\REQUIRE An instance $G=(N, E)$ and the valuation functions $\{ v_i \}'s$.
		\ENSURE Orientation $\mathbf{A} = (A_1,\ldots,A_n)$.
        \STATE For each pair $i,j$ of adjacent agents, apply the  envy-cycle elimination algorithm of Lipton et al. \cite{DBLP:conf/sigecom/LiptonMMS04} to the items $E_{i,j}$ and agents $i$ and $j$. 
        Let $T^i_{i,j}$ and $T^j_{i,j}$ be the resulting allocations, representing the bundles temporarily allocated to agents $i$ and $j$ respectively.
        \STATE For every agent $i$, define a threshold value
        $$
        b_i=\max\limits_{j\in D(i)}\min \{ v_i(T^i_{i,j}), v_i(T^j_{i,j}) \},
        $$
        where $D(i)$ is the set of neighbors of agent $i$ in graph $G$.
        \STATE For each pair of adjacent agents $i,j$, if they do not envy each other with respect to the temporary  allocation between them, i.e., $v_i(T^i_{i,j}) \geq v_i(T^j_{i,j})$ and $v_j(T^j_{i,j}) \geq v_j(T^i_{i,j})$,  make the temporary allocation permanent by setting   $P^i_{i,j}=T^i_{i,j}$ and $P^j_{i,j}=T^j_{i,j}$.
        \STATE\label{alg:line:temp} For each pair of adjacent agents $i,j$, if there exists an envious agent with respect to the temporary allocation between them (note that there are no envy-cycles after applying the algorithm of Lipton et al. \cite{DBLP:conf/sigecom/LiptonMMS04}),
        we assume both agents $i$ and $j$ prefer $T^i_{i,j}$ to $T^j_{i,j}$.  If $v_i(T^i_{i,j})-b_i \geq v_j(T^i_{i,j})-b_j$, we   permanently allocate $T^i_{i,j}$ to agent $i$ and the other bundle to agent $j$; otherwise, we permanently allocate $T^i_{i,j}$ to agent $j$ and the other bundle to agent $i$.
        \STATE For every agent $i,$ let $P^i_{i,j} \subseteq E_{i,j}$ be the  bundle allocated permanently to agent $i$, and the whole bundle received by agent $i$ is $A_i=\bigcup\limits_{j\in D(i)} P^i_{i,j}$.
	\end{algorithmic}
\end{algorithm}

\begin{lemma}\label{lem:envy-cycle}
The allocation $\mathbf{A}$ returned by Algorithm~\ref{alg:subsidy-monotonoe} does not contain any envy-cycle. 
\end{lemma}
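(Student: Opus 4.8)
The plan is to prove the equivalent statement that the envy graph $D_{\mathbf{A}}$ contains no positive-weight directed cycle, which by Theorem~\ref{thm:condition-envyfreeable}(c) is exactly the absence of an envy-cycle. Since $\mathbf{A}$ is an orientation, the part of $A_j$ that $i$ values is only the edges between them, i.e. $v_i(A_j)=v_i(A_j\cap E_{i,j})=v_i(P^j_{i,j})$; hence all nonzero arcs run between adjacent agents and the weight of arc $(i,j)$ is $w(i,j)=v_i(P^j_{i,j})-v_i(A_i)$.

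The engine of the argument is a potential function $\psi_i := v_i(A_i)-b_i$. First I would record that $\psi_i\ge 0$: taking the neighbor $j^\star$ attaining the maximum in the definition of $b_i$, the bundle $P^i_{i,j^\star}$ that $i$ permanently keeps is one of $T^i_{i,j^\star},T^{j^\star}_{i,j^\star}$, so $v_i(P^i_{i,j^\star})\ge\min\{v_i(T^i_{i,j^\star}),v_i(T^{j^\star}_{i,j^\star})\}=b_i$, and monotonicity gives $v_i(A_i)\ge v_i(P^i_{i,j^\star})\ge b_i$. The crux is then the single per-arc inequality
$$ w(i,j)\le \psi_j-\psi_i \qquad\text{for every ordered pair }(i,j). $$
Granting this, telescoping around any directed cycle $i_1\to\cdots\to i_k\to i_1$ gives total weight at most $\sum_\ell(\psi_{i_{\ell+1}}-\psi_{i_\ell})=0$, so no positive-weight cycle exists and the lemma follows.

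To establish the arc inequality I would run the case analysis dictated by Lines~3--4 of Algorithm~\ref{alg:subsidy-monotonoe}. For any pair handled in Line~4, envy-cycle elimination guarantees the two agents cannot both strictly envy (no weight-positive $2$-cycle), which justifies the assumption that both weakly prefer the contested bundle $C:=T^i_{i,j}$; write $C'$ for the other bundle. Two regimes arise. (i) If $i$ keeps her weakly preferred bundle (Line~3, or Line~4 when $i$ receives $C$), then $v_i(P^j_{i,j})\le v_i(P^i_{i,j})$, so $v_i(P^j_{i,j})=\min\{v_i(P^i_{i,j}),v_i(P^j_{i,j})\}\le b_i$; rearranging the target inequality to $v_i(P^j_{i,j})-b_i\le v_j(A_j)-b_j$ and using $v_i(P^j_{i,j})-b_i\le 0\le v_j(A_j)-b_j$ closes this case (non-adjacent pairs, where $w(i,j)=-v_i(A_i)$, are immediate by the same bound). (ii) The only positive arcs are the ``loser'' arcs, where $i$ receives $C'$ and $j$ receives $C$; here $w(i,j)=v_i(C)-v_i(A_i)$, and the rearranged target is $v_i(C)-b_i< v_j(A_j)-b_j$. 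This is supplied precisely by the selection rule of Line~4, which put $C$ with $j$ because $v_i(C)-b_i<v_j(C)-b_j$, combined with $v_j(A_j)\ge v_j(C)$.

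I expect the loser-arc case (ii) to be the main obstacle: it is the only source of positive envy and the only place where the comparison $v_i(T^i_{i,j})-b_i\ge v_j(T^i_{i,j})-b_j$ is used essentially. The whole point is that this comparison is engineered so that the agent who retains the contested bundle raises her potential above the loser's by strictly more than the induced envy. A secondary subtlety I would state carefully is the justification of the relabeling in Line~4 (``both prefer $T^i_{i,j}$''), which rests on the two-agent envy-cycle elimination output having no positive-weight $2$-cycle and hence at most one strictly envious agent.
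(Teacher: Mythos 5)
Your proposal is correct, and it takes a genuinely different route from the paper's. The paper proves Lemma~\ref{lem:envy-cycle} by contradiction: assuming a cycle in which every agent strictly envies its successor, it observes that each such arc must be a ``loser'' arc from Line~\ref{alg:line:temp}, sums the selection-rule inequalities $v_j(P^{j+1}_{j,j+1})-b_j\le v_{j+1}(P^{j+1}_{j,j+1})-b_{j+1}$ around the cycle so that the thresholds $b_j$ cancel, and combines this with strict envy and monotonicity to reach $\sum_j v_j(A_j)>\sum_j v_j(A_j)$. You instead prove the universal per-arc bound $w(i,j)\le\psi_j-\psi_i$ with $\psi_i=v_i(A_i)-b_i\ge 0$, which telescopes over \emph{every} directed cycle of $D_{\mathbf{A}}$, not just cycles of strict envies. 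This yields the strictly stronger conclusion that no cycle has positive total weight, i.e., envy-freeability of $\mathbf{A}$ --- which is Lemma~\ref{lem:mono-multi-efable}, proved in the paper by a separate and more involved segment decomposition whose claims $F_k\le 0$ are essentially your per-arc inequality applied at segment boundaries. So your potential-function packaging buys Lemmas~\ref{lem:envy-cycle} and~\ref{lem:mono-multi-efable} in one argument (it does not replace Lemma~\ref{lem:weight-length-mono}, whose path bound still needs the EF1 property of the subroutine), and Lemma~\ref{lem:envy-cycle} follows since a cycle of strict envies has all arcs positive and hence positive total weight.

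Two harmless imprecisions you should fix in a final write-up. First, Theorem~\ref{thm:condition-envyfreeable}(c) characterizes the absence of \emph{positive-weight} cycles, which is not ``exactly'' the absence of an envy-cycle (a cycle all of whose arcs are strict envies) but a stronger property; your proof establishes the stronger property, so the lemma still follows, but the equivalence you assert at the outset is not one. Second, the relabeling in Line~\ref{alg:line:temp} is justified because the two-agent output of envy-cycle elimination has no \emph{mutual strict envy} (no envy-cycle of length $2$), which already implies that at most one agent is envious and that both weakly prefer the same bundle; envy-cycle elimination does not in general guarantee the absence of positive-weight $2$-cycles (that would be local envy-freeability), and your argument never needs that stronger claim.
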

\begin{proof}
Consider a directed graph $\bar{G}$ where every agent $i$ corresponds to a vertex and arc $(i,j)$ from $i$ to $j$ exists if and only if agent $i$ envies $j$ in allocation $\mathbf{A}$. We now show there is no cycle in $\bar{G}$. 
Suppose, for the sake of a contradiction, there exists a cycle $\mathcal{C}=\{1,2,\ldots, h,1\}$ in $\bar{G}$. For notational convenience,  let agent $h+1$ be $1$. For any $j \leq h$, agent $j$ envies $j+1$, and thus $j$ and $j+1$ are adjacent vertices in $G$ and $E_{j,j+1}\neq\emptyset$.
Moreover, by Line~\ref{alg:line:temp}, when determining the partial allocation of $E_{j,j+1}$, both agents $j$ and $j+1$ prefer the same bundle between $T^j_{j,j+1}$ and $T^{j+1}_{j,j+1}$. Suppose that both prefer $T^j_{j,j+1}$ than the other bundle.
Since agent $j$ envies $j+1$ in allocation $A$, bundle $T^j_{j,j+1}$ is permanently allocated to agent $j+1$; note in this case, $T^j_{j,j+1}=P^{j+1}_{j,j+1}$.
    By Line~\ref{alg:line:temp}, for any $j\in [h]$, we have 
    $$
    v_j(P^{j+1}_{j,j+1}) - b_j \leq v_{j+1}(P^{j+1}_{j,j+1}) - b_{j+1}.
    $$
    Summing up $j\in [h]$ and eliminating $\sum_{j\in [h]} b_j$ from both sides,
    \begin{equation}\label{eq:ak-ad1}
    \sum_{j=1}^{h} v_j(P^{j+1}_{j,j+1}) \leq \sum_{j=1}^{h} v_{j+1}(P^{j+1}_{j,j+1}), 
    \end{equation}
    where $P^{h+1}_{h,h+1}$ is identical to $P^1_{h,1} \subseteq E_{h,1}$, the bundle permanently allocated to agent 1.
    Moreover, since $\mathcal{C}$ is a cycle in $\bar{G}$, for any $j\in [h]$, we have
    \begin{equation}\label{eq:ak-ad2}
    v_j(A_j)< v_j(A_{j+1}) = v_j(P^{j+1}_{j,j+1}),
    \end{equation}
    where the equality transition is due to the property of orientation.
    Combining inequalities (\ref{eq:ak-ad1}) and (\ref{eq:ak-ad2}), we have
    \begin{align*}
        \sum_{j=1}^h v_j(A_j) &\geq v_1(P^1_{h,1}) + \sum_{j=1}^{h-1} v_{j+1}(P^{j+1}_{j,j+1}) \\
        &\geq v_h(P^1_{h,1}) + \sum_{j=1}^{h-1}v_j(P^{j+1}_{j,j+1}) > \sum_{j=1}^h v_j(A_j),
    \end{align*}
    where the first inequality holds  because for any $j\in [h]$, $P^{j+1}_{j,j+1}\subseteq A_{j+1}$ and valuation functions are monotone; the second  transition is due to inequality (\ref{eq:ak-ad1}); the third inequality transition is due to inequality (\ref{eq:ak-ad2}).
    Thus, by contradiction, there is no cycle $\mathcal{C}$ in $\bar{G}$.
\end{proof}

\begin{lemma}\label{lem:mono-multi-efable}
    Algorithm~\ref{alg:subsidy-monotonoe} returns an envy-freeable allocation $\mathbf{A}$.
\end{lemma}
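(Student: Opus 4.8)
The plan is to apply Theorem~\ref{thm:condition-envyfreeable}: it suffices to show that the envy graph $D_{\mathbf{A}}$ of the output allocation has no positive-weight cycle (equivalently, to exhibit a nonnegative payment vector certifying envy-freeness). I will use the thresholds $b_i$ computed in Algorithm~\ref{alg:subsidy-monotonoe} as a potential, and the whole argument reduces to establishing one inequality, for \emph{every} ordered pair of agents $i,j$ (adjacent or not):
\[
v_i(A_j) - b_i \;\le\; v_j(A_j) - b_j. \qquad (\heartsuit)
\]
Granting $(\heartsuit)$, set $\phi_i := v_i(A_i) - b_i$. For any cycle $(1,2,\ldots,h,1)$ in $D_{\mathbf{A}}$ the arc weight is $w(j,j{+}1) = v_j(A_{j+1}) - v_j(A_j)$; since $\mathbf{A}$ is an orientation we have $v_j(A_{j+1}) = v_j(P^{j+1}_{j,j+1})$, and $(\heartsuit)$ gives $v_j(A_{j+1}) \le v_{j+1}(A_{j+1}) - b_{j+1} + b_j$, whence $w(j,j{+}1) \le \phi_{j+1} - \phi_j$. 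Summing around the cycle telescopes to at most $0$, so no positive-weight cycle exists and $\mathbf{A}$ is envy-freeable. (The same inequality shows the explicit payments $p_i = \max_k \phi_k - \phi_i \ge 0$ certify envy-freeness, which will be convenient for the subsidy bound in Theorem~\ref{thm:multi:monotone:alg:n-1}.)

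Before the case analysis I would record two facts about the thresholds. First, $v_i(A_i) \ge b_i$: if $j^\ast$ attains the maximum defining $b_i$, then the permanently allocated bundle $P^i_{i,j^\ast} \subseteq A_i$ equals one of $T^i_{i,j^\ast}, T^{j^\ast}_{i,j^\ast}$, so by monotonicity $v_i(A_i) \ge v_i(P^i_{i,j^\ast}) \ge \min\{v_i(T^i_{i,j^\ast}), v_i(T^{j^\ast}_{i,j^\ast})\} = b_i$. Second, for every neighbor $j$ of $i$ we have $b_i \ge \min\{v_i(P^i_{i,j}), v_i(P^j_{i,j})\}$, because $\{P^i_{i,j}, P^j_{i,j}\} = \{T^i_{i,j}, T^j_{i,j}\}$ and $b_i$ is the maximum over neighbors of exactly these min terms.

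Then I would verify $(\heartsuit)$ by cases, using $v_i(A_j) = v_i(P^j_{i,j})$ (which is $0$ when $i,j$ are non-adjacent) together with $v_j(A_j) \ge b_j$ from the first fact. In three of the cases the right-hand side is nonnegative while the left-hand side is nonpositive, so $(\heartsuit)$ is immediate: (i) $i,j$ non-adjacent, where $v_i(A_j) - b_i \le 0$; (ii) $i,j$ adjacent and mutually non-envious after the envy-cycle elimination branch, where the second fact gives $b_i \ge v_i(P^j_{i,j}) = v_i(A_j)$; and (iii) the envious branch (Line~\ref{alg:line:temp}) in which $i$ keeps the commonly-preferred bundle, where again the second fact forces $v_i(A_j) - b_i \le 0$. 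The single remaining case is Line~\ref{alg:line:temp} where $i$ receives the \emph{non}-preferred bundle and $j$ receives the commonly-preferred bundle $X = P^j_{i,j}$; here the selection rule of Line~\ref{alg:line:temp} is precisely $v_j(X) - b_j \ge v_i(X) - b_i$, so $v_i(A_j) - b_i = v_i(X) - b_i \le v_j(X) - b_j \le v_j(A_j) - b_j$ by monotonicity, giving $(\heartsuit)$.

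The main obstacle is this last case, and specifically the bookkeeping that aligns the bundle labels ``preferred/non-preferred'' with ``who envies whom,'' so that the tie-breaking comparison of Line~\ref{alg:line:temp} becomes exactly the instance of $(\heartsuit)$ that is needed. This step parallels the summation in Lemma~\ref{lem:envy-cycle} and its inequality~\eqref{eq:ak-ad1}, but the key difference is scope: Lemma~\ref{lem:envy-cycle} only rules out cycles of \emph{strict} envy, whereas here I must cover every ordered pair---including non-adjacent pairs and locally envy-free pairs---so that the potential $\phi$ dominates arbitrary weighted cycles in the complete envy graph $D_{\mathbf{A}}$. Retaining the $b_i$ as a telescoping potential, rather than cancelling them as in \eqref{eq:ak-ad1}, is exactly what upgrades acyclicity of the strict-envy digraph to full envy-freeability.
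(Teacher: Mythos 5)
Your proposal is correct, but it is organized quite differently from the paper's proof, and the difference is substantive. The paper argues cycle by cycle in $D_{\mathbf{A}}$: it first invokes Lemma~\ref{lem:envy-cycle} to ensure every cycle contains an agent who does not envy her out-neighbor, then cuts the cycle at all such agents, bounds the weight of each resulting segment using the Line~\ref{alg:line:temp} comparison, and absorbs the endpoints into correction terms $F_k$ shown to be nonpositive by a further case split. You instead prove one uniform inequality $(\heartsuit)$, namely $v_i(A_j)-b_i\le v_j(A_j)-b_j$ for every ordered pair (adjacent or not, envious or not), and then observe that $\phi_i=v_i(A_i)-b_i$ is a potential dominating all arc weights, so every cycle in $D_{\mathbf{A}}$ telescopes to weight at most zero; Theorem~\ref{thm:condition-envyfreeable}(c) finishes. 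Your case analysis is sound: the two recorded facts $v_i(A_i)\ge b_i$ and $b_i\ge\min\{v_i(P^i_{i,j}),v_i(P^j_{i,j})\}$ are exactly the ingredients the paper also uses, cases (i)--(iii) are sign arguments, and in case (iv) the selection rule of Line~\ref{alg:line:temp} (which is symmetric in the two agents up to tie-breaking) gives precisely $v_j(X)-b_j\ge v_i(X)-b_i$ for the commonly preferred bundle $X=P^j_{i,j}\subseteq A_j$, after which monotonicity closes $(\heartsuit)$. What your route buys: it bypasses Lemma~\ref{lem:envy-cycle} entirely, treats envious, non-envious and non-adjacent pairs in one stroke rather than through cycle structure, and exhibits explicit certifying payments. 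What the paper's route buys: its segmentation and $F_k$ bookkeeping is reused almost verbatim to prove Lemma~\ref{lem:weight-length-mono}, the per-agent bound of $1$ on the path-based payments; by contrast your payments $p_i=\max_k\phi_k-\phi_i$ need not be at most $1$ per agent, so your parenthetical hope that they serve directly for Theorem~\ref{thm:multi:monotone:alg:n-1} is unlikely to pan out---there one should still fall back on the minimal (maximum-path-weight) payments and Lemma~\ref{lem:weight-length-mono}.
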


We now present the upper bound on the subsidy for each agent. The result of
\cite{DBLP:conf/sagt/HalpernS19} states that for any envy-freeable allocation $\mathbf{A}$, a valid subsidy payments which ensures  $\{ \mathbf{A}, \mathbf{p} \}$ envy-free is to set $p_i=\ell^{\max}_{D_{\mathbf{A}}}(i)$, where $\ell^{\max}_{D_{\mathbf{A}}}(i)$ is the weight of the maximum weight directed path starting from vertex $i$ in the envy graph $D_{\mathbf{A}}$. For the allocation returned by Algorithm~\ref{alg:subsidy-monotonoe}, we also set the payments as $p_i=\ell^{\max}_{D_{\mathbf{A}}}(i)$. We now show that, for the returned allocation, weight of the maximum weight directed path is at most 1, which  implies $p_i=\ell^{\max}_{D_{\mathbf{A}}}(i)\leq 1 ,\forall i\in N$.

\begin{lemma}\label{lem:weight-length-mono}
    Let $\cP$ be the maximum weight directed path in the envy graph $D_{\mathbf{A}}$, where $\mathbf{A}$ is the allocation returned by Algorithm~\ref{alg:subsidy-monotonoe}. The weight of $\cP$ is at most 1.
\end{lemma}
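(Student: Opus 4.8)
The plan is to analyze a maximum weight directed path $\cP = a_0 \to a_1 \to \cdots \to a_k$ in the envy graph $D_{\mathbf{A}}$ and show its weight is at most $1$; the case $k=0$ is trivial, so assume $k\ge 1$. First I would record two structural facts per arc $(a_t,a_{t+1})$. Since $a_t$ envies $a_{t+1}$ in $\mathbf{A}$, the pair cannot have been settled by the ``no-envy'' branch (otherwise neither would envy the other), so it was settled by Line~\ref{alg:line:temp}; hence both agents prefer the same bundle of $E_{a_t,a_{t+1}}$, and this commonly preferred bundle is exactly $P^{a_{t+1}}_{a_t,a_{t+1}}$, the one $a_{t+1}$ receives. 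By the orientation property $v_{a_t}(A_{a_{t+1}}) = v_{a_t}(P^{a_{t+1}}_{a_t,a_{t+1}})$, so the arc weight is $w(a_t,a_{t+1}) = v_{a_t}(P^{a_{t+1}}_{a_t,a_{t+1}}) - v_{a_t}(A_{a_t})$.

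I would then introduce two potentials, $\phi_i := v_i(A_i) - b_i$ and $\psi_t := v_{a_t}(P^{a_{t+1}}_{a_t,a_{t+1}}) - b_{a_t}$, so that $w(a_t,a_{t+1}) = \psi_t - \phi_{a_t}$ exactly. One checks $\phi_i \ge 0$, because $A_i$ contains the bundle $i$ keeps on the neighbor $j^\ast$ attaining $b_i$, and that bundle has value at least $b_i$, so by monotonicity $v_i(A_i)\ge b_i$. Next, the tie-breaking rule in Line~\ref{alg:line:temp} hands the commonly preferred bundle to the agent with the larger value-minus-threshold; since $a_{t+1}$ gets it, $v_{a_t}(P^{a_{t+1}}_{a_t,a_{t+1}}) - b_{a_t} \le v_{a_{t+1}}(P^{a_{t+1}}_{a_t,a_{t+1}}) - b_{a_{t+1}}$, and monotonicity (using $P^{a_{t+1}}_{a_t,a_{t+1}}\subseteq A_{a_{t+1}}$) upgrades the right-hand side to $\phi_{a_{t+1}}$. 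Thus $\psi_t \le \phi_{a_{t+1}}$. Summing $w(a_t,a_{t+1}) = \psi_t - \phi_{a_t}$ over the path and applying $\psi_t\le\phi_{a_{t+1}}$ to every arc except the last, the potentials telescope and yield $w(\cP) \le \psi_{k-1} - \phi_{a_0} \le \psi_{k-1}$, reducing the whole claim to the single bound $\psi_{k-1}\le 1$.

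The hard part will be this bound $\psi_t \le 1$, which I expect to be the main obstacle because the naive EF1 estimate covers only one of the two branches of Line~\ref{alg:line:temp}. Write $X$ for the commonly preferred bundle of $E_{a_t,a_{t+1}}$ and $Y$ for the other; note $b_{a_t}\ge v_{a_t}(Y)$ and $b_{a_{t+1}}\ge v_{a_{t+1}}(Y)$ by taking the neighbor $a_{t+1}$ (resp.\ $a_t$) in the maximum defining the threshold. If $a_{t+1}$ already held $X$ temporarily (no reassignment), then $a_t$ held $Y$, and the EF1 guarantee of the pairwise envy-cycle elimination (each agent's envy is at most a single item, hence at most $1$) gives $v_{a_t}(X)-v_{a_t}(Y)\le 1$, so $\psi_t \le v_{a_t}(X)-v_{a_t}(Y)\le 1$. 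The delicate case is when $X$ was \emph{reassigned} to $a_{t+1}$: now $a_t$ held $X$ temporarily, so EF1 for $a_t$ is vacuous and cannot control $v_{a_t}(X)$ directly. Here I would instead exploit the reassignment inequality itself, $\psi_t = v_{a_t}(X)-b_{a_t} < v_{a_{t+1}}(X)-b_{a_{t+1}}$, and bound its right-hand side: since $a_{t+1}$ held $Y$ temporarily, EF1 for $a_{t+1}$ gives $v_{a_{t+1}}(X)-v_{a_{t+1}}(Y)\le 1$, which together with $b_{a_{t+1}}\ge v_{a_{t+1}}(Y)$ yields $v_{a_{t+1}}(X)-b_{a_{t+1}}\le 1$, so $\psi_t < 1$. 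Thus $\psi_t\le 1$ in both branches, and in particular $w(\cP)\le\psi_{k-1}\le 1$. The crux is recognizing that in the switching case it is the reassignment rule, not the EF1 of the envious agent, that tames $\psi_t$.
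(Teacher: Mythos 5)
Your proposal reproduces the paper's key ingredients correctly on \emph{envy arcs}: the inequality $\psi_t\le\phi_{a_{t+1}}$ is exactly the Line~\ref{alg:line:temp} rule plus monotonicity, $\phi_i\ge 0$ is the paper's observation that every agent is guaranteed value $b_i$, and your two-case bound $\psi_{k-1}\le 1$ (EF1 of the envy-cycle elimination in the no-reassignment case, the reassignment inequality plus EF1 of the other agent in the switching case) is precisely the paper's concluding argument. The genuine gap is your very first structural claim: you assert that for \emph{every} arc $(a_t,a_{t+1})$ of $\cP$, agent $a_t$ envies $a_{t+1}$. Nothing justifies this. The envy graph $D_{\mathbf{A}}$ is a complete weighted digraph, and a maximum-weight path may be forced through non-positive arcs (e.g.\ with weights $w(a,b)=w(c,d)=1$, $w(b,c)=-1/2$ and all remaining arcs very negative, the unique maximum path $a\to b\to c\to d$ uses a negative arc); maximality only rules out non-positive arcs at the two ends of the path, not in its interior. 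On a non-envy arc, the pair $a_t,a_{t+1}$ may be non-adjacent in $G$, or settled by the no-envy branch, or settled by Line~\ref{alg:line:temp} with the commonly preferred bundle going to $a_t$ --- in all of these cases your structural facts fail, $\psi_t\le\phi_{a_{t+1}}$ is not established, and the telescoping collapses. Nor can you recover afterwards by splitting a general path into maximal envy segments: bounding each segment by $1$ only bounds the total weight by the number of segments. This is exactly the case the paper spends most of its effort on, by partitioning $\cP$ at the agents $r_1<\dots<r_s$ who do not envy their successors and proving the junction quantities $F_k\le 0$.

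The gap is repairable inside your framework, which is why it is worth naming precisely. One can check that $\psi_t\le\phi_{a_{t+1}}$ holds for \emph{every} arc: in each of the three non-envy situations above one has $b_{a_t}\ge v_{a_t}\bigl(P^{a_{t+1}}_{a_t,a_{t+1}}\bigr)$ (by monotonicity for non-adjacent pairs, and because $P^{a_{t+1}}_{a_t,a_{t+1}}$ is then the bundle $a_t$ likes least among the two temporary bundles in the other cases), hence $\psi_t\le 0\le\phi_{a_{t+1}}$; this case analysis is the content of the paper's $F_k\le 0$ step. Adding this, together with the (easy, but also missing) observation that by maximality the last arc of $\cP$ can be assumed to have positive weight --- otherwise deleting it does not decrease the weight --- your telescoping argument goes through and gives the lemma. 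As written, however, the proof covers only the special case of paths consisting entirely of envy arcs.
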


Combining Lemmas \ref{lem:mono-multi-efable} and \ref{lem:weight-length-mono} gives our main result.
\begin{theorem}
\label{thm:multi:monotone:alg:n-1}
  For multigraphs with monotone agent valuations, there always exists an EF orientation with a total subsidy of at most $n-1$, and it can be computed in polynomial time using value queries.
\end{theorem}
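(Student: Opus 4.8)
The plan is to assemble the three preceding lemmas inside the Halpern--Shah subsidy framework and then sharpen the trivial bound of $n$ down to $n-1$ by exhibiting a single agent who needs no payment. First I would invoke Lemma~\ref{lem:mono-multi-efable} to guarantee that the orientation $\mathbf{A}$ output by Algorithm~\ref{alg:subsidy-monotonoe} is envy-freeable, so that by Theorem~\ref{thm:condition-envyfreeable} its envy graph $D_{\mathbf{A}}$ contains no positive-weight directed cycle. This is exactly the hypothesis under which the Halpern--Shah payment rule $p_i=\ell^{\max}_{D_{\mathbf{A}}}(i)$ is well-defined and yields an envy-free outcome $\{\mathbf{A},\mathbf{p}\}$: for every arc $(i,j)$ we have $\ell^{\max}_{D_{\mathbf{A}}}(i)\ge w(i,j)+\ell^{\max}_{D_{\mathbf{A}}}(j)$, which is precisely the envy-free constraint $v_i(A_i)+p_i\ge v_i(A_j)+p_j$.

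Next I would control the magnitude of these payments. Since the length-zero path is always admissible, $\ell^{\max}_{D_{\mathbf{A}}}(i)\ge 0$ for every $i$, and Lemma~\ref{lem:weight-length-mono} then gives $0\le p_i\le 1$, hence a total subsidy of at most $n$. Closing the additive gap of $1$ is the step I expect to be the crux. I would show that at least one agent can be assigned payment exactly $0$. Concretely, let $\cP$ be a maximum-weight directed path over all of $D_{\mathbf{A}}$ and let $t$ be its terminal vertex; the claim is $\ell^{\max}_{D_{\mathbf{A}}}(t)=0$. If not, there is a positive-weight path $Q$ leaving $t$; taking the last vertex of $Q$ that meets $\cP$ and splicing the tail of $Q$ onto the corresponding prefix of $\cP$ produces a simple path whose weight I can compare to that of $\cP$. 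The key is that the intervening arcs of $\cP$ and $Q$ form a closed walk, which decomposes into simple cycles of non-positive weight (no positive cycle), forcing the spliced path to be strictly heavier than $\cP$ and contradicting maximality. Hence $p_t=0$ and $\sum_i p_i\le (n-1)\cdot 1=n-1$.

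Finally I would argue polynomial-time computability. Algorithm~\ref{alg:subsidy-monotonoe} runs envy-cycle elimination on each of the $O(n^2)$ adjacent pairs together with a constant amount of bookkeeping per pair, all using value queries. The payments $\ell^{\max}_{D_{\mathbf{A}}}(i)$ are longest-path weights, which are tractable here precisely because $D_{\mathbf{A}}$ has no positive-weight cycle: negating all arc weights turns them into shortest-path distances in a graph with no negative cycle, computable by Bellman--Ford. The main obstacle beyond routine bookkeeping is the zero-payment claim; its subtlety is purely combinatorial (ensuring the spliced path remains simple), and the closed-walk decomposition is exactly what lets the no-positive-cycle hypothesis supplied by Lemma~\ref{lem:mono-multi-efable} carry the argument.
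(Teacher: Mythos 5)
Your proposal is correct and follows essentially the same route as the paper: Lemma~\ref{lem:mono-multi-efable} gives envy-freeability, Lemma~\ref{lem:weight-length-mono} caps each Halpern--Shah payment $p_i=\ell^{\max}_{D_{\mathbf{A}}}(i)$ at $1$, one agent with zero payment drops the total from $n$ to $n-1$, and polynomial time follows since envy-cycle elimination is invoked $O(n^2)$ times.

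The single point of divergence is how the zero-payment agent is obtained. The paper dispatches this in one sentence: the envy-freeness constraints involve only the differences $p_i-p_j$, so one can uniformly decrease all payments by $\min_i p_i$, preserving envy-freeness and nonnegativity while zeroing out at least one agent. You instead prove directly that the terminal vertex $t$ of a globally maximum-weight path $\cP$ satisfies $\ell^{\max}_{D_{\mathbf{A}}}(t)=0$, via splicing and a closed-walk decomposition. That argument is sound: the tail of $Q$ beyond its last intersection $u$ with $\cP$ is vertex-disjoint from $\cP$, so the spliced path is simple, and the closed walk formed by the $\cP$-segment from $u$ to $t$ together with the $Q$-prefix from $t$ to $u$ has weight at most $0$ (it decomposes into simple cycles, each non-positive by Lemma~\ref{lem:mono-multi-efable} and Theorem~\ref{thm:condition-envyfreeable}), which combined with $w(Q)>0$ makes the spliced path strictly heavier than $\cP$, a contradiction. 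What your version buys is a stronger structural fact --- the Halpern--Shah payments themselves already contain a zero entry, with no renormalization step --- at the cost of genuine path combinatorics where the paper needs none. Your Bellman--Ford remark for computing the payments (valid precisely because $D_{\mathbf{A}}$ has no positive cycle) is a correct, slightly more explicit version of the paper's runtime claim.
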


\section{Improved Bounds for Additive Valuations}\label{sec::n/2}

In this section, we improve Theorem \ref{thm:multi:monotone:alg:n-1} for additive valuations, where the graph can still contain parallel edges. 
As we will see, breaking $n-\Theta(1)$ is far more technically involved.
We first present an instance for which we can conclude that a subsidy of $n/2$ is necessary to achieve an EF orientation when valuations are additive.

\begin{example}
    \label{example:additive:lower:n/2}
    Consider a simple graph consisting of $n/2$ parallel edges, assuming $n$ is even, where each edge is valued at one by both of its endpoint agents.
    For this instance, a subsidy of $n/2$ is required to achieve an envy-freeable orientation since every edge can be allocated to at most one of the endpoint agents and the other agent needs a subsidy of 1.
\end{example}

One may wonder whether the ``Bounded-Subsidy Algorithm'' in \cite{BDNSV20} can achieve the worst-case bound of $n/2$, we show the answer is ``no'' even for simple graphs. The instance is presented in Appendix. 

We fist introduce several definitions. 
Recall that, given graph $G=(V,E)$, $E_{i,j}$ is the set of items between $i,j \in V$. 
Given an orientation $\mathbf{A}$ of $G$, for any $i,j\in N$, $A^j_i$ refers to $A_i\cap E_{i,j}$ with subscript $i$ representing that these items in the bundle of agent $i$ and with superscript $j$ and subscript $i$ representing that these items are from $E_{i,j}$.
We say an orientation $\mathbf{A}$ is \emph{locally EFable} when restricting to $E_{i,j}$ or regarding $i,j$, if and only if $v_i(A^j_i)+v_j(A^i_j) \geq v_i(A^i_j)+v_j(A^j_i)$.
We also use the positive part operator defined as follows: for any $d\in \mathbb{R}$,
    $
    d^+ = \max(0,d).
    $

The main result of this section is a polynomial time algorithm that computes an EFable orientation requiring a total subsidy of at most $n/2$ for multigraphs.

\begin{theorem} 
\label{thm:multi:additive:alg:n/2}
For multigraphs and additive agent valuations, there exists an EF orientation with a subsidy of at most $n/2$, and it can be computed in polynomial time.
\end{theorem}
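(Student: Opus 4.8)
The plan is to exploit a simple but powerful consequence of the scaling convention: under additive valuations the maximum marginal value of any single item is exactly $1$, so for every agent $i$ the item $r_i := \arg\max_{e \in E_i} v_i(e)$ -- which I call the \emph{reserve edge} of $i$ -- satisfies $v_i(r_i)=1$. I would first build the \emph{reserve graph} $R$ on vertex set $N$ by drawing, for each agent $i$, one arc from $i$ to the other endpoint of $r_i$. Since every vertex has out-degree exactly one, $R$ is a functional digraph, and each of its weakly connected components contains exactly one directed cycle (of length $\ge 2$, as there are no loops), with in-trees feeding into it. This decomposition of $N$ into components is what the entire subsidy accounting will hang on, since the components partition all $n$ agents.

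The core structural observation I would prove first is a \emph{buffer lemma}. Allocate every edge-set $E_{i,j}$ by $\RR(i,j,E_{i,j})$, letting the agent whose reserve edge lies in $E_{i,j}$ move first (for a competing pair, where both reserve edges lie in $E_{i,j}$, pick either order). Round-robin is EF1 for two additive agents, so the per-pair imbalance $v_i(A^i_j)-v_i(A^j_i)$ is at most the value of a single item, hence at most $1$; moreover the first mover is fully envy-free within the pair. Now suppose $i$ keeps her reserve edge, so $v_i(A_i)\ge v_i(r_i)=1$, and suppose $r_i\notin E_{i,j}$; then $r_i$ is disjoint from $A^j_i$, so additivity gives $v_i(A_i)\ge v_i(A^j_i)+1 \ge v_i(A^i_j) = v_i(A_j)$, and $i$ does not envy $j$. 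In words: an agent envies only along her own reserve arc, and the first mover of a pair is never envious. Consequently the envy graph of the resulting orientation is a subgraph of $R$, and within each pair only the \emph{second mover of a competing pair} can be envious.

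This immediately settles the easy components and the envy-freeability. If a component's unique cycle has length $\ge 3$, every cycle agent is the sole reserve-claimant on her reserve arc, hence a first mover who keeps her reserve edge; by the buffer lemma she envies no one, and the in-tree agents (who also keep their reserve edges) likewise envy no one, so the whole component is envy-free and needs $0$ subsidy. The only source of envy is a \emph{competing pair}, i.e.\ a length-$2$ cycle $\{i,j\}$ with both reserve edges in $E_{i,j}$, where the second mover may envy the first by at most $1$; I would place a single dollar on that second mover and argue it makes the whole reachable component envy-free. Envy-freeability is automatic because any directed cycle in the envy graph would be a cycle of the functional graph $R$, yet each such arc leaves a reserve-claimant/first mover who is never envious -- a contradiction -- so by Theorem~\ref{thm:condition-envyfreeable} the orientation is envy-freeable. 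Since each competing pair occupies two distinct agents, at most $n/2$ components are of this type, giving total subsidy at most $n/2$.

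The delicate point, and what I expect to be the main obstacle, is the degenerate competing case $r_i=r_j$, where $i$ and $j$ share a single most-valued edge (precisely the lower-bound instance of Example~\ref{example:additive:lower:n/2}). There the second mover cannot recover a full unit of value from $E_{i,j}$, so her \emph{buffer} drops below $1$ and the buffer lemma no longer shields her on her other, cross-component pairs; she may then envy several neighbors, and even another reserve-losing agent, threatening an envy path of weight exceeding $1$ and a per-component charge above a single dollar. To control this I would allocate the non-reserve edges of each pair not by plain round-robin but by a \emph{threshold rule} in the spirit of Algorithm~\ref{alg:subsidy-monotonoe}, steering surplus edges toward whichever endpoint currently has the smaller buffer so that every agent retains effective value close to $1$ and envy still flows only into non-envious first-mover vertices. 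Making this precise -- showing the threshold allocation keeps the envy graph acyclic, bounds every maximum-weight envy path by $1$, and confines the charge of each competing component to a single dollar even in the presence of chains of competing pairs -- is the technically heaviest part, and is where the bulk of the argument (and the tightness against Example~\ref{example:additive:lower:n/2}) will lie.
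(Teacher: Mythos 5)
Your skeleton --- the reserve graph (identical to the paper's $G'$), claimant-first round-robin on each pair, and the zero-subsidy argument for components whose unique cycle has length at least $3$ --- coincides with the paper's Phase~1. But the step you treat as settled, ``place a single dollar on the second mover of a competing pair and argue it makes the whole reachable component envy-free,'' is false for the allocation your rule produces, and not only in the degenerate case $r_i=r_j$ that you flag. Take a star component: a non-degenerate competing pair $\{f,g\}$ with $E_{f,g}=\{a,b,c\}$, each edge valued $1$ by both, and leaves $h_1,\dots,h_k$ with $E_{f,h_i}=\{r_{h_i},e_i'\}$, $v_{h_i}(r_{h_i})=1$, $v_{h_i}(e_i')=1-\delta$, and $v_f\equiv 0$ on these edges. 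Claimant-first $\RR$ gives $g$ two of $\{a,b,c\}$ and $f$ one, and gives each leaf her reserve while forcing $e_i'$ onto $f$; so $f$ envies $g$ by $1$ while each leaf's margin over $f$ is only $\delta$. Paying $f$ one dollar makes every leaf envy $f$ by $1-\delta$, and EF with payments (which must hold for \emph{all} pairs, $v_i(A_i)+p_i\ge v_i(A_j)+p_j$) then forces a total subsidy of about $k+1$ on a component of $k+2$ agents, exceeding $n/2$ already for $k\ge 1$. This cascading --- a subsidized agent becoming envied by previously non-envious neighbors with thin margins --- is the real difficulty, and it is exactly why the paper's $\STWO$ abandons $\RR$ for ``poor'' successors and allocates those pairs by $\MAXU$ (the leaf gets both edges, restoring a safe margin), then controls the whole component with the potentials $w_i,t_i$, the propagation inequality $t_i+\sum_{j\in \Ch(i)}w_j\le w_i$ (Lemma~\ref{lem::t-sum-w-bound}), and the per-component bound $\sum_{j\in N_C}t_j\le 1$ (Lemma~\ref{lem::upper-bound-of-t}).

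Two of your other claims are also wrong as stated. ``The envy graph is a subgraph of $R$'' holds only for agents who keep their reserve edge; a reserve-loser can envy across any of her pairs. And envy-freeability is not automatic: plain $\RR$ need not maximize a pair's total value, so positive-weight $2$-cycles can appear in the envy graph --- for instance, a degenerate loser $f$ (value $0$) facing a first mover $f'$ who takes an edge $e'$ with $v_f(e')=1-\epsilon$ but $v_{f'}(e')=\tfrac12$ yields cycle weight $(1-\epsilon)-\tfrac12>0$, so by Theorem~\ref{thm:condition-envyfreeable} no payment vector works at all. The paper inserts explicit swaps enforcing local envy-freeability (Lines~\ref{step:mainAlg-f-g-swap} and~\ref{step:mainAlg-last-swap} of Algorithm~\ref{alg:multigraph-additive}, and Line~\ref{step:sub2-Q4-swap} of $\STWO$) and adjusts payments in Phase~2, with global EF verified in Lemmas~\ref{lem::efable-phase1} and~\ref{lem::final-efable}. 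In short, what you defer to ``a threshold rule in the spirit of Algorithm~\ref{alg:subsidy-monotonoe}'' is not a loose end but the theorem's entire technical content, and the concrete commitments you do make are refuted by small examples; the proposal is therefore incomplete.
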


We now introduce the idea of our algorithm. There are two phases: in Phase 1, we allocate items between adjacent agents in the \emph{reserve graph} $G'$ (will be introduced below), and in Phase 2, we allocate items between non-adjacent agents in $G'$.
The $G'$, a directed graph, is constructed as follows; Each agent forms a vertex in $G'$; For arcs, let each agent $i$ arbitrarily pick an item $(i,j)\in E$ with $v_i(\{(i,j)\})=1$ and add arc $(j,i)$ to $G'$.
Then $G'$ is formed by $k$ connected components $C_1,\ldots,C_k$.

$G'$ is essentially a subgraph of $G$ where $E'\subseteq E$ and every agent $i$ claims exactly one favorite item $(i,j)\in E$ incident to her as the edges in $E'$. By our normalization, $i$ must have value 1 for this item. To indicate the claimant of edges in $E'$, we add directions to edges in $E'$. If agent $i$ claims $(i,j)$ as the favorite, then the edge goes from $j$ to $i$. That is the edges in $E'$ pointing to their claimants. 
Note that if there are two adjacent agents $i$ and $j$ claim the same edge $(i,j)$, then there are two directed edges between vertices $i,j$, one from $i$ to $j$ and the other one from $j$ to $i$.

In Phase 1, we allocate items between adjacent agents in each component $C$. Items between two agents are allocated based on two routines, $\RR(j,i,E_{i,j})$ and $\MAXU(j,i,E_{i,j})$.
The $\RR(j,i,E_{i,j})$ lets agents $i,j$, in turn, pick their most preferred one from the remaining items and agent $j$ picks first; the order is crucial.
The $\MAXU(j,i,E_{i,j})$ allocates each item $e\in E_{i,j}$ to the agent having a larger value for $e$ and if there is a tie, allocate that item to agent $j$.

We will distinguish whether component $C$ contains a cycle with a length of 3 or more. The case where $C$ has such a cycle is relatively easy to handle and we use $\RR$ to allocate items between any pair of adjacent agents in $C$, with a specific picking order determined by $G'$.
For the other case where $C$ has a 2-cycle, composed by vertices $f,g$, we first form out-trees of $f$ and of $g$ (see Figure \ref{fig:2-cycle} for an illustration) and will proceed with the allocation of each out-tree from the corresponding root to all depth 1 vertices and then from depth 1 vertices to all depth 2 vertices, until to the leaves.  
Depending on the partial allocation of $E_{f,g}$, we have different allocation rules.
If $E_{f,g}$ is allocated such that $i,j$ do not envy each other, then we use subroutine $\SONE$ to allocate items in both out-trees.
For the other case, we use subroutine $\STWO$ to allocate the out-tree of the envious agent between $f,g$ while $\SONE$ for that of the other.
Informally, the $\SONE$ is almost identical to $\RR$. The $\STWO$ will classify the successors of the current vertex $i$ and implement either $\RR$ or $\MAXU$ based on the specific classification.
\begin{figure}[H]
       \centering
       \includegraphics[width=0.7\linewidth]{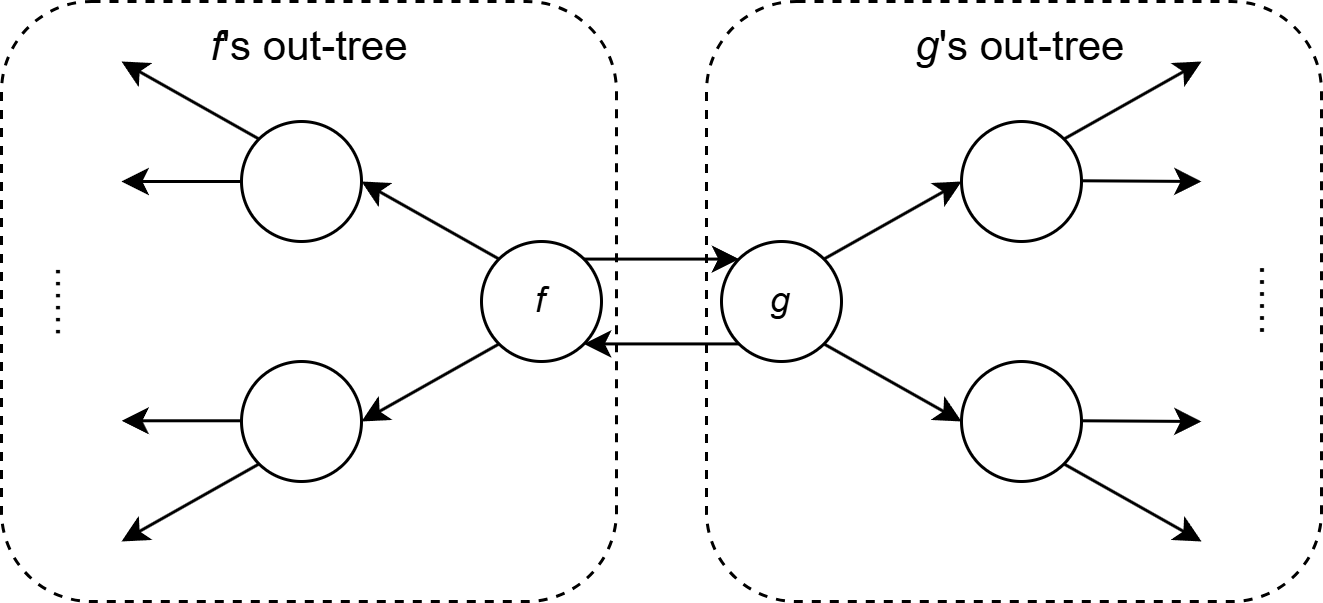}
       \caption{Illustration of the Component with a 2-cycle} 
       \label{fig:2-cycle}
   \end{figure}

\begin{algorithm}[ht!]
	\caption{$\SONE(i,C)$}
	\label{alg:alloc1}
	\begin{algorithmic}[1]
		\REQUIRE Agent $i$ and component $C$.
		\ENSURE An allocation of items belong to the rooted tree starting with $i$. 
            \STATE $\Ch(i)\gets$ the set of $i$'s out-neighbors in $C$, excluding $s$ (if applicable).
            \FOR{every $j \in \Ch(i)$}
                \STATE Let $\RR(j,i,E_{i,j}) = (B_j,B_i)$ and allocate bundles respectively $B_j$ and $B_i$ to agents $j$ and $i$; 
            \ENDFOR
	\end{algorithmic}
    \end{algorithm}

    \begin{algorithm}[H]
	\caption{$\STWO(i,C)$}
	\label{alg:alloc2}
	\begin{algorithmic}[1]
		\REQUIRE Agent $i$ and component $C$.
		\ENSURE An allocation of items belonging to the rooted tree starting with $i$.
            \STATE Initialize $R,Q_1,Q'_1,Q_2,Q_3,Q_4,Q_5 \gets \emptyset$ and $k\gets 0$.
            \STATE $s \gets$ the in-neighbor of $i$ in $C$.
            \STATE $\Ch(i)\gets$ the set of $i$'s out-neighbor in $C$, excluding $s$ (if applicable).
            \STATE For agent $i$, define {$w_i = ( t_s + \dif{i}{s} )^+$}.
            \label{step:sub2-defining-w}
            \FOR{every $j \in \Ch(i)$}
                \STATE Let $\RR(j,i,E_{i,j})=(T^i_j,T^j_i)$ where $T^j_i$ and $T^i_j$ respectively are the virtual bundles corresponding to agent $i$ and $j$.
                {When the underlying agent has a same value for multiple items, the agent always chooses the item that exists in $G'$.}
            \ENDFOR
            \STATE Let $T_s^i\gets A_s^i$ and $k \gets \arg\max_{j\in \Ch(i) \cup \{s\}}(v_i(T^i_j)+ t_j)$\label{step:sub2-defining-k}
            {\color{gray} \% Here $t_j=0$ holds for all $j\in \Ch(i)$.}
            \FOR{$j\in \Ch(i)$ with {$ v_i(\E{i}{j}) < v_i(T^i_k)+t_k$}}\label{step:sub2-checking-rr-k}
                \STATE Let $\MAXU(j,i,E_{i,j})=(S^i_j,S^j_i)$  where $S^j_i$ and $S^i_j$ respectively are another pair of virtual bundles corresponding to agents $i$ and $j$.
                \STATE $R \gets R\cup\{j\}$
            \ENDFOR
            \FOR{$j\in R$ with $0\leq v_j(S^j_i)-v_j(S^i_j)\leq w_i$ }\label{step:sub2-Q1-for-conditon}
                \IF{$Q_1 = \emptyset$}
                    \STATE $Q_1 \gets Q_1 \cup \{j\}$, $A_j \gets A_j\cup S^i_j$, $A_i\gets A_i\cup S^j_i$ and $E\gets E\setminus (S^i_j\cup S^j_i)$ 
                \ELSE
                    \STATE $Q'_1\gets Q'_1\cup \{j\}, A_j \gets A_j\cup T^i_j$, $A_i\gets A_i\cup T^j_i$ and $E\gets E\setminus (T^i_j\cup T^j_i)$ 
                \ENDIF
            \ENDFOR
            \FOR{$j\in R$ with $w_i<v_j(S^j_i)-v_j(S^i_j)$}\label{step:sub2-Q2-for-conditon}
            \STATE $Q_2\gets Q_2\cup \{j\}$, $A_j \gets A_j\cup T^i_j$, $A_i\gets A_i\cup T^j_i$ and $E\gets E\setminus (T^i_j\cup T^j_i)$
            \ENDFOR
            \FOR{$j\in R$ but does not satisfy the conditions of Line \ref{step:sub2-Q1-for-conditon} or \ref{step:sub2-Q2-for-conditon}}
                \STATE $Q_3 \gets Q_3 \cup \{j\}$, $A_j \gets A_j\cup S^i_j$, $A_i\gets A_i\cup S^j_i$ and $E\gets E\setminus (S^i_j\cup S^j_i)$. 
            \ENDFOR
            \STATE \COMMENT{Up to here $R=Q'_1\cup Q_1\cup Q_2 \cup Q_3$ }
            \FOR{$j\notin R$}
            \STATE $A_j \gets A_j\cup T^i_j$, $A_i\gets A_i\cup T^j_i$ and $E\gets E\setminus (T^i_j\cup T^j_i)$. 
            \ENDFOR
            \STATE $S \gets \{j\in \Ch(i) \textnormal{ and } j\notin R: v_i(A_j^i)+v_j(A^j_i) > v_i(A_i)+v_j(A_j^i)\}$. \label{step:sub2-defining-S}
            \IF{$S \neq \emptyset$}
                \STATE $q \gets \arg\max_{j\in S} \ao{i}{j}$ and break ties arbitrarily. Let $Q_4 \gets \{q\}$.
                \STATE  \emph{Swap} $A_i^q$ and $A_q^i$ and update $A_i, A_q$. \label{step:sub2-Q4-swap}
            \ENDIF
            \FOR{$j \in \Ch(i): j\notin Q'_1\cup  Q_1\cup Q_2 \cup Q_3 \cup Q_4$}
                \STATE $Q_5 \gets Q_5 \cup \{j\}$.
            \ENDFOR
            \STATE {$t_i \gets (v_i(A^i_k)+t_k - \ax{i})^+$}\label{step:sub2-update-t}
	\end{algorithmic}
    \end{algorithm}

    We note that for agent $i$, $w_i$ (Line \ref{step:sub2-defining-w} of $\STWO$) is determined before allocating the bundle between $i$ and $i$'s out-neighbor while $t_i$ is determined after all items between $i$ and $i$'s out-neighbor are allocated.
    The high-level idea of $w_i$ is for quantifying the $i$'s out-neighbor into different $Q_{\ell}$'s and for $t_i$, it is used to determine $w_{\ell}$ for every out-neighbor $\ell$ of $i$.
    Indeed, as we will see, $w_i,t_i$ are the upper bounds of the minimum subsidy of envy-freeable allocations for each agent $i$. Moreover, we remark that both $w_i$ and $t_i$ will be updated at most once.

In Phase 2, we allocate items between agents non-adjacent in $G'$. The allocation rule is $\RR$ and we will swap bundles whenever necessary. The formal description of the algorithm is introduced in Algorithm \ref{alg:multigraph-additive}.
The parameters $w_i,t_i$ are used to classify successors in $\STWO$ and we remark that they also act as upper bounds of the final payment $p_i$ for agent $i$, which we will prove below.

        
    \begin{algorithm}[ht!]
        \caption{Multigraph Orientation for Additive Valuations}
	\label{alg:multigraph-additive}
	\begin{algorithmic}[1]
		\REQUIRE An instance $G=(N, E)$ and the valuation functions $\{ v_i \}'s$.
		\ENSURE An orientation $\mathbf{A}=(A_1,\ldots, A_n)$.
            \STATE Initialize $A_i\gets \emptyset, w_i,t_i,p_i \gets 0 $ for all $i\in N$.
            \STATE For each agent $i\in N$, arbitrarily choose an edge $e=(i,j)\in E$ such that $v_i(e)=1$. We construct the \emph{reserve} multigraph $G'=(N,E')$ 
            where vertices are $N$ and there exists an arc $(j,i)\in E'$ from $j$ to $i$ if $e=(i,j)$ was chosen by agent $i$; note if an edge is chosen by both agents of $i$ and $j$, we add both arcs of $(j,i)$ and $(i,j)$ to $G'$.\label{step:mainAlg-preprocess}
            \STATE Let $\mathcal{C}=\{C_1,\ldots,C_k\}$ be the $k$ connected components of $G'$.
            
            {\color{gray} \% Phases 1: allocate items between adjacent agents in $G'$}
            \FOR{each component $C \in \mathcal{C}$}
                \IF{$C$ has a cycle of length 3 or more}
                    \FOR{each arc $(i,j)$ in $C$}
                        \STATE Let $\RR(j,i, E_{i,j}) = (B_j,B_i)$ and respectively allocate $B_j$ and $B_i$ to agents $j$ and $i$.
                    \ENDFOR
                \ELSE
                    \STATE Suppose agents $f,g$ are the (only) two agents
                    such that arcs $(f,g)$ and $(g,f)$ exist in $C$ and then execute $\RR(g,f,E_{f,g})$. 
                    If the resulting partial allocation is not locally EFable when restricting to $E_{f,g}$,
                    we \emph{swap} their partial bundles restricted to $E_{f,g}$.
                \label{step:mainAlg-f-g-swap}
                    \IF{both $f$ and $g$ are envy-free when restricting to the allocation of $E_{f,g}$}
                    \STATE Start from $f$ and form a out-tree (excluding $g$) with root $f$, and execute $\SONE(\cdot)$ from $f$ to the leaves.
                    \STATE Start from $g$ and form a out-tree (excluding $f$) with root $g$, and execute $\SONE(\cdot)$ from $g$ to the leaves.
                    \ELSE
                    \STATE Suppose that agent $g$ \emph{does not envy} $f$ when restricting to the allocation of $E_{f,g}$; {\color{gray} \% Local EFability guarantees $g$}
                    \STATE Start from $f$ and form a out-tree (excluding $g$) with root $f$, and execute $\STWO(\cdot)$ from $f$ to the leaves.\label{step:mainAlg-Sub2-for-f}
                    \STATE Start from $g$ and form a out-tree (excluding $g$) with root $g$, and execute $\SONE(\cdot)$ from $g$ to the leaves.
                    \ENDIF
                \ENDIF
            \ENDFOR 
            
            {\color{gray} \% Phases 2: allocate items between non-adjacent agents in $G'$}
            \STATE For every $i\in N$, $p_i\gets t_i$.
            \FOR{each set of edges $\E{i}{j}$ not allocated}\label{step:mainAlg-start-Phase2}
                \STATE Execute $\RR(j,i,E_{i,j})$. If the resulting partial allocation is not locally envy-freeable when restricting to $E_{i,j}$, swap their partial bundles restricted to $E_{i,j}$.\label{step:mainAlg-last-swap}
                \STATE Let $j$ be the agent such that $\ndif{j}{i} \geq 0$. {\color{gray} \% Local EFability guarantees $j$}
                \IF{$\ao{i}{j}+p_j > \ax{i}+p_i $}\label{step:mainAlg-last-if}
                    \STATE $p_j \gets (\ax{i}+p_i - \ao{i}{j})^+$.\label{step:mainAlg-last-adjust-p}
                \ENDIF
            \ENDFOR
	\end{algorithmic}
    \end{algorithm}

    Before proving the main results, we first present observations and propositions regarding the algorithm and subroutines.
   
    \begin{observation}
        For every pair of agents $i$ and $j$, if they are adjacent in $G$, then bundles $A_i^j$ and $A_j^i$ are determined by either $\RR$ or $\MAXU$.
    \end{observation}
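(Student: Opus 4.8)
The plan is to prove the statement by a direct, exhaustive inspection of Algorithm~\ref{alg:multigraph-additive} together with its subroutines $\SONE$ and $\STWO$, tracking every line at which items of some $E_{i,j}$ are committed to agents. First I would split the pairs of agents adjacent in $G$ into two groups: those joined by an arc in the reserve graph $G'$, whose shared items are allocated during Phase~1, and those with $E_{i,j}\neq\emptyset$ but no arc in $G'$, whose items are allocated during Phase~2. Because $G'\subseteq G$ and Phase~2 processes exactly those edge sets not yet allocated, every $E_{i,j}$ with $i,j$ adjacent in $G$ is handled exactly once, which settles completeness.

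For Phase~2 the claim is immediate: each unallocated $E_{i,j}$ is committed by a single call $\RR(j,i,E_{i,j})$ in Line~\ref{step:mainAlg-last-swap} (possibly followed by a swap, discussed below). For Phase~1 I would argue per component. Every vertex of $G'$ has in-degree exactly one, so each component $C$ contains a unique directed cycle of length at least two, and the two branches of the algorithm are therefore exhaustive. If this cycle has length at least three, every arc $(i,j)$ of $C$ is committed directly by $\RR(j,i,E_{i,j})$. If it is a $2$-cycle on $\{f,g\}$, the set $E_{f,g}$ is committed by $\RR(g,f,E_{f,g})$, and every remaining arc is a tree arc of exactly one of the two out-trees rooted at $f$ and $g$; these are processed by $\SONE$ or $\STWO$. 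In $\SONE$ each child $j\in\Ch(i)$ receives the bundles of $\RR(j,i,E_{i,j})$, so the claim holds there as well.

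The core of the verification is $\STWO$, where the partition of $E_{i,j}$ committed for a child $j$ depends on the class $Q_1,Q_1',Q_2,Q_3,Q_4,Q_5$ into which $j$ is placed. I would enumerate the classes and record which of the two precomputed partitions is committed: classes $Q_1'$ and $Q_2$, together with every $j\notin R$, receive the $\RR$ partition $(T^i_j,T^j_i)$, whereas $Q_1$ and $Q_3$ receive the $\MAXU$ partition $(S^i_j,S^j_i)$. For the single child placed in $Q_4$, the committed bundles are those of $\RR$ (this $j$ lies outside $R$, so it first received $T^i_j,T^j_i$), and the swap in Line~\ref{step:sub2-Q4-swap} only exchanges which agent holds which of these two bundles, leaving the underlying partition of $E_{i,j}$ an $\RR$ output; the label $Q_5$ introduces no fresh allocation. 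Hence every branch of $\STWO$ commits either an $\RR$ or a $\MAXU$ partition.

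I expect the only delicate point to be precisely this enumeration of $\STWO$: one must check that the conditions of Lines~\ref{step:sub2-Q1-for-conditon} and~\ref{step:sub2-Q2-for-conditon} and their complements route each $j\in R$ to exactly one of the $T$- or $S$-partitions, and that $Q_4$ together with its subsequent swap does not produce a partition outside these. The swaps occurring in Phase~1 on $E_{f,g}$, in Phase~2, and in $Q_4$ are all handled uniformly by the observation that exchanging the two bundles of a partition leaves both of them outputs of the routine that produced that partition; this preserves the property that $A_i^j$ and $A_j^i$ are determined by $\RR$ or $\MAXU$.
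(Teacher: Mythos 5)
Your proposal is correct and follows essentially the same approach as the paper, which simply notes that the observation follows directly from inspecting how each $E_{i,j}$ is allocated in Algorithm~\ref{alg:multigraph-additive} and its subroutines, with the swaps (Line~\ref{step:sub2-Q4-swap} of $\STWO$ and Lines~\ref{step:mainAlg-f-g-swap} and~\ref{step:mainAlg-last-swap} of the main algorithm) only exchanging the two bundles of an already-produced partition. Your write-up is a fully expanded version of that same inspection, including the correct case enumeration of $Q_1,Q_1',Q_2,Q_3,Q_4,Q_5$ and the in-degree-one argument showing the two Phase~1 branches are exhaustive.
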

    This observation directly follows from the way of allocating $E_{i,j}$ in the algorithm. We remark that necessary swaps such as Line \ref{step:sub2-Q4-swap} in $\STWO$ and Lines \ref{step:mainAlg-f-g-swap} and \ref{step:mainAlg-last-swap} in Algorithm \ref{alg:multigraph-additive}, are made to ensure local EFablility. 

    \begin{proposition}\label{prop::rr}
        Suppose $\RR(j,i,E_{i,j})=(B_j,B_i)$, then
        $v_j(B_j)\geq v_j(B_i)$ and $v_i(B_j)-v_i(B_i)\leq 1$ hold.
        If respectively assigning $B_j$ and $B_i$ to agents $j$ and $i$ does not result in a locally envy-freeable allocation, then
        $v_j(B_j)-v_j(B_i)\leq 1$.
    \end{proposition}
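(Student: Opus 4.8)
The plan is to analyze the picking sequence of $\RR(j,i,E_{i,j})$ directly via the standard round-robin pairing argument, and then to obtain the third claim as a one-line consequence of the second. Since $j$ is the first picker, I would write the selections in order as $g_1,h_1,g_2,h_2,\ldots$, where $g_k$ is $j$'s $k$-th pick and $h_k$ is $i$'s $k$-th pick; if $j$ picks $s$ items and $i$ picks $t$ items then $s\in\{t,t+1\}$, with $B_j=\{g_1,\ldots,g_s\}$ and $B_i=\{h_1,\ldots,h_t\}$. Throughout I use additivity of $v_i,v_j$ together with the normalization that every single item has value at most $1$ to each agent.

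First I would prove $v_j(B_j)\ge v_j(B_i)$. The key observation is that at the moment $j$ makes its $k$-th pick $g_k$ (for $k\le t$), the item $h_k$ has not yet been chosen, so it is still available; since $j$ selects its favorite remaining item, $v_j(g_k)\ge v_j(h_k)$. Summing over $k=1,\ldots,t$, and discarding the nonnegative extra term $v_j(g_{t+1})$ in the case $s=t+1$, additivity yields $v_j(B_j)=\sum_k v_j(g_k)\ge\sum_{k=1}^{t}v_j(h_k)=v_j(B_i)$.

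Next, for the bound $v_i(B_j)-v_i(B_i)\le 1$, I would use the complementary pairing: when $i$ makes its $k$-th pick $h_k$, the following $j$-pick $g_{k+1}$ is still available (it is chosen only in the next round), so $v_i(h_k)\ge v_i(g_{k+1})$ whenever $g_{k+1}$ exists. Summing telescopes $B_j$ against $B_i$ shifted by one index, leaving only $g_1$ uncovered, and gives $v_i(B_i)\ge v_i(B_j)-v_i(g_1)$; since $v_i(g_1)\le 1$ by the normalization, we conclude $v_i(B_j)-v_i(B_i)\le v_i(g_1)\le 1$. I would treat the two parity cases $s=t$ and $s=t+1$ uniformly by noting that the pairing simply uses whichever indices $k$ satisfy $k+1\le s$, with any leftover $i$-pick contributing nonnegatively.

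Finally, the third claim follows immediately from the two-agent form of the envy-freeability characterization in Theorem~\ref{thm:condition-envyfreeable}: assigning $B_j$ to $j$ and $B_i$ to $i$ is locally envy-freeable if and only if $v_i(B_i)+v_j(B_j)\ge v_i(B_j)+v_j(B_i)$. If this allocation is \emph{not} locally envy-freeable, the reverse strict inequality holds, which rearranges to $v_j(B_j)-v_j(B_i)<v_i(B_j)-v_i(B_i)$; the right-hand side is at most $1$ by the previous step, so $v_j(B_j)-v_j(B_i)<1$. I expect the only delicate point to be the bookkeeping in the pairing arguments—correctly identifying which item remains available at each pick and handling the odd/even item-count cases—while the third claim is essentially free once the second is established.
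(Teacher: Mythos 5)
Your proof is correct and takes essentially the same approach as the paper's: first-picker non-envy gives $v_j(B_j)\geq v_j(B_i)$, the EF1 property of round-robin plus the unit bound on single-item values gives $v_i(B_j)-v_i(B_i)\leq 1$, and negating the local envy-freeability inequality and rearranging yields the third claim. The only difference is that you prove the two standard round-robin facts explicitly via the pairing argument, whereas the paper simply invokes them as known properties of the subroutine.
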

    
    The proposition indicates that once the allocation of $E_{i,j}$ is determined by $\RR$ in the algorithm, the envy of any of these two agents towards the other is at most 1. 
    
    \begin{proposition}\label{prop::greedy}
        Suppose $\MAXU(j,i,E_{i,j})=(B_j,B_i)$ and there exists an item $e$ valued at 1 by agent $j$, then $v_j(B_j)\geq 1$.
    \end{proposition}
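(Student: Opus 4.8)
The plan is to reduce the statement to a single claim: the particular item $e$ with $v_j(e)=1$ must land in agent $j$'s bundle $B_j$. Once that is established, additivity and non-negativity of $v_j$ finish the argument immediately, since $v_j(B_j)=\sum_{e'\in B_j}v_j(e')\ge v_j(e)=1$.

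The only ingredient I would invoke is the global normalization from the preliminaries, which fixes the maximum marginal value of any item at $1$; for additive valuations this says $v_a(e')\le 1$ for every agent $a$ and every item $e'$. First I would apply this to agent $i$ on the item $e$, obtaining $v_i(e)\le 1=v_j(e)$, so $i$ cannot strictly outbid $j$ on $e$. Next I would unpack the allocation rule of $\MAXU(j,i,\E{i}{j})$: each item goes to the agent with the strictly larger value, with ties broken in favour of the first argument, namely agent $j$. Since $v_i(e)\le v_j(e)$, there are only two cases. Either $v_i(e)<v_j(e)$, and $j$ wins $e$ outright; or $v_i(e)=v_j(e)$, and the tie-break again assigns $e$ to $j$. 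In both cases $e\in B_j$, and the computation above gives $v_j(B_j)\ge 1$.

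I expect no real obstacle in this proposition; it is essentially a direct consequence of the normalization together with the tie-breaking convention. The two points deserving care are (i) invoking the normalization in the correct direction, i.e.\ using it to bound $v_i(e)$ from above by $1$ rather than bounding $v_j(e)$, and (ii) being explicit that the tie-breaking rule of $\MAXU(j,i,\E{i}{j})$ favours agent $j$, which is precisely what forces the value-$1$ item to be awarded to $j$ even in the boundary situation where agent $i$ also values it at exactly $1$.
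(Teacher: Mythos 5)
Your proof is correct and follows essentially the same route as the paper's: the normalization forces $v_i(e)\le 1=v_j(e)$, so by the allocation rule of $\MAXU(j,i,E_{i,j})$ with ties broken toward $j$, item $e$ lands in $B_j$, giving $v_j(B_j)\ge v_j(e)=1$. The paper's proof is a one-liner that leaves the normalization step implicit; you merely make it explicit.
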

    
    The proposition states that if the allocation of $E_{i,j}$ is determined by $\MAXU$ and the advantage agent has a value of 1 for some item in $E_{i,j}$, then the advantage agent receives a value of at least 1 from items $E_{i,j}$.
    The following proposition indicates the bounds of $w_i$ and $t_i$ (if applicable).
    \begin{proposition} \label{prop::bound}
    For agent $i$, if the items between $i$ and $i$'s direct successors in the connected component $C$ of $G'$ are allocated by $\STWO$, then in the same component, $i$ has a direct predecessor $s$.
    Moreover, $v_i(A^s_i)+w_i\geq 1$ and after updating $t_i$ in Line \ref{step:sub2-update-t} of $\STWO$, $t_i\leq w_i$.
    \end{proposition}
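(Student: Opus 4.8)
The plan is to verify the three assertions separately, deriving the first from the combinatorics of the reserve graph $G'$, proving the second from a one-line partition argument, and then feeding the second into the third. I would first record the governing structural fact: since every agent claims exactly one incident edge and the arc is oriented \emph{towards} the claimant, every vertex of $G'$ has in-degree exactly $1$. Consequently each connected component $C$ contains a unique directed cycle, with all remaining arcs forming out-trees pointing away from it. The routine $\STWO$ is invoked (Line~\ref{step:mainAlg-Sub2-for-f}) only along the out-tree rooted at the envious endpoint $f$ of a $2$-cycle $\{f,g\}$. If $i=f$, its unique in-neighbour in $C$ is $g$; if $i$ is any deeper vertex, its unique in-neighbour is its tree-parent. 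Either way $i$ has a direct predecessor $s\in C$, which is exactly the first claim and makes $w_i=(t_s+\dif{i}{s})^{+}$ well defined.

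For the bound $\ai{i}{s}+w_i\ge 1$, I would use the elementary identity $a+(b-a)^{+}=\max(a,b)$ with $a=\ai{i}{s}$ and $b=t_s+\ao{i}{s}$ to rewrite
\[
\ai{i}{s}+w_i=\max\{\ai{i}{s},\,t_s+\ao{i}{s}\}\ \ge\ \max\{\ai{i}{s},\,\ao{i}{s}\},
\]
the inequality using $t_s\ge 0$. The arc $(s,i)$ exists precisely because $i$ claimed some edge $e^{\ast}\in E_{i,s}$ with $v_i(e^{\ast})=1$; since the restriction to $E_{i,s}$ is an orientation, $e^{\ast}$ lies in $A^s_i$ or in $A^i_s$, so one of $\ai{i}{s},\ao{i}{s}$ is at least $v_i(e^{\ast})=1$, whence the maximum is at least $1$. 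I emphasise that this never inspects \emph{how} $\STWO$ splits $E_{i,s}$ ($\RR$, $\MAXU$, or a swap): it only uses that $E_{i,s}$ is partitioned between $i$ and $s$.

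For $t_i\le w_i$, write $t_i=(\ao{i}{k}+t_k-\ax{i})^{+}$, where $k$ maximises $v_i(T^i_j)+t_j$ over $j\in\Ch(i)\cup\{s\}$, and split on the identity of $k$. If $k=s$, then $A^s_i\subseteq A_i$ gives $\ax{i}\ge\ai{i}{s}$, so monotonicity of $(\cdot)^{+}$ yields $t_i\le (t_s+\ao{i}{s}-\ai{i}{s})^{+}=(t_s+\dif{i}{s})^{+}=w_i$. If $k\in\Ch(i)$, then $t_k=0$, and since $v_i(E_{i,k})\ge v_i(T^i_k)$ while $t_k=0$, the index $k$ fails the $R$-membership test $v_i(E_{i,k})<v_i(T^i_k)+t_k$ of Line~\ref{step:sub2-checking-rr-k}; thus $k\notin R$, so $E_{i,k}$ is handled by $\RR$ (possibly followed by the single swap of Line~\ref{step:sub2-Q4-swap}). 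If $k$ is \emph{not} the swapped child, Proposition~\ref{prop::rr} gives $\ao{i}{k}-\ai{i}{k}\le 1$, so with $\ax{i}\ge\ai{i}{k}+\ai{i}{s}$,
\[
t_i=(\ao{i}{k}-\ax{i})^{+}\le(\ao{i}{k}-\ai{i}{k}-\ai{i}{s})^{+}\le(1-\ai{i}{s})^{+}\le w_i,
\]
where the last step is exactly the second assertion, which guarantees $w_i\ge 1-\ai{i}{s}$.

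The one remaining case -- $k\in\Ch(i)$ with $k$ equal to the child $q$ swapped in Line~\ref{step:sub2-Q4-swap} -- is where I expect the genuine work. Here I would expand the membership condition for $S$ in Line~\ref{step:sub2-defining-S} in terms of the round-robin bundles and combine it with the first-picker inequality $v_k(T^i_k)\ge v_k(T^k_i)$ of Proposition~\ref{prop::rr} to deduce $v_i(T^i_k)>v_i(T^k_i)$; since the swap then hands the preferred bundle $T^i_k$ to $i$, we get $\ai{i}{k}>\ao{i}{k}$, and with $\ax{i}\ge\ai{i}{k}$ this forces $t_i=(\ao{i}{k}-\ax{i})^{+}=0\le w_i$. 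The delicate points are verifying that the $S$-condition genuinely certifies this residual non-envy and confirming that swapping a single child does not corrupt the bookkeeping attached to the maximising index $k$; once these are settled, everything else reduces to the monotone and additive accounting displayed above.
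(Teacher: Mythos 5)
Your proposal is correct and takes essentially the same approach as the paper: the same $\max$-identity argument establishing $v_i(A_i^s)+w_i\ge 1$ from the claimed edge of value $1$, and the same case split on whether the maximizer $k$ equals $s$ or lies in $\Ch(i)$, invoking Proposition \ref{prop::rr} together with $A_i^k\cup A_i^s\subseteq A_i$ in the latter case. Your explicit sub-case for a swapped $k\in Q_4$ (deducing $v_i(T_k^i)>v_i(T_i^k)$ from the Line \ref{step:sub2-defining-S} condition and the first-picker inequality, hence $t_i=0$) is in fact slightly more careful than the paper, which covers $Q_4\cup Q_5$ in one breath by asserting $\ao{i}{k}-\ai{i}{k}\le 1$ via Proposition \ref{prop::rr}; both arguments land on the same bound.
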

    \begin{proof}
        The fact that $i$ has a predecessor $s$ in the same component directly follows from Algorithm \ref{alg:multigraph-additive}. Since arc $(s,i)$ exists in $C$, according to Line \ref{step:mainAlg-preprocess} of Algorithm \ref{alg:multigraph-additive},
        there exists $e\in E_{s,i}$ such that $v_i(e)=1$.
        Thus, $\max(\ai{i}{s},\ao{i}{s}) \geq 1$ holds as $\{ A^s_i, A^i_s \}$ is a 2-partition of $E_{s,i}$. Then we have the following,
        \begin{align*}
            w_i + v_i(A^s_i) & = \left(t_s + \dif{i}{s} \right)^+ +  \ai{i}{s} \\
            &\geq \max(\ai{i}{s},\ao{i}{s}) \geq 1,
        \end{align*}
        where the first inequality transition is due to $t_s\geq 0$.

        For the bound between $t_i$ and $w_i$, if $t_i=0$, then $t_i\leq w_i$ holds trivially. 
        For the case where $t_i>0$, we have $t_i=t_k + \ao{i}{k} - \ax{i}$ for some $k$ belonging to $\Ch(i)\cup \{s\}$; recall that $\Ch(i)$ is the set of direct successors of $i$ in $C$ (excluding $s$, which we must take into account when determining $k$).
        If $k=s$, by the definitions of $t_i,w_i$, we directly have $t_i\leq w_i$.
        If $k\neq s$, at the moment when computing $t_i$ at Line \ref{step:sub2-update-t}, $t_k=0$. By $k\neq s$, we have $A^k_i\cup A^s_i \subseteq A_i$ and the following holds,
        \begin{align*}
            t_i=t_k+ \ao{i}{k} - \ax{i}& \leq \ao{i}{k} -\ai{i}{k} - \ai{i}{s} \\
            &\leq 1 - \ai{i}{s} \leq w_i.
        \end{align*}
        The first inequality comes from $t_k=0$ and $A^k_i\cup A^s_i \subseteq A_i$. For the second one, since $k\in Q_4\cup Q_5$, the 2-partition $\{A^k_i,A^i_k\}$ of $E_{i,k}$ is determined by $\RR(k,i,E_{i,k})$, and hence by Proposition \ref{prop::rr}, $\ao{i}{k} -\ai{i}{k}\leq 1$.
        The third one is due to $v_i(A^s_i)+w_i\geq 1$.
    \end{proof}

    In what follows, we present a crucial lemma, helpful for bounding the total required subsidy. stating that for some agent $i$, if $\STWO$ is used to allocate items between $i$ and $i$'s direct successors $\Ch(i)$ in 
    some component $C$ of $G'$, then $t_i+\sum_{j\in \Ch(i)}w_j \leq w_i$. Note that if $i$'s predecessor is also her successor, we always exclude the predecessor from $\Ch(i)$.
    \begin{lemma}\label{lem::t-sum-w-bound}
            If $\STWO$ is used to allocate items between $i$ and $i$'s direct successors $\Ch(i)$ in a component $C$ of $G'$, then $t_i+\sum_{j\in \Ch(i)}w_j \leq w_i$.
    \end{lemma}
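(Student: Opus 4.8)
The plan is to unpack both sides of the claimed inequality in terms of the quantities produced by $\STWO(i,C)$ and then run a case analysis over the class $Q_1,Q_1',Q_2,Q_3,Q_4,Q_5$ into which $\STWO$ sorts each successor. Since every $j\in\Ch(i)$ has $i$ as its direct predecessor in $C$, the defining formula of Line~\ref{step:sub2-defining-w} applied at $j$ reads $w_j=(t_i+\dif{j}{i})^+$, where $\dif{j}{i}=\ao{j}{i}-\ai{j}{i}$ is precisely the envy of $j$ towards $i$ on $\E{i}{j}$ once the partial allocation of $\E{i}{j}$ is fixed. On the left-hand side, Line~\ref{step:sub2-update-t} gives $t_i=(\ao{i}{k}+t_k-\ax{i})^+$ with $k=\arg\max_{j\in\Ch(i)\cup\{s\}}(v_i(T^i_j)+t_j)$ the single ``best direction'' for $i$ chosen in Line~\ref{step:sub2-defining-k}. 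So the lemma reduces to bounding one self-term $t_i$ together with $\sum_{j\in\Ch(i)}(t_i+\dif{j}{i})^+$ by $w_i=(t_s+\dif{i}{s})^+$.

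First I would settle the sign of $\dif{j}{i}$ for each class. For every $j\notin R$ and for $j\in Q_1'\cup Q_2\cup Q_5$, the pair $\{A^j_i,A^i_j\}$ is the unswapped output of $\RR(j,i,\E{i}{j})$ with $j$ picking first, so Proposition~\ref{prop::rr} yields $\ai{j}{i}\ge\ao{j}{i}$, i.e.\ $\dif{j}{i}\le 0$. For $j\in Q_3$ the defining condition $v_j(S^i_j)>v_j(S^j_i)$ together with the $\MAXU$-swap performed by $\STWO$ hands $j$ the bundle it strictly prefers, again giving $\dif{j}{i}<0$. Hence every successor except the unique element of $Q_1$ (later qualifying ones are diverted to $Q_1'$) and the at most one swapped element $q\in Q_4$ contributes a term $(t_i+\dif{j}{i})^+\le t_i$ that moreover vanishes once $t_i=0$; the element of $Q_1$ contributes at most $t_i+w_i$ because Line~\ref{step:sub2-Q1-for-conditon} forces $\dif{j}{i}=v_j(S^j_i)-v_j(S^i_j)\le w_i$, and the $Q_4$ element is controlled through the defining inequality of $S$ in Line~\ref{step:sub2-defining-S} together with Proposition~\ref{prop::greedy}.

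The crux is then a \emph{global additive balance} rather than a purely per-successor bound: the terms $(t_i+\dif{j}{i})^+$ are not individually small, but $t_i$ itself shrinks as $i$ accumulates value, since $\ax{i}=\ai{i}{s}+\sum_{j\in\Ch(i)}\ai{i}{j}$ enters $t_i$ with a minus sign. The plan is to sum the relevant per-edge inequalities and trade, direction by direction, the value $\ai{i}{j}$ that $i$ retains against the positive contribution of $j$: directions in $R$ are exactly those with $v_i(\E{i}{j})<\ao{i}{k}+t_k$ (Line~\ref{step:sub2-checking-rr-k}), so $i$'s self-need is realised only in the max direction $k$ and the $\MAXU$/swap directions do not inflate $\ax{i}$ in a way that re-creates $t_i$. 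Feeding this into $t_i=(\ao{i}{k}+t_k-\ax{i})^+$ and invoking Proposition~\ref{prop::bound}, which supplies $t_i\le w_i$ and $\ai{i}{s}+w_i\ge 1$, should let me absorb the $Q_1$/$Q_4$ contributions into the slack $w_i-t_i$ and collapse $t_i+\sum_{j\in\Ch(i)}w_j$ to at most $w_i$; I expect a case split on whether $t_i=0$ and on whether $k\in\Ch(i)$ or $k=s$.

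The main obstacle I anticipate is precisely this double-booking, since $t_i$ and the expensive successor both draw on the same budget $w_i$. The bookkeeping must charge them to disjoint parts of $i$'s allocation — the value $\ax{i}$ that $i$ keeps in direction $k$ versus the value $i$ cedes via $\MAXU$ and the $Q_4$ swap to an envious child — and must keep every estimate additive so that the summed positive-part terms do not overcount the accumulated value $\sum_{j}\ai{i}{j}$. Making the case split clean and exhaustive while closing each branch with Propositions~\ref{prop::rr}, \ref{prop::greedy}, and~\ref{prop::bound} is where the real work lies.
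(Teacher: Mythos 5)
Your proposal follows essentially the same route as the paper's proof: write $w_j=(t_i+\dif{j}{i})^+$ for each successor, classify the successors into $Q_1,Q_1',Q_2,Q_3,Q_4,Q_5$, establish the sign of $\dif{j}{i}$ class by class, and close the hard case ($t_i>0$) by a summation that trades the retained values $\ai{i}{j}$ against the negation of the condition in Line~\ref{step:sub2-checking-rr-k}. However, there is a genuine structural gap: you never establish $w_i\le 1$, and the plan cannot be completed without it. Proposition~\ref{prop::bound} supplies $t_i\le w_i$ and $\ai{i}{s}+w_i\ge 1$ but \emph{not} $w_i\le 1$, and your argument is phrased for a generic vertex $i$ whose predecessor $s$ may have $t_s>0$. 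The paper obtains $w_i\le 1$ structurally: it first proves the lemma for the root $f$ of the out-tree (there the predecessor is $g$ with $t_g=0$, so $w_f\le 1$ follows from Proposition~\ref{prop::rr}), and then inducts down the tree in breadth-first order, using the inequality already proved at the parent $p$ to get $w_j\le t_p+\sum_{j'\in\Ch(p)}w_{j'}\le w_p\le 1$ for every child $j$. This bound is load-bearing in exactly the branches you must close: showing $t_i=0$ when $Q_1\neq\emptyset$ ends with $t_i\le\bigl(w_i-v_i(A_i^{\ell})\bigr)^+$ and $v_i(A_i^{\ell})\ge 1\ge w_i$; the case $Q_2\neq\emptyset$ needs $v_i(A_i^{\ell})\ge\frac{w_i+1}{2}\ge w_i$, which is equivalent to $w_i\le 1$; and killing the $Q_3$ terms when $t_i>0$ ends with $(w_i-1)^+=0$. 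Without the top-down induction none of these steps goes through.

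Relatedly, the per-class bounds you state are too weak to feed into the final summation. You record the $Q_1$ element as contributing at most $t_i+w_i$ and each $Q_3$ element at most $t_i$; since $t_i$ can itself be as large as $w_i$, these estimates alone allow $t_i+\sum_{j\in\Ch(i)}w_j$ to well exceed $w_i$. What the argument actually requires---and what the paper proves---is that $Q_1\neq\emptyset$ (or $Q_2\neq\emptyset$) \emph{forces} $t_i=0$, and that every $j\in Q_3$ has $w_j=0$ whenever $t_i>0$. The mechanism in both cases is one you only gesture at: the arc of $G'$ between $i$ and $j$ guarantees an item of $\E{i}{j}$ valued at $1$ by the successor, so the $\MAXU$ split hands one side a bundle of value at least $1$, and it is this value of at least $1\ge w_i$ that annihilates the relevant positive parts. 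Spelling out this reserve-edge argument, together with the $w_i\le 1$ induction, is the real content of the paper's case analysis; your sketch has the right skeleton but is missing both ingredients.
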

    To prove this lemma, we establish the key properties guaranteed by $\STWO$ for $i$'s direct successors.
    The proof of this lemma involves a detailed discussion and is deferred to Appendix.

\begin{lemma}\label{lem::upper-bound-of-t}
        For each component $C$ in $G'$, let $N_C$ be the set of vertices in $C$. Then it holds that $\sum_{j\in N_C} t_j \leq 1$ at termination of Algorithm \ref{alg:multigraph-additive}. 
    \end{lemma}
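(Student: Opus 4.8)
The plan is to reduce the whole statement to a single telescoping inequality along one out-tree, with Lemma~\ref{lem::t-sum-w-bound} serving as the per-vertex building block. First I would record the structural fact that, since every agent contributes exactly one arc pointing to itself, each vertex of $G'$ has in-degree exactly $1$; hence every connected component $C$ has as many arcs as vertices and therefore contains a \emph{unique} directed cycle, with the remaining arcs forming out-trees rooted at the cycle vertices. The variable $t_j$ is modified only inside $\STWO$ (Line~\ref{step:sub2-update-t}), and $\STWO$ is invoked only in the branch where $C$'s unique cycle is a $2$-cycle $\{f,g\}$ and one cycle agent envies the other. In the remaining cases (cycle length $\ge 3$, or a $2$-cycle without envy) only $\RR$ and $\SONE$ run, so every $t_j=0$ and $\sum_{j\in N_C}t_j=0\le 1$ holds trivially. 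Thus I only need to treat the $2$-cycle-with-envy case, where $\STWO$ is applied along the out-tree $T_f$ rooted at the envious agent $f$ (excluding $g$) while $\SONE$ handles the out-tree $T_g$; consequently $t_j=0$ for all $j\in N_C\setminus T_f=T_g$, and so $\sum_{j\in N_C}t_j=\sum_{i\in T_f}t_i$.

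Next I would sum Lemma~\ref{lem::t-sum-w-bound} over all $i\in T_f$. Since $\STWO$ is executed at every vertex of $T_f$ from the root down to the leaves, each such $i$ satisfies $t_i+\sum_{j\in\Ch(i)}w_j\le w_i$. Because the unique cycle of $C$ is precisely $\{f,g\}$, no other $2$-cycle exists, so for each $i\in T_f$ the set $\Ch(i)$ is exactly the set of tree-children of $i$; hence the family $\{\Ch(i)\}_{i\in T_f}$ partitions $T_f\setminus\{f\}$. Adding the inequalities and cancelling the common term $\sum_{j\in T_f\setminus\{f\}}w_j$ from both sides leaves $\sum_{i\in T_f}t_i\le w_f$. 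It therefore suffices to prove $w_f\le 1$.

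The crux, and the step I expect to require the most care, is the bound $w_f\le 1$. By definition $w_f=(t_g+\dif{f}{g})^+$, where $g$ is $f$'s in-neighbour; since $g$'s subtree is processed by $\SONE$ we have $t_g=0$, so $w_f=(\dif{f}{g})^+$ is exactly $f$'s envy toward $g$ restricted to $\E{f}{g}$. The partial allocation of $\E{f}{g}$ is produced by $\RR(g,f,\E{f}{g})$ followed by the local-EFability swap of Line~\ref{step:mainAlg-f-g-swap}. I would then verify, using Proposition~\ref{prop::rr}, that after this $\RR$-plus-swap step \emph{each} of the two agents envies the other by at most $1$: in the no-swap case the first picker has non-positive envy while the second picker's envy is at most $1$ by the $v_i(B_j)-v_i(B_i)\le 1$ bound; in the swap case local EFability forces the second picker's envy to be non-positive while the first picker's envy is at most $1$ by the ``not locally EFable'' clause of Proposition~\ref{prop::rr}. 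In particular $f$'s envy toward $g$ is at most $1$, so $w_f\le 1$, and combining with the previous paragraph yields $\sum_{j\in N_C}t_j=\sum_{i\in T_f}t_i\le w_f\le 1$. Finally I would note that $t_j$ is never altered in Phase~2, so the bound indeed holds at termination of Algorithm~\ref{alg:multigraph-additive}.
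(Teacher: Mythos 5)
Your proof is correct and follows essentially the same route as the paper: both reduce the component sum to the out-tree processed by $\STWO$ (all other vertices having $t_j=0$), telescope Lemma~\ref{lem::t-sum-w-bound} over that tree to obtain $\sum_{i} t_i \le w_f$, and conclude via $w_f\le 1$. The only differences are presentational: you carry out the telescoping as a single summation with cancellation rather than the paper's leaves-up induction, and you re-derive $w_f\le 1$ from Proposition~\ref{prop::rr} together with the swap rule of Line~\ref{step:mainAlg-f-g-swap}, whereas the paper simply cites that this bound was established inside the proof of Lemma~\ref{lem::t-sum-w-bound}.
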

    \begin{proof}[Proof of Lemma \ref{lem::upper-bound-of-t}]
        First, we remark that $t_j$'s do not get updated on or after Line \ref{step:mainAlg-start-Phase2} of Algorithm \ref{alg:multigraph-additive}. 
        The algorithm only updates $t_j$'s in $\STWO$ and hence if $t_j>0$ for some $j\in N_C$, then $\STWO$ is used to allocate the items adjacent to agents in $C$, 
        Let $f\in N_C$ be the root of the out-tree. Then according to Proposition \ref{prop::bound}, for any $j$ in the rooted tree of $f$, $t_j \leq w_j$ holds.
        Then by applying Lemma \ref{lem::t-sum-w-bound} from the leaves to the direct successors of root $f$, we have
        \begin{align*}
            \sum_{j\in N_C} t_j &= t_f + \sum_{j\in N_C\setminus \{f\} } t_j \leq t_f+ \sum_{j\in N_C\setminus \{f\} } w_j \\
            &\leq t_f + \sum_{j\in \Ch(f)} w_j \leq w_f,
        \end{align*}
        where the last two inequality transitions follow from Lemma \ref{lem::t-sum-w-bound}.
        Note that in the proof of Lemma \ref{lem::t-sum-w-bound}, we have proved $w_f\leq 1$ when $f$ is the root and therefore complete the proof of the lemma.  
    \end{proof}

    At the end of Algorithm \ref{alg:multigraph-additive}, we will let each agent $j$ receive bundle $A_j$ and payment $p_j$. According to Phase 2 of Algorithm \ref{alg:multigraph-additive}, for any $j\in N$, $p_j\leq t_j$ holds. Thus, the total subsidy $\sum_{j\in N} p_j \leq \sum_{j\in N} t_j \leq \frac{n}{2}$ where the last inequality follows from Lemma \ref{lem::upper-bound-of-t} and the fact that every component has at least two vertices.
    Hence, the remaining is to prove that the orientation with payment $\{\mathbf{A}, \mathbf{p}\}$ is envy-free.

    \begin{lemma}\label{lem::efable-phase1}
        At the moment right before Phase 2, no agent envies their out- and in-neighbours in $G'$ if $t_i$ is the payment to agent $i$.
    \end{lemma}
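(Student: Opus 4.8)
The plan is to verify the envy-freeness inequality $\ax{i}+t_i\ge \ao{i}{j}+t_j$ directly for every pair of agents $i,j$ that are adjacent in the reserve graph $G'$. First I would argue this suffices: right before Phase 2 only the edge sets $E_{i,j}$ with $i,j$ adjacent in $G'$ have been allocated, so two agents non-adjacent in $G'$ share no allocated items and cannot envy each other, and because the allocation is an orientation we have $v_i(A_j)=\ao{i}{j}$. I would also record two facts used throughout: $t_i=0$ for every agent outside a component processed by $\STWO$; and for a $\STWO$-agent $i$ with predecessor $s$, Proposition \ref{prop::bound} gives $t_i\le w_i$ and $v_i(A^s_i)+w_i\ge 1$. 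A further elementary observation I would isolate is that an additive bundle of value below $1$ contains no value-$1$ item, so if an agent is envy-free on an edge set on which it owns a value-$1$ item, its bundle there has value at least $1$.

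The easy regime is the long-cycle case and the $2$-cycle sub-case where $E_{f,g}$ is allocated envy-free, both handled by $\SONE$ with all payments zero. Here every arc $(a,b)$ of $G'$ is resolved by $\RR(b,a,E_{a,b})$, so Proposition \ref{prop::rr} says the claimant $b$ does not envy $a$ on $E_{a,b}$ while the non-claimant $a$ envies $b$ by at most $1$. Every non-root agent picks first on its predecessor edge and hence gets value at least $1$ there; the root $f$ is envy-free on $E_{f,g}$ and owns a value-$1$ item there, so by the observation above $\ai{f}{g}\ge 1$. Writing $s$ for the predecessor, $i$ does not envy $s$ directly from Proposition \ref{prop::rr} ($\ax{i}\ge\ai{i}{s}\ge\ao{i}{s}$), and for any successor $j$ (with $j\neq s$, valid since there is no $2$-cycle) we get $\ao{i}{j}\le \ai{i}{j}+1\le \ai{i}{j}+\ai{i}{s}\le \ax{i}$. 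This settles all adjacent pairs with zero subsidy.

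The substantive case is the $2$-cycle with an envious agent, say $f$ envies $g$ on $E_{f,g}$, so $f$'s out-tree runs $\STWO$ and $g$'s runs $\SONE$. The $\SONE$ side (agent $g$ and its descendants) is handled exactly as above with $t=0$, using $\ai{g}{f}\ge 1$. For the cross pair I would invoke local envy-freeability of $E_{f,g}$, guaranteed by the swap in Line \ref{step:mainAlg-f-g-swap}: it yields $\ai{g}{f}-\ao{g}{f}\ge \ao{f}{g}-\ai{f}{g}=w_f\ge t_f$, whence $\ax{g}+t_g\ge \ai{g}{f}\ge \ao{g}{f}+t_f$, so $g$ does not envy $f$. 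It remains to treat the $\STWO$-agents: for such an $i$ I must establish (a) $i$ does not envy its predecessor $s$, and (b) $i$ does not envy each successor $j$, both with payments; applying (a) and (b) to every $\STWO$-agent (with $g$ as predecessor of the root $f$) closes all remaining pairs. For (a) I would use that $t_i=(\ao{i}{k}+t_k-\ax{i})^+$ (Line \ref{step:sub2-update-t}) gives $\ax{i}+t_i\ge \ao{i}{k}+t_k$, that $k$ is chosen to maximize the ``threat'' $v_i(T^i_j)+t_j$ over $j\in\Ch(i)\cup\{s\}$, and that $T^i_s=A^i_s$ so the predecessor term equals $\ao{i}{s}+t_s$; the remaining work is to reconcile $\ao{i}{k}$ with the virtual value $v_i(T^i_k)$.

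The hard part, which I expect to be the main obstacle, is direction (b): showing that the classification of the successors of a $\STWO$-agent $i$ into $Q_1,Q'_1,Q_2,Q_3,Q_4,Q_5$, together with the choice between $\RR$ and $\MAXU$ and the possible swap in Line \ref{step:sub2-Q4-swap}, makes $\ax{i}+t_i\ge \ao{i}{j}+t_j$ hold for every successor $j$. Since the virtual bundles $T^i_j,S^i_j$ produced by $\RR$ and $\MAXU$ need not equal the final bundles after swapping, I would carefully relate $\ao{i}{j}$ to the quantity controlled by the argmax defining $k$ (hence by $\ax{i}+t_i$) for the $\RR$-groups, and for the $\MAXU$-groups invoke Proposition \ref{prop::greedy} (the advantage agent receives value at least $1$) together with $v_i(A^s_i)+w_i\ge 1$ from Proposition \ref{prop::bound}; here the membership tests $0\le v_j(S^j_i)-v_j(S^i_j)\le w_i$ (Line \ref{step:sub2-Q1-for-conditon}) and $w_i<v_j(S^j_i)-v_j(S^i_j)$ (Line \ref{step:sub2-Q2-for-conditon}) are precisely what bound the successor's residual envy against the available budget $w_i\ge t_i$. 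This per-group verification, relying on the structural properties of $\STWO$ already needed for Lemma \ref{lem::t-sum-w-bound}, is where essentially all of the technical difficulty lies.
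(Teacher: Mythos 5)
Your skeleton coincides with the paper's own proof: restrict attention to pairs adjacent in $G'$, settle the long-cycle case and the envy-free two-cycle case with Proposition \ref{prop::rr} and the value-$1$ reserve item (all payments zero there), and settle the cross pair $f,g$ by local envy-freeability together with $t_f\le w_f$. Those parts of your argument are correct and are exactly the paper's Cases 1 and 2. The problem is that what you explicitly defer --- the predecessor reconciliation and the per-group verification for successors of $\STWO$-agents --- is not a technical appendix to the proof; it \emph{is} the proof, and your proposal supplies no argument for it.

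Concretely, for direction (a) the reconciliation of $\ao{i}{k}$ with $v_i(T^i_k)$ is resolved in the paper by a case split your chain of inequalities cannot avoid: if $s=k$ or $k$ kept its virtual bundle, then $A^i_k=T^i_k$, and Line \ref{step:sub2-update-t} combined with the argmax of Line \ref{step:sub2-defining-k} (using $T^i_s=A^i_s$) gives $\ax{i}+t_i\ge v_i(T^i_k)+t_k\ge \ao{i}{s}+t_s$; if instead $k\in Q_4$, the swap places $T^i_k\subseteq A_i$, and one argues $\ax{i}+t_i\ge v_i(T^i_k)+t'_k\ge \ao{i}{s}+t_s$ by monotonicity instead. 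For direction (b) the paper runs a separate verification for each of $Q_1\cup Q'_1$, $Q_2$, $Q_3$, $Q_4$, $Q_5$, each resting on facts you never state: that $Q_1\neq\emptyset$ forces $t_i=0$ and $Q_2\neq\emptyset$ forces $w_i=0$ (both imported from the proof of Lemma \ref{lem::t-sum-w-bound}); that membership in $R$ (Line \ref{step:sub2-checking-rr-k}) yields $v_i(\E{i}{j})\le v_i(T^i_k)+t'_k$, which is what lets $\ax{i}+t_i$ dominate $\ao{i}{j}+t_j$ for $Q_1\cup Q'_1$; that for $j\in Q_3$ with $t_j>0$ the inequality follows from the pointwise dominance of $\MAXU$, namely $v_j(A^i_j)\ge v_i(A^i_j)$ and $v_i(A^j_i)\ge v_j(A^j_i)$ --- not from Proposition \ref{prop::greedy}, which is the tool you name but which only provides a value-$1$ guarantee and does not close this case; and that $Q_4$ and $Q_5$ are handled via the post-swap inequality of Line \ref{step:sub2-Q4-swap} and the negation of the condition in Line \ref{step:sub2-defining-S}, respectively. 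Since none of these verifications is routine and your proposal gives no argument for any of them, it establishes the lemma only for $\SONE$-agents and the cross pair, leaving the $\STWO$ case --- the bulk of the paper's proof --- unproven.
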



    \begin{lemma}\label{lem::final-efable}
        At the termination of Algorithm \ref{alg:multigraph-additive}, the allocation with payment $\{\mathbf{A}, \mathbf{p}\}$ is envy-free.
    \end{lemma}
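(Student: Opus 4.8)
The plan is to prove the lemma by verifying the defining inequality of envy-freeness directly: $\ax{i}+p_i\ge v_i(A_j)+p_j$ for every ordered pair $(i,j)$. Because $\mathbf{A}$ is an orientation, $v_i(A_j)=v_i(A_j\cap E_{i,j})=\ao{i}{j}$, so the target is $\ax{i}+p_i\ge \ao{i}{j}+p_j$, and I would split it into three regimes according to how $i,j$ relate in $G'$ and in $G$. A fact I would isolate first and use throughout is the reserve lower bound $\ax{i}+t_i\ge 1$ for \emph{every} agent: each $i$ holds a claimed value-$1$ edge located in the edge set to its predecessor $s$, which it receives through the first pick of $\RR$ in $\SONE$ or through Proposition~\ref{prop::greedy}/the first pick in $\STWO$; for roots (whose predecessor edge they may envy) the definition of $t_i$ in Line~\ref{step:sub2-update-t} of $\STWO$ pushes $\ax{i}+t_i$ up to $\max(\ax{i},\ao{i}{k}+t_k)\ge 1$, since the reserve edge worth $1$ to $i$ lies in either $A_i^s$ or $A_s^i$.

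The regimes are then handled as follows. If $i,j$ are adjacent in $G'$, the set $E_{i,j}$ is fully allocated in Phase~1 and never touched again, so Lemma~\ref{lem::efable-phase1} gives the base inequality $\ax{i}+t_i\ge \ao{i}{j}+t_j$ with $\ao{i}{j}$ frozen and $\ax{i}$ only growing thereafter. If $i,j$ are adjacent in $G$ but not in $G'$, the pair is resolved in a single Phase-2 iteration: after $\RR(j,i,E_{i,j})$ and the swap restoring local EFability, the chosen endpoint $j$ with $\ndif{j}{i}\ge 0$ does not envy $i$ across $E_{i,j}$, and the update $p_j\gets(\ax{i}+p_i-\ao{i}{j})^+$ is exactly calibrated to force $\ax{i}+p_i\ge \ao{i}{j}+p_j$, eliminating the only remaining envy (that of $i$ towards $j$). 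If $i,j$ are non-adjacent in $G$, then $\ao{i}{j}=0$ and it suffices to show $\ax{i}+p_i\ge p_j$, which I would deduce from the reserve bound $\ax{i}+p_i\ge 1$ together with $p_j\le t_j\le w_j\le 1$ (Proposition~\ref{prop::bound}).

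The heart of the argument — and the step I expect to be the main obstacle — is that the Phase-2 update can only \emph{decrease} payments (its triggering condition $\ao{i}{j}+p_j>\ax{i}+p_i$ forces the new value strictly below the old one). Lowering $p_j$ is harmless for envy directed \emph{at} $j$, but it can make $j$ envy a third agent or undo a relation already certified by Lemma~\ref{lem::efable-phase1}. I would control this through a global invariant maintained across the Phase-2 loop: after each iteration, $(\mathbf{A},\mathbf{p})$ is envy-free on every $G'$ edge set and every edge set processed so far, and moreover $\ax{i}+p_i\ge 1$ for all $i$. The base case combines Lemma~\ref{lem::efable-phase1} with the reserve bound of the first paragraph. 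For the inductive step, the crucial observation is that the agent $j$ whose payment is reduced \emph{simultaneously receives} the bundle $A_j^i$, so the net change in $\ax{j}+p_j$ is $v_j(A_j^i)$ minus the drop in $p_j$; the technical core is to show the former dominates the latter, so that $\ax{j}+p_j$ (hence the invariant, and in particular $j$'s non-envy of all others) is preserved. I would establish this by bounding the payment drop against the value $j$ extracts from $E_{i,j}$ via Propositions~\ref{prop::rr} and~\ref{prop::greedy}, with the reserve value of $1$ absorbing the reduction.

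Finally I would assemble the three regimes into the full pairwise check to conclude that $\{\mathbf{A},\mathbf{p}\}$ is an envy-free orientation. Combined with the already-established estimate $\sum_i p_i\le \sum_i t_i\le n/2$ (from $p_i\le t_i$, Lemma~\ref{lem::upper-bound-of-t}, and the fact that every component of $G'$ has at least two vertices), this proves Lemma~\ref{lem::final-efable} and with it Theorem~\ref{thm:multi:additive:alg:n/2}. The bulk of the writing effort, and the genuinely delicate reconciliation, lies in the invariant of the third paragraph: showing that a single left-to-right pass of monotonically decreasing payments never cascades into fresh envy, which requires balancing each payment reduction against the contemporaneous growth of the reduced agent's bundle while keeping the reserve-value bound $\ax{i}+p_i\ge 1$ intact for every agent, roots included.
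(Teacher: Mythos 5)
Your proposal follows essentially the same route as the paper's proof: both hinge on the per-iteration invariant that each agent's value-plus-payment never decreases during Phase 2 (established by showing the payment drop $p'_j - p''_j$ is dominated by the value $v_j(A^i_j)$ gained in that same iteration, using local envy-freeability and the reserve bound that value plus payment stays at least $1$), combined with Lemma \ref{lem::efable-phase1} and the fact that payments never increase, so that Phase-1 relations persist to termination. The only difference is presentational --- you split the pairwise check into three regimes and explicitly handle pairs non-adjacent in $G$ (which the paper leaves implicit) --- but the key ideas and estimates coincide.
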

    The proofs for the above two lemmas involve detailed analysis of $t_i,w_i$ for all $i\in N$. The formal proofs can be found in the appendix.


\section{Improved Bounds for Simple Graphs}
\label{sec:simple:monotone}

Finally, we improve Theorem \ref{thm:multi:monotone:alg:n-1} for simple graphs, where the valuations can be arbitrarily monotone. We start with the lower bound.

\paragraph{Lower bound.}  The following example gives an instance for which any EF orientation requires a subsidy of at least $n-2$. 
\begin{example} Consider a simple graph $G$ with two connected components $C_1$ and $ C_2$,  where:
\begin{itemize}
\label{expl:monotone:n-2}
    \item $C_1=(V_{C_1},E_{C_1})$ is a single edge connecting two vertices denoted $V_{C_1}=\{1,2 \}$, and both agents (vertices 1 and 2) value the edge incident to them at 1.  
    \item  $C_2=(V_{C_2},E_{C_2})$ is a complete graph  $K_{n-2}$ with $V_{C_2} = \{3,\ldots,n \}$. The value of each agent depends solely on the number of edges $S$ they receive, and is given by the valuation 
    $$
    v_i(S) =
    \begin{cases}
	1 \ \ \ &\text{if}  \ \ |S| = n-2 \ \\
	0 \ \ \ \ \quad  &\text{otherwise.}
    \end{cases}
    $$
\end{itemize}
In any orientation of $C_1$, one agent (say agent 1) receives the edge, leaving the other agent (agent 2) with value zero. To ensure an EF orientation, agent 2 must receive a subsidy of 1. For every agent $i \in V_{C_2}$, their value is 0 unless they receive all the edges incident to them. Therefore, in any orientation, at most one agent in $V_{C_2}$ can achieve a value of 1, while the others receive zero. Since agent 2 from $C_1$ receives a subsidy of 1, 
the agents in $V_{C_2}$ who have zero value (there are $n-3$ of them) must each receive a subsidy of 1 to maintain EF.
Thus, the minimum subsidy needed for an EF orientation in this instance is $n-2$.
\end{example}

\paragraph{Upper bound.}
We now show that a subsidy of $n-2$ is sufficient to guarantee EF. 
At a high level, we find an EF orientation for which at least two agents receive zero subsidy. The choice of such agents depends on both the structure of $G$ and the  agents' valuations.

\begin{algorithm}[ht!]
        \caption{Simple Graph Orientation with Additive Valuations}
	\label{alg:subsidy-monotone-simple}
	\begin{algorithmic}[1]
 \renewcommand{\algorithmicensure}{\textbf{Output:}}
		\REQUIRE An instance  $G=(N, E)$ and valuation function $\{ v_i \}$'s.
		\ENSURE EF orientation $\mathbf{A} = (A_1,\ldots,A_n)$ and payments $\mathbf{p}=(p_1,\ldots, p_n)$.
  \STATE Initialize $A_i=\emptyset$ for each $i\in N$
        \STATE Let $t = \max_{i\neq j} v_i(\{(i,j)\})$, and let $i'$ and $j'$ be the two vertices for which $v_{i'}(\{(i',j')\})=t$. Choose any $j^*\in N\setminus \{i',j'\}$, and allocate $j^*$ all edges incident to her, and update $A_{j^*}$.
        \STATE\label{alg:unalloc} For each unallocated edge $e$, allocate $e$ to the incident agent who values it the most, breaking ties arbitrarily, and update the corresponding $A_i$'s.
        \STATE For each $i\in N$, set the subsidy payments as $p_i=\max\{ t - v_i(A_i),0\}$.
	\end{algorithmic}
\end{algorithm}
\begin{theorem}
\label{thm:simple:monotone:alg:n-2}
For simple graphs with monotone agent valuations, there always exists an EF orientation with a subsidy of at most $n-2$, and it can be computed in polynomial time using value queries.
\end{theorem}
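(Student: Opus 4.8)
The plan is to analyze Algorithm~\ref{alg:subsidy-monotone-simple} directly and show that the payment vector it produces is valid, with at most $n-2$ agents receiving positive subsidy, each bounded by $1$. First I would establish the key quantity: let $t=\max_{i\neq j} v_i(\{(i,j)\})$ be the largest single-edge value across all agents. By the normalization that the maximum marginal value of any item is $1$, we have $t\le 1$, so any per-agent subsidy $p_i=\max\{t-v_i(A_i),0\}$ is at most $1$. Thus the total subsidy is at most the number of agents receiving positive payment, and it suffices to show at least two agents get zero subsidy, i.e.\ achieve value at least $t$ from their allocated bundle.

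The heart of the argument is to verify two things about the orientation $\mathbf{A}$ produced by the algorithm: (i) it is envy-free under payments $p_i=\max\{t-v_i(A_i),0\}$, and (ii) at least two designated agents, namely $i'$ and $j^*$, receive value at least $t$ (hence zero subsidy). For (ii): the agent $j^*$ is allocated \emph{all} edges incident to her in Line~2, so by monotonicity $v_{j^*}(A_{j^*})\ge v_{j^*}(\{(i',j')\}\text{-type edge})$; more carefully, $j^*$ receives every incident edge, so her value is at least the value of any single incident edge she holds, and I must argue this is at least $t$. Here the choice of $j^*\in N\setminus\{i',j'\}$ matters: I would argue that since $j^*$ owns all her incident edges, and $t$ is achieved by agent $i'$ on edge $(i',j')$ which $j^*$ does not touch, the relevant claim is that $j^*$'s full incident bundle gives value $\ge t$ — this needs the observation that every agent has \emph{some} incident edge of value exactly $1\ge t$ (true if $G$ is connected and the graph has at least one edge per vertex), giving $v_{j^*}(A_{j^*})\ge 1\ge t$. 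For agent $i'$: after Line~2 removes $j^*$'s edges, edge $(i',j')$ remains unallocated (since $j^*\notin\{i',j'\}$), and in Line~\ref{alg:unalloc} it goes to whichever of $i',j'$ values it more; since $v_{i'}(\{(i',j')\})=t$ is maximal, $i'$ receives it and gets value $\ge t$.

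For the envy-freeness claim (i), I would invoke the standard characterization: it suffices to show that under the greedy max-value assignment, the resulting allocation together with payments $p_i=\max\{t-v_i(A_i),0\}$ satisfies $v_i(A_i)+p_i\ge v_i(A_j)+p_j$ for all $i,j$. Observe that $v_i(A_i)+p_i=\max\{v_i(A_i),t\}$ by construction, so the left side is at least $t$ for every agent. Meanwhile, since $G$ is simple, $A_j\cap E_i$ contains at most the single edge $(i,j)$, so $v_i(A_j)=v_i(A_j\cap E_i)\le v_i(\{(i,j)\})\le t$, and $p_j\le 1$; the subtle point is bounding $v_i(A_j)+p_j$. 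When $p_j>0$ we have $p_j=t-v_j(A_j)$, and I would need $v_i(A_j)+t-v_j(A_j)\le \max\{v_i(A_i),t\}$. Because $A_j\cap E_i$ is a single edge $e=(i,j)$ allocated in Line~\ref{alg:unalloc} to the higher-valuing endpoint, if $j$ holds it then $v_j(e)\ge v_i(e)$, which combined with monotonicity of $v_j$ gives $v_i(A_j)\le v_i(\{e\})\le v_j(\{e\})\le v_j(A_j)$, so $v_i(A_j)+p_j\le v_j(A_j)+t-v_j(A_j)=t\le v_i(A_i)+p_i$.

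The main obstacle I anticipate is the envy-freeness verification when the envied agent $j$ holds positive subsidy \emph{and} the single shared edge went to $j$ despite $i$ valuing it comparably — precisely the inequality $v_i(A_j)\le v_j(A_j)$ that I rely on above. This hinges crucially on the tie-breaking and greedy rule of Line~\ref{alg:unalloc} together with simplicity of the graph (so only one edge is shared between any pair), and on monotonicity to pass from the single edge value to the whole bundle value $v_j(A_j)$. I would handle the edge cases separately: when the shared edge was allocated in Line~2 to $j^*$ (so $j=j^*$ has zero subsidy, making $p_j=0$ and the inequality immediate), and when $i$ and $j$ are non-adjacent (so $A_j\cap E_i=\emptyset$, $v_i(A_j)=0\le t\le v_i(A_i)+p_i$). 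Assembling these cases, together with the two zero-subsidy agents, yields total subsidy at most $(n-2)\cdot 1=n-2$, and polynomial running time is immediate since each step is a single pass over edges.
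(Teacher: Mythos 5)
Your proposal is correct and follows essentially the same route as the paper's own proof: it analyzes Algorithm~\ref{alg:subsidy-monotone-simple} with payments $p_i=\max\{t-v_i(A_i),0\}$, verifies envy-freeness through the greedy rule of Line~\ref{alg:unalloc} together with monotonicity (the chain $v_i(A_j)\le v_i(\{(i,j)\})\le v_j(\{(i,j)\})\le v_j(A_j)$ when $p_j>0$), and exhibits two zero-subsidy agents, namely $j^*$ and one endpoint of the edge $(i',j')$. The only nitpicks are cosmetic: due to arbitrary tie-breaking in Line~\ref{alg:unalloc} the edge $(i',j')$ may go to $j'$ rather than $i'$ (but then $v_{j'}(\{(i',j')\})=t$, so whichever endpoint receives it has value at least $t$ and zero subsidy, which is how the paper phrases it), and in your non-adjacent case you should bound $v_i(A_j)+p_j\le 0+t\le v_i(A_i)+p_i$ rather than dropping $p_j$.
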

\begin{proof}
We show that Algorithm~\ref{alg:subsidy-monotone-simple} returns an EF orientation with total subsidy of at most $n-2$. As the maximum marginal value of each edge is at most 1, we see that $p_i\leq t \leq 1$.

Consider any pair of agents $i,j\in N\setminus j^*$. We show that $i$ is EF towards $j$. Note that the claim holds immediately when $i$ and $j$ are not adjacent or the edge $(i,j)$ is oriented toward $i$, since $v_i(A_i)+p_i\geq t \geq p_j$. Thus, we may only focus on the case where $i$ and $j$ are adjacent and the edge $(i,j)$ is oriented towards $j$. If $p_j=0$, we have $v_i(A_i)+p_i\geq t \geq v_i(\{(i,j)\})=v_i(A_j)=v_i(A_j)+p_j$. Furthermore, if $p_j>0$, meaning that  $t>v_j(A_j)$, we have 
\begin{align*}
    v_i(A_j) + p_j &= v_i(\{(i,j)\}) + t-v_j(A_j) \\ 
    &\leq v_i(\{(i,j)\}) + t - v_j(\{(i,j)\}) \\ 
    &\leq t \leq v_i(A_i) +p_i,
\end{align*}
where the second inequality follows from  $v_j(\{(i,j)\}) \geq  v_i(\{(i,j)\})$, which follows from the fact that $(i,j) \in A_j$ and Line~\ref{alg:unalloc}  was executed.

Next, we show that no agent envies $j^*$, and that $j^*$ is EF towards any other agent. Since $j^*$ receives all the edges incident to her, we have $v_{j^*}(A_{j^*})\geq 1 \geq v_{j^*} (A_i)+p_i = p_i$ for any $i\in N\setminus j^*$, which means that $j^*$ does not envy any other agent. Furthermore,  for any $i\in V\setminus j^*$, we have $v_i(A_i)+p_i \geq t \geq \max\limits_{h\neq j^* } v_h(\{(h,j^*)\}) \geq v_i(\{(i,j^*)\})= v_i(A_{j^*})$, noting $p_{j^*}=0$, we have that any agent $i\in N\setminus j^*$ does not envy $j^*$. Hence,  $(\mathbf{A},\mathbf{p})$ is indeed EF. 

As for the subsidy bound, we now show that at least one other vertex other than $j^*$ receives a subsidy of zero. Recall that $t=\max_{i\neq j}v_i(\{(i,j)\})$, and $v_{i'}(\{(i',j')\})=t$. Since $j^*\not\in \{i', j'\}$, one of two agents, say $i'$,   receives the edge $(i',j')$, and thus has value at least $t$ (by Line~\ref{alg:unalloc}), meaning that they get zero subsidy. Therefore, at least two agents ($j^*$ and $i'$ ) receive zero subsidy, and since $p_i\leq 1$ for each $i\in N$, the total subsidy is at most $n-2$.


Finally, we observe that Algorithm~\ref{alg:subsidy-monotone-simple} runs in $O(n^2)$ time and makes at most $O(n^2)$ value queries. 
\end{proof}

\section{Conclusion}
In this paper, we provide the (tight) bounds of the minimum subsidy to compensate agents in order to achieve an envy-free allocation in the graph orientation problem. 
Our paper uncovers several interesting problems for subsequent research.
Firstly, there is a constant additive gap between the upper and lower bounds in the general case of multigraph and monotone valuations. 
Closing this gap is an intriguing theoretical problem.
Secondly, we have been focusing on the allocation of goods and the mirror problem of chores/mixed manna has not been studied yet.
Finally, the optimal bound of subsidy in the unrestricted setting (outside the scope of graph orientations) when agents have arbitrary monotone valuations \cite{BDNSV20,DBLP:conf/ijcai/BarmanKNS22,DBLP:conf/aaai/KawaseMSTY24} remains unknown. 


\newpage
\bibliographystyle{alpha}
\bibliography{sample-base}

\onecolumn
\appendix



\newpage

\appendix
\section{Missing Proofs from Section~\ref{sec:Monotone} }

\subsection{Proof of Lemma~\ref{lem:mono-multi-efable}}
\begin{proof}[Proof of Lemma~\ref{lem:mono-multi-efable}]
      By Theorem~\ref{thm:condition-envyfreeable}, it suffices to show that in the envy-graph $D_{\mathbf{A}}$, every cycle has a non-positive  weight. Consider any directed cycle $\mathcal{C}=\{1,...,h,1\}$ in $D_{\mathbf{A}}$. By Lemma~\ref{lem:envy-cycle}, there is at least one agent in $\mathcal{C}$ who do not envy their out-neighbor. Let $\{r_1, r_2,\ldots,r_s\} \subseteq  \mathcal{C}$ with $r_1<r_2<\cdots<r_s$  be the set of agents in $\mathcal{C}$, each of whom does not envy their out-neighbour in $\mathcal{C}$. For notational convenience,  we  define $r_0=0$  and $r_{s+1} = h$, as well as identify $h+1$ with $1$. Note that $j$ and $j+1$ may not be adjacent in $G$ and if so, $E_{j,j+1}, P^{j+1}_{j,j+1},P^{j}_{j,j+1}$ are defined as $\emptyset$. 
    With these notations in hand, we may write the weight of $\mathcal{C}$
     as $w(\mathcal{C}) = \sum_{k=0}^s\sum_{j=r_k+1}^{r_{k+1}} w(j,j+1)$. We have the following 
     \begin{align}
         \sum_{j=r_k+1}^{r_{k+1}} w(j,j+1) &=  \sum_{j=r_k+1}^{r_{k+1}} v_j(A_{j+1})-v_j(A_{j}) \nonumber \\ 
         &=\sum_{j=r_k+1}^{r_{k+1}} v_j( P^{j+1}_{j,j+1})-v_j(A_{j})\nonumber \\ 
         &= -v_{r_k+1}(A_{r_k+1}) + v_{r_{k+1}}(P^{r_{k+1}+1}_{r_{k+1},r_{k+1}+1})+\sum_{j=r_k+1}^{r_{k+1}-1} v_j( P^{j+1}_{j,j+1})- \sum_{j=r_k+2}^{r_{k+1}}v_j(A_{j}) \nonumber \\ 
         &=  -v_{r_k+1}(A_{r_k+1}) + v_{r_{k+1}}(P^{r_{k+1}+1}_{r_{k+1},r_{k+1}+1})+\sum_{j=r_k+1}^{r_{k+1}-1} v_j( P^{j+1}_{j,j+1})- \sum_{j=r_k+1}^{r_{k+1}-1}v_{j+1}(A_{j+1}) \nonumber \\ 
          &\leq -v_{r_k+1}(A_{r_k+1}) + v_{r_{k+1}}(P^{r_{k+1}+1}_{r_{k+1},r_{k+1}+1})+\sum_{j=r_k+1}^{r_{k+1}-1} b_j - b_{j+1} \nonumber \\
            &= -v_{r_k+1}(A_{r_k+1}) + v_{r_{k+1}}(P^{r_{k+1}+1}_{r_{k+1},r_{k+1}+1})+b_{r_k+1} - b_{r_{k+1}} \label{eq:weight}
     \end{align}
The inequality transition above holds since for any $r_k +1 \leq j < r_{k+1}$ (if any), agent $j$ envies her out-neighbour $j+1$ on $\mathcal{C}$, which by Line~\ref{alg:line:temp}, implies that both $j,j+1$ prefers $P^{j+1}_{j,j+1}$ to $P^{j}_{j,j+1}$ and moreover $v_j(P^{j+1}_{j,j+1}) - b_j \leq v_{j+1}(P^{j+1}_{j,j+1}) - b_{j+1}$, equivalently  $v_j(P^{j+1}_{j,j+1}) - v_{j+1}(P^{j+1}_{j,j+1}) \leq b_j  - b_{j+1}$. Thus,  $v_j( P^{j+1}_{j,j+1}) - v_{j+1}(A_{j+1}) \leq v_j(P^{j+1}_{j,j+1}) - v_{j+1}(P^{j+1}_{j,j+1})\leq b_j  - b_{j+1}$. 

Using inequality~\ref{eq:weight}, along with the fact that $w(\mathcal{C}) = \sum_{k=0}^s\sum_{j=r_k+1}^{r_{k+1}} w(j,j+1)$. We get a bound on the weight of the cycle,
\begin{align}
    w(\mathcal{C}) \leq \sum_{k=0}^s -v_{r_k+1}(A_{r_k+1}) + v_{r_{k+1}}(P^{r_{k+1}+1}_{r_{k+1},r_{k+1}+1})+b_{r_k+1} - b_{r_{k+1}} \label{eq:cycle:bound}
\end{align}

    Then for any $k \in \{0,\ldots,s\}$, we define an expression $F_k$ as follows,
    $$
    F_k=  - v_{r_{k+1}+1}(A_{r_{k+1}+1})+ v_{r_{k+1}}(P^{r_{k+1}+1}_{r_{k+1},r_{k+1}+1})  + b_{r_{k+1}+1}-b_{r_{k+1}},
    $$
    We remark that the right hand side of inequality~(\ref{eq:cycle:bound}) is equal to $\sum_{k=0}^s F_k$ as it is simply a rearrangement of the terms in each sum. Thus, we have $w(\mathcal{C}) \leq \sum_{k=0}^s F_k$, which means it suffices to show that $F_k\leq 0$ for each $k \in \{0,\ldots,s\}$. 

    We now show that, for each $k$, $F_k\leq 0$. If $v_{r_{k+1}}(P^{r_{k+1}+1}_{r_{k+1},r_{k+1}+1}) \leq b_{r_{k+1}}$, it suffices to prove $v_{r_{k+1}+1}(A_{r_{k+1}+1})\geq b_{r_{k+1}+1}$. This holds because each agent $i \in N$ is guaranteed a value of at least $b_i$ by monotonicity. 
    
    On the other hand, if $v_{r_{k+1}}(P^{r_{k+1}+1}_{r_{k+1},r_{k+1}+1}) > b_{r_{k+1}}$,  then agents $r_{k+1}$ and $r_{k+1}+1$ are adjacent in $G$. Moreover, agent $r_{k+1}$ prefers $P^{r_{k+1}+1}_{r_{k+1},r_{k+1}+1}$ to $P^{r_{k+1}}_{r_{k+1},r_{k+1}+1}$, and by Line~\ref{alg:line:temp}, we have
    $$
    v_{r_{k+1}}(P^{r_{k+1}+1}_{r_{k+1},r_{k+1}+1}) - b_{r_{k+1}} \leq v_{r_{k+1}+1}(P^{r_{k+1}+1}_{r_{k+1},r_{k+1}+1}) - b_{r_{k+1}+1}.
    $$
    Combining the above inequality and the fact that $P^{r_{k+1}+1}_{r_{k+1},r_{k+1}+1}\subseteq A_{r_{k+1}+1}$, we have
    $$
    \begin{aligned}
    v_{r_{k+1}}(P^{r_{k+1}+1}_{r_{k+1},r_{k+1}+1}) - v_{r_{k+1}+1}(A_{r_{k+1}+1}) &\leq v_{r_{k+1}}(P^{r_{k+1}+1}_{r_{k+1},r_{k+1}+1}) - v_{r_{k+1}+1}(P^{r_{k+1}+1}_{r_{k+1},r_{k+1}+1}) \\
    & \leq b_{r_{k+1}} - b_{r_{k+1}+1},
        \end{aligned}
    $$
    equivalently  $F_k\leq 0$. Therefore, $w(\mathcal{C})\leq 0$, as needed to show.
\end{proof}

\subsection{Proof of Lemma~\ref{lem:weight-length-mono}}

\begin{proof}[Proof of Lemma~\ref{lem:weight-length-mono}]
Consider the maximum weight path $\cP=\{1,2,\ldots,h-1,h\}$. Without loss of generality, we may assume $h-1$ envies $h$, i.e., $w(h-1,h)>0$; otherwise, we can iteratively remove the last vertex from the path without changing its weight until this property holds.
Let $\{r_1,r_2,\ldots,r_s\}\subseteq [h]$ with $r_1<r_2<\cdots<r_s$ be the set of agents, each of whom does not envy their out-neighbour on path $\cP$. For notational convenience, we also define $r_0=0$. Similarly to the proof of Lemma~\ref{lem:mono-multi-efable}, if $j$ and $j+1$ are not adjacent in $G$, then $E_{j,j+1}, P^j_{j,j+1}, P^{j+1}_{j,j+1}$ are defined as $\emptyset$.
For the case when $j$ envies $j+1$, it must be the case that $j$ and $j+1$ are adjacent in $G$ and both prefer $P^{j+1}_{j,j+1}$ to $P^j_{j,j+1}$.

The weight of $\cP$ satisfies,
    $$
    \begin{aligned}
        w(\cP) & = \sum_{k=0}^{s-1}\sum_{j=r_k+1}^{r_{k+1}}w(j,j+1) + \sum_{j=r_s+1}^{h-1}w(j,j+1) \\
        & \leq \sum_{k=0}^{s-1}-v_{r_k+1}(A_{r_k+1}) +b_{r_k+1} - b_{r_{k+1}} +v_{r_{k+1}}(P^{r_{k+1}+1}_{r_{k+1},r_{k+1}+1})+ \sum_{j=r_s+1}^{h-1}w(j,j+1). \\
    \end{aligned}
    $$
    For every $k\in \{0,\ldots,s-1\}$, we define
    $$
    F_k = -b_{r_{k+1}}+v_{r_{k+1}}(P^{r_{k+1}+1}_{r_{k+1},r_{k+1}+1}) - v_{r_{k+1}+1}(A_{r_{k+1}+1}) + b_{r_{k+1}+1}.
    $$
    Given that 
    $$
    \sum_{j=r_s+1}^{h-1}w(j,j+1) \leq -v_{r_s+1}(A_{r_s+1}) + b_{r_s+1} - b_{h-1}+v_{h-1}(P^h_{h-1,h}),
    $$
    we have 
    $$
    w(\cP) \leq \left( -v_1(A_1)+b_1-b_{h-1}+v_{h-1}(P^h_{h-1,h}) \right)+  \sum_{k=0}^{s-1}F_k.
    $$
    By similar argument to that in the proof of Lemma~\ref{lem:mono-multi-efable}, for any $k\in \{0,\ldots,s-1\}$, we have $F_k\leq 0$. Moreover, since agent 1 receives a bundle from each of her neighbours in $G$, we have $v_1(A_1) \geq b_1$ due to the monotonicity of $v_1$. 
    Thus, to prove the statement, it suffices to show $v_{h-1}(P^h_{h-1,h})-b_{h-1} \leq 1$. Since $h-1$ envies $h$ in $\mathbf{A}$, agents $h-1$ and $h$ are adjacent in $G$ and both of them prefer $P^h_{h-1,h}$ to $P^{h-1}_{h-1,h}$.

    If agent $h-1$ receives bundle $P^{h-1}_{h-1,h}$ after the envy-cycle elimination, then she is EF1 when restricting to the partial allocation where agent $h-1$ and $h$ receives $P^{h-1}_{h-1,h}$ and $P^{h}_{h-1,h}$ respectively. By EF1, we have
    $$
    \begin{aligned}
    &v_{h-1}(P^{h-1}_{h-1,h})\geq \min_{g\in P^{h}_{h-1,h}} v_{h-1}(P^{h}_{h-1,h}\setminus \{g\}) \\
    &= v_{h-1}(P^h_{h-1,h}) - \left(v_{h-1}(P^h_{h-1,h}) - \min_{g\in P^{h}_{h-1,h}} v_{h-1}(P^{h}_{h-1,h}\setminus \{g\}) \right) \\
    & \geq v_{h-1}(P^h_{h-1,h})  -1,
    \end{aligned}
    $$
    where the last inequality transition is because the maximum marginal value of adding an item to a bundle is at most 1.
    Since agent $h-1$ prefers $P^h_{h-1,h}$ to $P^{h-1}_{h-1,h}$, when computing $b_{h-1}$, bundle $P^{h-1}_{h-1,h}$ has been taken into account and thus $b_{h-1} \geq v_{h-1}(P^{h-1}_{h-1,h})$.
    Therefore, $b_{h-1} \geq v_{h-1}(P^h_{h-1,h})-1$ holds.

    If agent $h-1$ receives $P^h_{h-1,h}$ right after the envy-cycle elimination, at that moment, bundle $P^{h-1}_{h-1,h}$ was temporarily allocated to agent $h$ who is EF1 when restricting to this partial allocation. Then, we have
    $$
    \begin{aligned} 
    v_h(P^{h-1}_{h-1,h}) &\geq \min_{g\in P^h_{h-1,h}} v_h(P^h_{h-1,h}\setminus \{g\}) \\
    & = v_h(P^h_{h-1,h}) - \left( v_h(P^h_{h-1,h}) - \min_{g\in P^h_{h-1,h}} v_h(P^h_{h-1,h}\setminus \{g\}) \right) \\
    & \geq v_h(P^h_{h-1,h}) - 1,
    \end{aligned}
    $$
    where the last inequality is due to that the maximum marginal value of adding an item to a bundle is at most 1.
    Note that by Line~\ref{alg:line:temp}, $P^h_{h-1,h}$ permanently allocated to agent $h$, and according to the allocation rule, we have
    $$
    v_h(P^h_{h-1,h}) - b_h \geq v_{h-1}(P^h_{h-1,h}) - b_{h-1}.
    $$
    Since agent $h$ prefers $P^h_{h-1,h}$ to $P^{h-1}_{h-1,h}$, we have $b_h\geq v_h(P^{h-1}_{h-1,h})$. Combining the above inequalities, we have
    \begin{align*}
        v_{h-1}(P^h_{h-1,h}) - b_{h-1} &\leq v_h(P^h_{h-1,h}) - b_h \leq v_h(P^h_{h-1,h})- v_h(P^{h-1}_{h-1,h}) \leq 1,
    \end{align*}
    completing the proof.
\end{proof}

\subsection{Proof of Theorem~\ref{thm:multi:monotone:alg:n-1}}

\begin{proof}[Proof of Theorem~\ref{thm:multi:monotone:alg:n-1}]

    As shown in Lemmas \ref{lem:mono-multi-efable} and \ref{lem:weight-length-mono}, Algorithm~\ref{alg:subsidy-monotonoe} returns an EF allocation where the subsidy to each agent is at most one. Since it is possible to uniformly decrease subsidy to all agents while maintaining envy-freeness, there is at least one agent that receives subsidy of zero. Therefore, the $n-1$ bound holds. 

    Algorithm~\ref{alg:subsidy-monotonoe} uses the envy-cycle elimination algorithm of Lipton et al. \cite{DBLP:conf/sigecom/LiptonMMS04}, which is a polynomial-time algorithm, as a subroutine.  The envy-cycle elimination algorithm is called $O(n^2)$ times, and the rest of the  algorithm clearly runs in polynomial time. Thus, Algorithm~\ref{alg:subsidy-monotonoe} is a polynomial-time algorithm. 
\end{proof}

\newpage

\section{Missing Materials from Section \ref{sec::n/2}}
\subsection{Bounded-Subsidy Algorithm in \cite{BDNSV20} needs subsidy more than $n/2$}
\begin{example}
    Consider an example where there are 5 agents $\{1,\ldots, 5 \}$ located on a path graph. There are four items: $e_1 = (1,2)$, $e_2 = (2,3)$, $e_3 = (3,4)$ and $e_4 = (4,5)$.
    For every agent, the values of her relevant items are: $v_1(e_1) = 1$ and $v_2(e_1) = 1, v_2(e_2) = \epsilon^2$ and $v_3(e_2) = 1,v_3(e_3) = 1-\epsilon$ and $v_4(e_3) = \epsilon,v_4(e_4) = 1$ and $v_5(e_4) = 1$ where $\epsilon>0$ is arbirtarily small.
    The ``Bounded-Subsidy Algorithm'' executes one round and returns a matching with total weight $3+\epsilon$.
    There are two possible maximum weight matchings $\mu_1$ and $\mu_2$; (i) $\mu_1$ with $\mu_1 (1) = e_1$, $\mu_1(2) = \emptyset$, $\mu_1(3) = e_2$, $\mu_1(4) = e_3$ and $\mu_1(5) = e_4$, 
    and (ii) $\mu_2$ with $\mu_2(1) = \emptyset$, $ \mu_2(2) = e_1$, $\mu_2(3) = e_2$, $\mu_2(4) = e_3$ and $\mu_2(5) = e_4$.
    For computing the minimum subsidy, let us look at the envy-graph $G_{\mu_1}$ corresponding to allocation $ \{ \mu_1(i) \}$ where the maximum weight of path beginning at each vertex $i$ in the envy graph lower bounds the payment to agent $i$.
    Accordingly, it is not hard to verify that $p_2 \geq 1-2\epsilon + \epsilon^2$, $p_3 \geq 1-2\epsilon$ and $p_4 \geq 1-\epsilon$, and thus the total subsidy required is at least $3 > \frac{5}{2}$ when $\epsilon \rightarrow 0$.
    Similarly, for matching $\mu_2$, we have $p_1 \geq 1$, $p_3 \geq 1-2\epsilon$ and $p_4 \geq 1-\epsilon$, and thus the total subsidy required is at least $3 > \frac{5}{2}$ when $\epsilon \rightarrow 0$. 
\end{example}

\subsection{Missing Algorithms in Section \ref{sec::n/2}}
\begin{algorithm}[ht!]
	\caption{$\RR(i,j,S)$}
	\label{alg:rr}
	\begin{algorithmic}[1]
		\REQUIRE Agents $i,j$ and bundle $S$.
		\ENSURE A 2-partition $\mathbf{B}=(B_i,B_j)$ of $S$.
            \STATE $t \gets i$. \COMMENT{Round Robin starts with agent $i$}
            \WHILE{$S \neq \emptyset$}
                \STATE $e^*\gets \arg\max_{e\in S} v_t(e)$, tie breaks arbitrarily;
                \STATE $B_t \gets B_t \cup \{e^*\}$;
                \STATE $S \gets S \setminus \{e^*\}$;
                \STATE $t \gets \{i,j\} \setminus \{t\}$; \COMMENT{Switch turns}
            \ENDWHILE
	\end{algorithmic}
    \end{algorithm}

    \begin{algorithm}[ht!]
	\caption{ $\MAXU(i,j,S)$ }
	\label{alg:greedy}
	\begin{algorithmic}[1]
		\REQUIRE Agents $i,j$ and bundle $S$.
		\ENSURE A 2-partition $\mathbf{B}=(B_i,B_j)$ of $S$.
            \FOR{$e \in S$}
                \STATE $t \gets \arg\max_{t\in \{i,j\}}v_t(e)$. If there is a tie, set $t\gets i$;
                \STATE $B_t \gets B_t \cup \{e\}$;
            \ENDFOR
	\end{algorithmic}
    \end{algorithm}

\subsection{Missing Proofs from Section \ref{sec::n/2}}

\subsubsection{Proof of Proposition \ref{prop::rr}}

\begin{proof}[Proof of Proposition \ref{prop::rr}]
        For the first part of the statement, since agent $j$ picks first, she does not envy agent $i$, that is, $v_j(B_j)\geq v_j(B_i)$.
        For agent $i$, as $\RR$ presents an EF1 allocation and the maximum marginal value of an item is at most 1,
        we have $v_i(B_j)-v_i(B_i)\leq 1$.

        For the second part of the statement, if allocating $B_j$ (resp. $B_i$) to $j$ (resp. $i$) is not locally envy-freeable, then $v_i(B_j)+v_j(B_i) > v_i(B_i)+v_j(B_j)$, implying
        $
        v_j(B_j)-v_j(B_i)<v_i(B_j)-v_i(B_i)\leq 1.
        $
    \end{proof}

\subsubsection{Proof of Proposition \ref{prop::greedy}}
    \begin{proof}[Proof of Proposition \ref{prop::greedy}]
        By the allocation rule of $\MAXU$ and the fact that ties are broken towards agent $j$, item $e$ must be allocated to agent $j$ and hence $v_j(B_j)\geq 1$.
    \end{proof}

\subsubsection{Proof of Lemma \ref{lem::t-sum-w-bound}}
    \begin{proof}[Proof of Lemma \ref{lem::t-sum-w-bound}]
        According to Line \ref{step:mainAlg-Sub2-for-f} of Algorithm \ref{alg:multigraph-additive}, $\STWO$ starts from agents $f,g$, between whom there are two arcs in $G'$.
    Suppose $f$ envies $g$ when restricting to the partial allocation of $E_{f,g}$, then Algorithm \ref{alg:multigraph-additive} implement $\STWO$ in the out-tree with root $f$, from $f$ to the leaves.
    We will first prove that for agent $f$, the lemma statement holds, i.e., $t_f+\sum_{j\in \Ch(f)}w_j \leq w_f$, which will be used as the base case for the induction proof for Lemma \ref{lem::t-sum-w-bound}.

    Now let us prove $w_f\leq 1$. When computing $w_f$, $t_g=0$ then by Line \ref{step:sub2-defining-w} of $\STWO$, we have $w_f=(v_f(A^f_g)-v_f(A^g_f))^+$, which is at most 1 as $A^f_g,A^g_f$ are determined by $\RR$ and by Proposition \ref{prop::rr}. Therefore, $w_f\leq 1$ holds.
    We then consider possible cases for $Q_i$'s.
    Let $Q_i$'s and $Q'_1$ denote the corresponding sets at the end of $\STWO(f)$.
    
    \medskip
    \emph{Case 1: $Q_1\neq \emptyset$.} Let $\ell$ be the unique agent in $Q_1$. For this case, we will prove (i) $t_f=0$, (ii) $w_{\ell} \leq w_f$, and (iii) for any $j \in \Ch(f), j \neq \ell$, $w_j = 0$.
    It is not hard to verify that (i), (ii) and (iii) together imply $t_f+\sum_{j\in \Ch(f)}w_j \leq w_f$.

    For property (i), it suffices to prove $v_f(A^f_k)+t_k-v_f(A_f) \leq 0$. Recall $k$ is defined in Line \ref{step:sub2-defining-w} of Algorithm $\STWO$ and never gets updated when executing $\STWO(f)$. 
    First we show $v_f(A^{\ell}_f) \geq 1$.
    As arc $(f,\ell)$ exists in $G'$, there exists an item $e'\in E_{f,\ell}$ such that $v_{\ell}(e')\geq 1$.
    Due to the condition of Line \ref{step:sub2-Q1-for-conditon} in $\STWO$, we have $v_{\ell}(S^{\ell}_f) \geq v_{\ell}(S^f_{\ell})$, equivalent to $v_{\ell}(A^{\ell}_f) \geq v_{\ell}(A^f_{\ell})$ as $A^{\ell}_f=S^{\ell}_f$ and $A^f_{\ell} = S^f_{\ell}$.
    Since $v_{\ell}(e')\geq 1$, we have $v_{\ell}(A^{\ell}_f) = \max(v_{\ell}(A^{\ell}_f), v_{\ell}(A^f_{\ell})) \geq 1$.
    By $\MAXU$, we have $v_f(A^{\ell}_f) \geq v_{\ell}(A^f_{\ell})) \geq 1$. We split the remaining proof for property (i) by considering the possibilities of $k$.

    If $k=g$, we have
    \begin{align*}
    t_f &=\left(v_f(A^f_g) + t_g - v_f(A_f)\right)^+ \\
    &\leq \left( v_f(A^f_g) + t_g - v_f(A^g_f) - v_f(A^{\ell}_f) \right)^+ \\
    &\leq \left( w_f- v_f(A^{\ell}_f) \right)^+ = 0.
    \end{align*}
    The first inequality transition is due to $A^g_f\cup A^{\ell}_f \subseteq A_f$; note $\ell\in Q_1$ can never be $k$. The second inequality transition follows from the definition of $w_f$ and the last equality transition is due to $w_f\leq 1$ and $v_f(A^{\ell}_f)\geq 1$.
    
    If $k\neq g$, we have 
    \begin{align*}
        t_f=\left(v_f(A^f_k) + t_k - v_f(A_f) \right)^+ &\leq \left( v_f(A^f_k) - v_f(A^k_f) - v_f(A^{\ell}_f) \right)^+ \\
        &\leq \left( 1-v_f(A^{\ell}_f) \right)^+ = 0.
    \end{align*}
    The first inequality transition is due to $A^k_f\cup A^{\ell}_f \subseteq A_f$ and the fact that $t_k=0$ at the moment when computing $t_f$; note at that moment, for every $j$ the direct successor of $i$, $t_j=0$ holds. The second inequality arises from the fact that $\ai{f}{\ell}, \ao{f}{\ell}$ are determined by $\RR$ and from Property \ref{prop::rr}.

    For property (ii), since $\ell \in Q_1$, by the condition of Line \ref{step:sub2-Q1-for-conditon} in $\STWO$, we have $v_{\ell}(A^{\ell}_f)-v_{\ell}(A^{f}_{\ell}) \leq w_f$; recall that $S^{\ell}_f=A^{\ell}_f$ and $S^{f}_{\ell}=A^{f}_{\ell}$.
    Then by $t_f=0$, we have $w_i \geq v_{\ell}(A^{\ell}_f) +t_f -v_{\ell}(A^{f}_{\ell}) \geq w_{\ell}$ where the second inequality transition follows from the definition of $w_{\ell}$.
    Last for property (iii), for $j\in Q'_1\cup Q_2 \cup Q_5$, $A^f_j=T^f_j$ is determined by $\RR(j,f,E_{f,j})$ where $j$ picks first. Thus for these $j$, we have $v_j(A^f_j) \geq v_j(A^j_f)$, together $t_f=0$ implying $w_j=0$.
    For $j\in Q_3$, $A^f_j=S^f_j$ and $j$ does not satisfy the condition of Lines \ref{step:sub2-Q1-for-conditon} or \ref{step:sub2-Q2-for-conditon} in $\STWO$, and again, $v_j(A^f_j) \geq v_j(A^j_f)$ (and hence $w_j=0$).
    Next, we claim that $Q_4=\emptyset$ in Case 1. We now consider the moment when executing Line \ref{step:sub2-defining-S}. For any $j\notin R$, we have
    \begin{align*}
        v_f(A_f)+v_j(A^f_j) &\geq v_f(A^{\ell}_f)+v_f(A^j_f)+v_j(A^f_j) \\
        &\geq 1+ v_f(A^f_j)-1+v_j(A^j_f) = v_f(A^f_j)+v_j(A^j_f),
    \end{align*}
    where the first inequality transition follows from $A^j_f\cup A^{\ell}_f\subseteq A_f$ and the second inequality transition is due to $v_f(A^{\ell}_f)\geq 1$; $v_f(A^j_f)-v_f(A^f_j)\leq 1$ by Proposition \ref{prop::rr};
    $v_j(A^f_j) \geq v_j(A^j_f)$ as $A^f_j=T^f_j$ at that moment.
    Therefore, $S=\emptyset, Q_4=\emptyset$, and therefore, property (iii) holds. Up to now, we show that if Case 1 occurs, then $t_f+\sum_{j\in \Ch(f)}w_j \leq w_f$.

    \medskip
    \emph{Case 2: $Q_1=\emptyset$ and $Q_2\neq \emptyset$}. For this case, $Q'_1=\emptyset$. Fix $\ell \in Q_2$ and we will prove (i) $t_f=0$, and (ii) for any $j \in \Ch(f), w_j \leq 0$, together implying $t_f+\sum_{j\in \Ch(f)}w_j \leq w_f$.
    
    For property (i), by the condition of Line \ref{step:sub2-Q2-for-conditon}, we have $w_f<v_{\ell}(S^{\ell}_f)-v_{\ell}(S^f_{\ell})$. 
    Since arc $(f,\ell)$ exists in $G'$, there exists $e'\in E_{f,\ell}$ such that $v_{\ell}(e')=1$.
    By $\MAXU$, $v_{\ell}(S^f_{\ell})\geq 1$, and thus, $v_{\ell}(S^{\ell}_f) >w_f+1$. Moreover, the value of $S^{\ell}_f$ for agent $f$ is higher than that of agent $j$, i.e., $v_f(S^{\ell}_f)>v_{\ell}(S^{\ell}_f)$, and hence $v_f(S^{\ell}_f)>w_f+1$. Note that $A^{\ell}_f=T^{\ell}_f$ and $A^f_{\ell}=T^f_{\ell}$ and bundles $T^{\ell}_f,T^f_{\ell}$ are determined by $\RR({\ell},f,E_{f,{\ell}})$.
    Since $e'\notin S^{\ell}_f$, one can verify that (agent $f$ picks first) $\RR(f,{\ell}, E_{f,{\ell}}\setminus\{e'\}) = (T^{\ell}_f, T^f_{\ell}\setminus\{e'\} )$\footnote{We can let agents pick the same set of items as that chosen in $\RR(\ell,f,E_{f,\ell})$ when there are multiple items having the same value.}, 
    and moreover, $v_f(T^{\ell}_f)\geq \frac{v_f(S^{\ell}_f)}{2}$ as $S^{\ell}_f\subseteq E_{f,\ell}\setminus\{e'\} $, together $T^{\ell}_f=A^{\ell}_f$ implying $v_f(A^{\ell}_f)\geq \frac{w_f+1}{2}$. Given $w_f \leq 1$, we have $v_f(A^{\ell}_f) \geq w_f$. In the following, we show $t_f=0$.

    If $k=g$, we have
    \begin{align*}
    t_f&=\left(v_f(A^f_g) + t_g - v_f(A_f)\right)^+ \\
    &\leq \left( v_f(A^f_g) + t_g - v_f(A^g_f) - v_f(A^{\ell}_f) \right)^+ 
    \leq \left( w_f- v_f(A^{\ell}_f) \right)^+ = 0.
    \end{align*}
    The first inequality transition is due to $A^g_f\cup A^{\ell}_f \subseteq A_f$; note that $j\in Q_2$ cannot be $k$. The second inequality transition follows from the definition of $w_f$ and the last equality transition is due to $v_f(A^{\ell}_f) \geq w_f$.

    If $k\neq g$, it holds that $v_f(A^{\ell}_f)+v_f(A^g_f)\geq w_f+v_f(A^g_f)\geq 1$ due to Proposition \ref{prop::bound}.
    Then we have
    \begin{align*}
        t_f&=\left(v_f(A^f_k) + t_k - v_f(A_f) \right)^+ \\
        &\leq \left( v_f(A^f_k) - v_f(A^k_f) - v_f(A^{\ell}_f) - v_f(A^g_f) \right)^+ 
    = 0.
    \end{align*}
    The first inequality transition is due to $A^k_f\cup A^{\ell}_f\cup A^g_f \subseteq A_f$ and the fact that $t_k=0$ at the moment when computing $t_f$.
    For the second inequality transition, by Proposition \ref{prop::rr}, $v_f(A^f_k)-v_f(A^k_f)\leq 1$ and $v_f(A^{\ell}_f)+v_f(A^g_f)\geq 1$ hold.
    Therefore, $t_f=0$ in Case 2.

    For property (ii), similar to the argument for \emph{Case 1}, one can verify that for $j\in Q_2\cup Q_3 \cup Q_5$, we have $A^f_j=T^f_j, A^j_f=T^j_f$ and thus $v_j(A^f_j) \geq v_j(A^j_f)$, together $w_f=0$ implying $w_j=0$.
    Last we prove $Q_4=\emptyset$ in Case 2.
    We now consider the moment when executing Line \ref{step:sub2-defining-S}. For any $j \in \Ch(f)\setminus R$, we have $j\neq \ell,g$ and
    $$
    \begin{aligned}
    v_f(A_f)+v_j(A^f_j)&\geq v_f(A^{\ell}_f) + v_f(A^g_f)+v_f(A^j_f)+v_j(A^f_j) \\
    & \geq 1+ v_f(A^f_j)-1+v_j(A^j_f) \\
    & = v_f(A^f_j)+v_j(A^j_f),
    \end{aligned}
    $$
    where the first inequality transition follows from $A^j_f\cup A^{\ell}_f\cup A^g_f\subseteq A_f$ and the second inequality transition is due to $v_f(A^{\ell}_f)+v_f(A^g_f)\geq 1$; $v_f(A^j_f)-v_f(A^f_j)\leq 1$ by Proposition \ref{prop::rr};
    $v_j(A^f_j) \geq v_j(A^j_f)$ as $A^f_j=T^f_j$ at that moment.
    Therefore, $S=\emptyset, Q_4=\emptyset$, and therefore, property (ii) is valid. Up to now, we show that if Case 2 occurs, then $t_f+\sum_{j\in \Ch(f)}w_j \leq w_f$.


    \medskip
    \emph{Case 3: $Q_1\cup Q_2=\emptyset$ and $t_f=0$}. Again $Q_1=\emptyset$ implies $Q'_1=\emptyset$. 
    For any $j\in Q_3\cup Q_5$, we have $v_j(A^f_j) \geq v_j(A^j_f)$, together $t_f=0$ implying $w_j=0$. Then if $Q_4=\emptyset$, we have $t_f+\sum_{j\in \Ch(f)} w_j = 0 \leq w_f$.

    For the case where $Q_4\neq \emptyset$, let $\ell$ be the unique agent in $Q_4$. 
    Suppose that agent $f$ receives bundle $A'_f$ right before Line \ref{step:sub2-defining-S} 
    of $\STWO$. Then $A'_f=A_f \cup T^{\ell}_{f} \setminus T^{f}_{\ell} $. By the condition of Line \ref{step:sub2-defining-S}, we have
    $$
    v_f(A'_f) + v_{\ell}(T^f_{\ell}) < v_f(T^f_{\ell}) + v_{\ell}(T^{\ell}_f).
    $$
    Since $\ell$ and $f$ swap their items from $E_{f,\ell}$, we have $A^f_{\ell}=T^{\ell}_f$ and $A^{\ell}_f=T^f_{\ell}$. From the above inequality, we derive
    $$
    \begin{aligned}
    v_{\ell}(A^{\ell}_f) - v_{\ell}(A^f_{\ell}) & < v_{f}(A^{\ell}_f) - \left( v_f(A_f) + v_f(A^f_{\ell}) - v_f(A^{\ell}_f) \right) \\
    & \leq   v_f(A_f^{\ell}) - v_f(A^f_{\ell}) -  v_f(A^g_f) \\
    & \leq 1 - v_f(A^g_f)\leq w_f,
    \end{aligned}
    $$
    where the second inequality transition follows from $A^{\ell}_f\cup A^g_f \subseteq A_f$; the third inequality transition is due to Proposition \ref{prop::rr};
    the last inequality transition follows from Proposition \ref{prop::bound}.
    Then as $t_f=0$, it holds that $w_{\ell} = (v_{\ell}(A^{\ell}_f) - v_{\ell}(A^f_{\ell}) )^+ \leq w_f$. Thus, $t_f+\sum_{j\in \Ch(f)} w_j \leq w_f$ holds for Case 3.

    \medskip
    \emph{Case 4:} $Q_1\cup Q_2=\emptyset$ and $t_f>0$. Also, $Q'_1=\emptyset$. For this case, $t_f=t_k+v_f(A^f_k)-v_f(A_f)$ where $k$ is determined in Line \ref{step:sub2-defining-k} of $\STWO(f)$ and $t_k$ is the one right after the termination of $\STWO(f)$; note that $t_k$ can be updated in later subroutines and if $k\neq g$, $t_k=0$.
    Next, we present upper bounds of $w_j$ based on the classification of $j\in \Ch(f)$ and remark that $w_j$ is computed in $\STWO(j)$, rather than $\STWO(f)$.

    For $j\in Q_3$, we will show $w_j\leq 0$.
    Since $A^j_f=S^j_f$ and $A^f_j=S^f_j$, we have $v_f(A^j_f) \geq v_j(A^j_f)$ and $v_j(A^f_j) \geq v_f(A^f_j)$. Moreover $v_j(A^f_j) \geq 1$ as there exists $e'\in S^f_j$ such that $v_j(e')=1$; which is because arc $(f,j)$ exists in $G'$.
    Then by substituting $t_f$,  we have
    $$
    \begin{aligned}
        w_j& =\left( t_k+v_f(A^f_k)-v_f(A_f)+v_j(A^j_f)-v_j(A^f_j) \right)^+ \\
        & \leq \left( t_k+v_f(A^f_k)-v_f(A_f)+v_f(A^j_f)-v_j(A^f_j) \right)^+.
    \end{aligned}
    $$
    Then if $k=g$, since $A^j_f\cup A^g_j \subseteq A_f$, we have
    $$
    w_j \leq \left( t_g+v_f(A^f_g)-v_f(A^g_f) -v_j(A^f_j) \right)^+ \leq \left( w_f -1 \right)^+ = 0.
    $$
    If $k\neq g$, then $t_k=0$ when computing $w_j$ and thus we have
    \begin{align*}
        w_j &\leq \left( v_f(A^f_k)-v_f(A_f)+v_f(A^j_f)-v_j(A^f_j) \right)^+ \\
        & \leq \left( v_f(A^f_k)-v_f(A^k_f) - 1 \right)^+ = 0,
    \end{align*}
    where the second inequality transition is due to $A^k_f\cup A^j_f \subseteq A_f$; note $j\neq k$.

    For $j\in Q_4$, recall, from the proof for Case 3, that $$
     v_{j}(A^{j}_f) - v_{j}(A^f_{j})  < v_{f}(A^{j}_f) - \left( v_f(A_f) + v_f(A^f_{j}) - v_f(A^{j}_f) \right),
     $$
    Then we have
    $$
    \begin{aligned}
    &w_j=\left( t_k+v_f(A^f_k)-v_f(A_f)+v_j(A^j_f)-v_j(A^f_j) \right)^+ \\
    & \leq \left(  t_k+v_f(A^f_k) - v_f(A_f) + v_{f}(A^{j}_f) - \left( v_f(A_f) + v_f(A^f_{j}) - v_f(A^{j}_f) \right) \right)^+\\
    & \leq \left(  t_k+v_f(A^f_k) - v_f(A^f_j)\right)^+,
    \end{aligned}
    $$
    deriving the desired upper bound.

    For $j\in Q_5$, we have $A^j_f=T^j_f$ and $A^f_j=T^f_j$. Suppose that agent $f$ receives $A'_f$ right before Line \ref{step:sub2-defining-S} of $\STWO(f)$.
    At that moment, if $j$ does not meets the condition in Line \ref{step:sub2-defining-S}, we have $v_f(A^f_j)+v_j(A^j_f)\leq v_f(A'_f)+v_j(A^f_j)$, equivalent to $v_f(A^f_j)-v_f(A'_f) \leq v_j(A^f_j) - v_j(A^j_f)$.
    It is not hard to verify that $v_f(A_f)\geq v_f(A'_f)$ due to the definition of envy-freeability (if swapping happens).
    Accordingly, we have $v_f(A^f_j)-v_f(A_f) \leq v_j(A^f_j) - v_j(A^j_f)$.
    On the other hand, if when executing Line \ref{step:sub2-defining-S}, $j$ does meets the condition in Line \ref{step:sub2-defining-S}, 
    $v_f(A^f_j)-v_f(A_f) \leq v_j(A^f_j) - v_j(A^j_f)$ still holds: since there exists an agent $\ell \in Q_4$ such that $v_f(T_\ell^f) \geq v_f(T_j^f) = v_f(A_j^f)$ and $T_\ell^f \subseteq A_f$, the left hand side is negative while the right hand side is non-negative.
    Thus, we have
    $$
    \begin{aligned}
    w_j&=\left( t_k+v_f(A^f_k)-v_f(A_f)+v_j(A^j_f)-v_j(A^f_j) \right)^+  \\
    & \leq \left( t_k+v_f(A^f_k)-v_f(A_f) + v_f(A_f)-v_f(A^f_j)\right)^+ \\
    & = \left( t_k+v_f(A^f_k)-v_f(A^f_j)\right)^+,
    \end{aligned}
    $$
    deriving the desired bound.

    For $j=k$, we can have $j\neq g$ as $g\notin \Ch(f)$ holds. Also $t_j=0$ at the moment when computing $t_f$ in $\STWO(f)$. Moreover, $j\notin Q_4$ as otherwise, $t_f=0$, contradicting the condition of Case 4. 
    Then it must hold that $j\in Q_5$ and hence $v_f(A_f)+v_j(A^f_j) \geq v_f(A^f_j)+v_j(A^j_f)$, implying
    $$
    \begin{aligned}
    w_j&=\left( t_j+v_f(A^f_j)-v_f(A_f)+v_j(A^j_f)-v_j(A^f_j) \right)^+=0.
    \end{aligned}
    $$
    Up to here, we present the upper bound of $w_j$ for each possibility of $j\in \Ch(f)$.

    Next we proceed to bound $t_f+\sum_{j\in \Ch(f)}w_j$ for Case 4. For simple notations, let $B=\{j \in Q_4\cup Q_5 \setminus\{k\} \mid w_j >0 \}$. Then we have
    \begin{align*}
         t_f+\sum_{j\in \Ch(f)}w_j & = t_k + \ao{f}{k} - \ax{f} + \sum_{j\in B}\left(t_k + \ao{f}{k} - \ao{f}{j} \right)^+\\
        & = t_k + \ao{f}{k} - \ax{f} + \sum_{j\in B} \ai{f}{j}  + \sum_{j\in B}\left(t_k + \ao{f}{k} - \ao{f}{j} - \ai{f}{j}\right)\\
        & = t_k + \ao{f}{k} - \ax{f} + \sum_{j\in B} \ai{f}{j}  + \sum_{j\in B}\left(t_k + \ao{f}{k} - v_i(\E{f}{j})\right)\\
        & \leq t_k + \ao{f}{k} - \ax{f} + \sum_{j\in B} \ai{f}{j},
    \end{align*}
    where the inequality transition is because each agent in $Q_4 \cup Q_5$ does not satisfy the condition in Line \ref{step:sub2-defining-k}. 
    Accordingly, if $k=g$, we have
    $$
    \begin{aligned}
    t_f+\sum_{j\in \Ch(f)}w_j & \leq t_g + \ao{f}{g} - \ax{f} + \sum_{j\in B} \ai{f}{j} \\
    & \leq t_g+v_f(A^f_g) - v_f(A^g_f) \leq w_f.
    \end{aligned}
    $$
    If $k\neq g$, we have
     $$
    \begin{aligned}
    t_f+\sum_{j\in \Ch(f)}w_j & \leq t_k + \ao{f}{k} - \ax{f} + \sum_{j\in B} \ai{f}{j} \\
    & \leq t_k+v_f(A^f_k) - v_f(A^k_f) - v_f(A^g_f) \\
    & \leq 1-v_f(A^g_f) \leq w_f,
    \end{aligned}
    $$
    where the last inequality transition is due to Propositions \ref{prop::rr} and \ref{prop::bound}.
    Up to here, we complete the base case proof for Lemma \ref{lem::t-sum-w-bound}.

    Then we perform induction in the breadth-first search fashion. Suppose for every $p$ at depth $d$ of the rooted tree starting from $f$, we have $t_p+\sum_{j\in \Ch(p)}w_j \leq w_p$. Since $w_f\leq 1$, we have $w_p\leq 1$, implying $w_j\leq 1$ for all $j\in \Ch(p)$.
    Then using the same argument as that for $f$, we can prove that for every $p$ at depth $d+1$ of the rooted tree starting from $f$, we also have $t_p+\sum_{j\in \Ch(p)}w_j \leq w_p$, completing the proof of Lemma \ref{lem::t-sum-w-bound}.
    \end{proof}

\subsubsection{Proof of Lemma \ref{lem::efable-phase1}}
    \begin{proof}[Proof of Lemma \ref{lem::efable-phase1}]
    In this proof, for any $i,j$, we refer $A_i$ and $A^j_i$ to the corresponding bundle at the moment right before Phase 2 (Line \ref{step:mainAlg-start-Phase2}). 
    Let us focus on some vertex $i$ in component $C$ of $G'$, and let $s$ be $i$'s direct predecessor and $j$ be $i$'s direct successor if any. We will prove that $t_i+v_i(A_i)\geq t_p+v_i(A^i_p)$ for $p=s,j$. 
    The proof is split by discussing the possibility of the component $C$. 
    \medskip
    
    \emph{Case 1:} $C$ has a cycle of length 3 or more. In this case, $t_i,t_s,t_j = 0$. First consider $E_{i,s}$ allocated by $\RR$ with $i$ picking first, and thus, $v_i(A^s_i) \geq v_i(A^i_s)$, with $t_i=t_s=0$ implying the desired inequality. Moreover, as arc $(s,i)$ exists in $G'$, there exists $e'\in E_{i,s}$ such that $v_i(e')=1$, and hence, $v_i(A^s_i)\geq 1$.
    For $E_{i,j}$, Proposition \ref{prop::rr} gives that $\dif{i}{j}\leq 1$, implying
                $$
                    \ax{i} + t_i - \ao{i}{j} - t_j \geq \ai{i}{j} +\ai{i}{s} - \ao{i}{j} \geq 0.
                $$
    \medskip
    
    \emph{Case 2: $C$ has 2-cycle and $i$ is handled by $\SONE$.} Similarly, $t_i,t_j = 0$ as the allocation for $i,j$ and their successors are determined by $\SONE$. Then by arguments similar to that in Case 1, one can derive the desired inequality for $i,j$.
    As for $i,s$, if the allocation for $s$ and their successor is determined by $\SONE$, then one can derive the desired inequality for $i,s$ by arguments similar to that of Case 1.
    
    We then prove for the case where the allocation for $s$ and their successor is determined by $\STWO$, and hence, $s$ and $i$ forms the unique cycle with length 2. Note that $t_s\leq w_s=(t_i+v_s(A^s_i)-v_s(A^i_s))^+$ and the algorithm ensures that $\ndif{i}{s} \geq \dif{s}{i} $ (i.e., local envy-freeability), we have 
          $$
                    \ax{i} - \ao{i}{s} - t_s \geq \ai{i}{s} - \ao{i}{s} + (\dif{i}{s})^+ = 0,
     $$
     completing the proof for Case 2.

     \medskip

     \emph{Case 3: $C$ has a 2-cycle and $i$ is handled by $\STWO$.}
     In the proof of this case, let $Q_i$'s and $k$ be those defined when executing $\STWO(i)$ and let $t'_p$'s be the $t_p$ at the beginning of $\STWO(i)$.
     First consider $i,s$. If $s=k$ or {agent $k$ receives $T^i_k$}, then $t_i \geq t'_k + v_i(A^i_k) - v_i(A_i) \geq t_s + v_i(A^i_s) - v_i(A_i)$ holds; 
     Thus $\ax{i} + t_i \geq \ao{i}{s}+ t_s$.
     {If $s\neq k$ and agent $k$ receives $T^k_i$}, we have $T^i_k\subseteq A_i$, and accordingly,
     $v_i(A_i)+t_i\geq v_i(T^i_k) + t'_k > v_i(A^i_s) + t_s $, the desired inequality.
     
     Next for $j$, we consider several situations. 

     If $j\in Q_1\cup Q'_1$, we have $Q_1\neq \emptyset$ in this case and thus according to the proof of Case 1 of Lemma \ref{lem::t-sum-w-bound}, $t_i=0$ and hence $t_j=0$. 
     When executing $\STWO(i)$, 
     $j$ meets the condition of Line \ref{step:sub2-checking-rr-k}, i.e., $v_i(E_{i,j})\leq v_i(T^i_k)+t'_k$. 
     If $k\notin Q_4$, {then $A^i_k=T^i_k$; note $T^i_s$ is defined to be $A^i_s$ in Line \ref{step:sub2-defining-k}.} We have
     $$
        v_i(A_i)+t_i = v_i(A_i)+\left( v_i(T^i_k)+t'_k-v_i(A_i)\right)^+ \geq v_i(E_{i,j}) \geq v_i(A^i_j)+t_j.
     $$
     If $k\in Q_4$ (and hence $k\neq s$), then {$T^i_k\subseteq A_i$} based on Line \ref{step:sub2-Q4-swap}. Hence, we have
     $$
        v_i(A_i)+t_i \geq v_i(T^i_k) + t'_k \geq v_i(E_{i,j}) \geq v_i(A^i_j)+t_j,
     $$
     where the first inequality is due to $t'_k=0$ and $T^i_k\subseteq A_i$.

     If $j\in Q_3$, 
     similarly when executing $\STWO(i)$, 
     $j$ meets the condition of Line \ref{step:sub2-checking-rr-k}, i.e., $v_i(E_{i,j})\leq v_i(T^i_k)+t'_k$. 
     If $t_j=0$, then the proof is identical to the proof for the case of $j\in Q_1\cup Q'_1$ and we omit it.
     For $t_j>0$, we have
     $w_j>0$ and
     $$
     \begin{aligned}
       & v_i(A_i)+t_i -v_i(A^i_j) - t_j \\
       & \geq v_i(A_i)+t_i- v_i(A^i_j) - t_i -v_j(A^j_i) + v_j(A^i_j) \\ 
       &\geq v_i(A_i)-v_i(A^i_j)-v_i(A^j_i)+v_j(A_j^i) \geq 0,
     \end{aligned}
     $$
     where the last two inequality transitions follows from that $E_{i,j}$ is allocated by $\MAXU$, i.e.,  $v_j(A^i_j) \geq v_i(A^i_j)$ and $v_i(A^j_i)\geq v_j(A^j_i)$.

     If $j\in Q_2$, then $E_{i,j}$ is allocated by $\RR$ and $j$ picks first. As we have proved in Case 2 of the proof of Lemma \ref{lem::t-sum-w-bound}, $Q_2\neq \emptyset$ implies $w_i=0$ and hence $t_j=0$ by Lemma \ref{lem::t-sum-w-bound}.
     {Since $j\in Q_2$ implies that $j$ meets the condition of Line \ref{step:sub2-checking-rr-k}, i.e., $v_i(E_{i,j})\leq v_i(T^i_k)+t'_k$, and therefore, the remaining proof of this case is identical that the proof of case $j\in Q_1\cup Q'_1$ with $t_j=0$. We omit it.}

     If $j\in Q_4$, by Line \ref{step:sub2-Q4-swap}, we have $v_i(A^j_i)\geq v_i(A^i_j)$ (note $A^j_i, A^i_j$ are bundles after swap). If $t_j=0$, we have
     $
        v_i(A_i)+t_i\geq  v_i(A^j_i)\geq v_i(A^i_j) +t_j
     $.
     If $t_j>0$, then $w_j>0$ by Proposition \ref{prop::bound} we have $t_j\leq w_j=t_i+v_j(A^j_i) -v_j(A^i_j) $. Then the following holds,
     $$
        v_i(A_i)+t_i-v_i(A^i_j)-t_j \geq  v_i(A_i)+t_i - v_i(A^i_j)- t_i -v_j(A^j_i) +v_j(A^i_j) \geq 0,
     $$
        where the last inequality transition is guaranteed by the algorithm after swapping in Line \ref{step:sub2-Q4-swap}.

        If $j\in Q_5$, we first consider the case of $t_j=0$. It is not hard to verify that no matter whether $A^i_k=T^i_k$ or not, we have
        $v_i(A_i)+t_i\geq v_i(T^i_k) + t'_k \geq v_i(T^i_j) = v_i(A^i_j) + t_j$.
        {If $t_j>0$ (and hence $w_j>0$), the proof is identical to the proof of the case $j\in Q_4$ with $t_j>0$. We omit it.}

        Up to here, we complete the proof.
    \end{proof}

\subsubsection{Proof of Lemma \ref{lem::final-efable}}

\begin{proof}[Proof of Lemma \ref{lem::final-efable}]
        According to Phase 2 of Algorithm \ref{alg:multigraph-additive}, $p_j\leq t_j$ for all $j\in N$.
        We first prove the envy-freeness between agents $i,j$ not adjacent in $G'$. 
        Since the partial allocation of $E_{i,j}$ is locally envy-freeable, there must be an agent not envying the other when restricting to the allocation of $E_{i,j}$.
        Without loss of generality, let agent $j$ be such an agent.
        
        {It is worthwhile to mention that the payment of an agent can be decreased more than once. We first prove that if the payment for agent $j$ decreases when executing Phase 2 for $E_{i,j}$, then the sum of agent $j$'s value and the payment does not get decreased.}
        Moreover, let $p'_j$ and $p''_j$ be respectively the payment for $j$ right before and right after decreasing the payment for $j$. Let $p'_i$ be the payment for agent $i$ when executing Phase 2 for $E_{i,j}$. Then our task is to prove $v_j(A^i_j)\geq p'_j - p''_j$ and we will prove it by induction. Let $B_j$ and $B_i$ be respectively the bundle of agent $j$ and $i$ right after Line \ref{step:mainAlg-last-adjust-p}.

        For the base case where $p'_j=t_j, p'_i=t_i$, we have
        $$
        \begin{aligned}
        p'_j-p''_j &\leq p'_j-v_i(B_i)-p'_i+v_i(A^i_j) \\
        & \leq 1-v_i(B_i)+v_i(A^j_i)-p'_i+v_i(A^i_j)-v_i(A^j_i) \\
        &= 1- (v_i(B_i)-v_i(A^j_i) + p'_i) + v_i(A^i_j)-v_i(A^j_i)\\
        & \leq v_j(A^i_j)-v_j(A^j_i)\leq v_j(A^i_j).
        \end{aligned}
        $$
        The second inequality is due to $p'_j \leq t_j \leq 1$. The third inequality transition follows from (i) before receiving $A^j_i$, the sum of agent $i$'s value and payment is at least 1; note for based case, this is true as $v_i(B_i\setminus A^j_i)+t_i\geq 1$; for other cases, this is also true due to the induction assumption;
        and (ii) $v_i(A^j_i)+v_j(A^i_j)\geq v_i(A^i_j)+v_j(A^j_i)$ ensured by local envy-freeability when restricting to $E_{i,j}$.
        Then one can make the assumption that for each of the first $p$ time when the payment of some agent is decreased, the sum of her value and the payment does not get decreased.
        Then the argument for the $p+1$ time is similar to the argument for the base case and we omit it. Up to here, we show that after decreasing the payment for agent $j$, the sum of her value and payment is at least the sum of value and payment before allocating $\E{i}{j}$.
        
        Next we prove that after allocating $E_{i,j}$ and adjusting the payment (if applicable), for agents $i,j$, no one envies the other.
        If $p'_j$ does not get decreased in this round, then according to Line \ref{step:mainAlg-last-if}, we have $v_i(A_i)+p'_i\geq v_i(A^i_j)+p'_j$ and hence agent $i$ does not envy $j$ in Line \ref{step:mainAlg-last-adjust-p}.
        For agent $j$, by the above argument on the monotonicity of the sum of value and payment,
        it is not hard to verify that $v_j(B_j\setminus A^i_j) + p'_j \geq 1$ due to Lemma \ref{lem::efable-phase1} and $v_j(e)=1$ for some $e\in M$.
        Accordingly, we have
        $$
        \begin{aligned}
            &v_j(B_j) + p''_j- v_j(A^j_i) - p'_i \\
            & \geq v_j(B_j)+p''_j-v_j(A^i_j) +  v_j(A^i_j) - v_j(A^j_i) - p'_i \\
            &\geq v_j(B_j) + p''_j - \ai{j}{i} - p'_i\\
            &\geq v_j(B_j\setminus A^i_j) + p''_j - 1 \geq 0,
        \end{aligned}
        $$
        where the second inequality is due to $p_i\leq t_i\leq 1$ and $v_j(A^i_j)\geq v_j(A^j_i)$ and the third inequality transition is due to the monotonicity that we just proved. Hence, agent $j$ does not envy $i$ in Line \ref{step:mainAlg-last-adjust-p}.
        Since for any agent, the sum of value and payment never decreases and the payment never increase in the later subroutine, we can claim that agent $i$ and $j$ do not envy each other in $\{\mathbf{A},\mathbf{p}\}$.

        If $p'_j > p'' _j$, according to Line \ref{step:mainAlg-last-adjust-p}, we have $p''_j=(v_i(B_i)+p'_i-v_i(A^i_j))^+$, implying
        $v_i(B_i) + p'_i \geq v_i(A^i_j) + p''_j$.
        Thus, agent $i$ is envy-free in Line \ref{step:mainAlg-last-adjust-p}. 
        As for agent $j$, again by Line \ref{step:mainAlg-last-adjust-p}, we have $p''_j\geq v_i(B_i)+p'_i-v_i(A^i_j) \geq v_i(A^j_i)+p'_i-v_i(A^i_j)$ since $A^j_i\subseteq B_i$.
       When restricting to $E_{i,j}$, the partial allocation is locally envy-freeable and hence $v_i(A^j_i) + v_j(A^i_j)\geq v_i(A^i_j) + v_j(A^j_i)$, equivalent to $v_i(A^j_i) - v_i(A^i_j) \geq v_j(A^j_i) - v_j(A^i_j)$. Accordingly, we have
       $$
        p''_j\geq v_j(A^j_i) - v_j(A^i_j) + p'_i \implies 
         v_j(A^i_j) + p'_j\geq v_j(A^j_i) + p'_i.
       $$
       Therefore, agent $j$ is also envy-free in Line \ref{step:mainAlg-last-adjust-p}.
       Since for any agent, the sum of value and payment never decreases and the payment never increase in the later subroutine, we can claim that agent $i$ and $j$ do not envy each other in $\{\mathbf{A},\mathbf{p}\}$.


       Last for any $i,j$ adjacent in $G'$, by Lemma \ref{lem::efable-phase1} and the facts that (i) the sum of value and payment for an agent never decrease, and (ii) the payment for an agent never increases,
       we can claim that agents $i,j$ do not envy each other with respect to $\{\mathbf{A}, \mathbf{p}\}$ or at the termination of Algorithm \ref{alg:multigraph-additive}. We complete the proof of the statement.    
    \end{proof}

\end{document}